\documentclass[reqno,oneside,10pt]{amsart}

\usepackage{tikz}
\usepackage{mathptmx}
\usepackage{xspace}
\usepackage{amsmath,amsfonts,amsthm,eucal}
\usepackage{setspace,geometry,multirow}
\geometry{inner=24mm, outer=24mm, top=24mm, bottom=24mm, head=10mm, foot=10mm}

\usepackage[pdfpagemode=UseNone, pdfstartview={XYZ null null null}]{hyperref}
\hypersetup{
colorlinks=true,
citecolor=red,
linkcolor=blue,
urlcolor=blue
}

\newtheorem{Thm}{Theorem}[section]
\newtheorem{Prop}[Thm]{Proposition}
\newtheorem{Lem}[Thm]{Lemma}
\newtheorem{Coro}[Thm]{Corollary}
\newtheorem{Def}{Definition}


\newcommand{\ringK}{\mathbb C}

\newcommand{\tl}[1]{\mathsf{TL}_{#1}}

\newcommand{\atl}[1]{\mathsf{TL}^{\mathsf a}_{#1}}
\newcommand{\paires}{\Lambda^{\mathsf a}}

\newcommand{\ctl}{\mathsf{\widetilde{TL}}}

\newcommand{\catl}{\mathsf{\widetilde{TL^{a}}}}
\newcommand{\cat}{\mathcal{C}}

\newcommand{\blank}{{-}}
\newcommand{\xf}{\times_f}
\newcommand{\xfone}{\times^1_f}
\newcommand{\xftwo}{\times^2_f}
\newcommand{\xfthree}{\times^3_f}
\newcommand{\com}[2]{\eta_{#1,#2}}
\newcommand{\T}[1]{t_{#1}}

\newcommand{\Mod}{{\rm\text\,{}mod\,}}

\newcommand{\rk}{{\rm{\text{rk\,}}}}
\newcommand{\mysum}[2]{\underset{#1}{\overset{#2}{\sum{}{\hskip-2pt}'}}}
\newcommand{\myoplus}[2]{\underset{#1}{\overset{#2}{\oplus{}{\hskip+0pt}'}}}
\newcommand{\mybigoplus}[2]{\underset{#1}{\overset{#2}{\bigoplus{}{\hskip-1pt}'}}}

\newcommand{\TheS}[1]{\mathsf S_{#1}}

\newcommand{\TheW}[1]{\mathsf W_{#1}}
\newcommand{\TheH}[1]{\mathsf H_{#1}}
\newcommand{\TheR}[1]{\mathsf R_{#1}}
\newcommand{\TheP}[1]{\mathsf P_{#1}}
\newcommand{\ThePaff}[1]{\mathsf P^{\mathsf a}_{#1}}
\newcommand{\TheM}[1]{\mathsf M_{#1}}
\newcommand{\TheL}[1]{\mathsf L_{#1}}
\newcommand{\TheI}[1]{\mathsf I_{#1}}

\newcommand{\Resar}[1]{\downarrow^{a}_r\hskip -0.018em #1{}}

\newcommand{\Indar}[1]{\uparrow^{a}_r\hskip -0.018em #1{}}
\newcommand{\Resphi}[1]{ \Downarrow^{r}_a\hskip -0.018em #1{}}
\newcommand{\Indphi}[1]{ \Uparrow^{r}_a\hskip -0.018em #1{}}

\DeclareMathOperator{\id}{id}

\DeclareMathOperator{\Hom}{Hom}
\DeclareMathOperator{\End}{End}

\definecolor{rougePompier}{rgb}{0.93,0.11,0.14}
\definecolor{vertForet}{rgb}{0.04,0.75,0.07}

\title[Computing affine fusion]{On the computation of fusion\\ over the affine Temperley-Lieb algebra}

\author[J Bellet\^ete]{Jonathan Bellet\^ete}

\address[Jonathan Bellet\^ete]{
Institut de Physique Th\'eorique, CEA Saclay\\
91191 Gif Sur Yvette, France }

\email{\tt jonathan.belletete@ipht.fr}

\author[Y Saint-Aubin]{Yvan Saint-Aubin}

\address[Yvan Saint-Aubin]{
D\'{e}partement de math\'{e}matiques et de statistique\\
Universit\'{e} de Montr\'{e}al\\
Montr\'eal, QC, Canada, H3C 3J7.}

\email{yvan.saint-aubin@umontreal.ca}

\date{\today}

\begin{document}

\begin{abstract}Fusion product originates in the algebraisation of the operator product expansion in conformal field theory. Read and Saleur (2007) introduced an analogue of fusion for modules over associative algebras, for example those appearing in the description of 2d lattice models. The article extends their definition for modules over the affine Temperley-Lieb algebra $\atl n$.

Since the regular Temperley-Lieb algebra $\tl n$ is a subalgebra of the affine $\atl n$, there is a natural pair of adjoint induction-restriction functors $(\Indar{}, \Resar{})$. The existence of an algebra morphism $\phi:\atl n\to\tl n$ provides a second pair of adjoint functors $(\Indphi{},\Resphi{})$. Two fusion products between $\atl{}$-modules are proposed and studied. They are expressed in terms of these four functors. The action of these functors is computed on the standard, cell and irreducible $\atl n$-modules. As a byproduct, the Peirce decomposition of $\atl n(q+q^{-1})$, when $q$ is not a root of unity, is given as direct sum of the induction $\Indar{\TheS{n,k}}$ of standard $\tl n$-modules to $\atl n$-modules. Examples of fusion products of various pairs of affine modules are given.

\medskip

\noindent\textbf{Keywords}\:\: 
affine Temperley-Lieb algebra\;$\cdot$\;
periodic Temperley-Lieb algebra\;$\cdot$\;
Temperley-Lieb algebra\;$\cdot$\;
fusion product\;$\cdot$\; 
projective modules \;$\cdot$\; 
Peirce decomposition \;$\cdot$\;
\end{abstract}

\maketitle  

\tableofcontents

\onehalfspacing

%
\begin{section}{Introduction}\label{sec:intro}
%

In conformal field theory, fusion describes the field content of the operator product of two primary fields. Under the field-state correspondence, this can be stated in terms of representations of the conformal algebra (Virasoro algebra, an affine Lie algebra, etc.). An analogue of such ``fusion product'' has been introduced by Read and Saleur \cite{ReadSaleur, ReadSaleur2} for modules over associative algebras associated with two-dimensional statistical models. The present article extends their definition for modules over the affine Termperley-Lieb algebra $\atl{}$ and develops tools to compute fusion of some basic $\atl{}$-modules.

Read and Saleur's slit-strip argument may seem naive, but it is remarkably deep. In conformal field theory, the operator product of local boundary fields is sometimes seen as bringing points on the boundary together. Imagine a two-dimensional lattice model on a strip of width $n$ with ``evolution'' (ruled by a Hamiltonian or a transfer matrix) moving upward. As an example, suppose that the evolution is given by a loop model with the Hamiltonian being the sum $\sum_i u_i$ of the generators of the Temperley-Lieb algebra $\tl n$ (in some representation). A vertical slit cuts the strip from the bottom edge to some point at finite-distance from this edge. There are $n_l$ and $n_r$ sites on each side of the slit ($n=n_l+n_r$). Until the slit's end, degrees of freedom on its left evolve independently from those on its right. The state along the slit-strip is thus broken into states that transform under some partial Hamiltonians, one in $\tl{n_l}$ for the left part and one in $\tl{n_r}$ for the right part. Let $\mathsf M_l$ and $\mathsf M_r$ be the modules within which the left and right states take their values. In which $\tl n$-module lie the state after the slit's end? Read and Saleur proposed that it should be in a ``fused'' module denoted by $\mathsf M_l\xf \mathsf M_r$ obtained by induction from the subalgebra $\tl{n_l}\times \tl{n_r}$ to $\tl n$, that is:  $\mathsf M_l\xf \mathsf M_r=\tl n\otimes_{\tl{n_l}\times\tl{n_r}}(\mathsf M_l\otimes_{\mathbb C} \mathsf M_r)$. Using this formal definition, Gainutdinov and Vasseur \cite{GainutdinovVasseur} succeeded in computing fusion product of several basic modules of the algebra $\tl n$. However some key products escaped their computing technique. Using a different method, Bellet\^ete \cite{BelFus15} was able to compute the product of irreducible $\tl{}$-modules and show that their fusion reproduces that of primary fields on the first row of the Kac table of minimal models. This constitutes a compelling support for Read and Saleur's fusion. Other key features of their fusion product are its commutativity and associativity, and the fact that, in some precise sense, it has a well-defined limit when $n_l$ and $n_r$ both go to infinity. The latter property is welcome as the fusion of a pair of modules, one over $\tl{n_l}$ and the other over $\tl{n_r}$, gives a module over, yet, a third algebra, namely $\tl{n=n_l+n_r}$.

On closer inspection, the proposed definition of fusion can be readily extended to the modules of any family of algebras $\{\mathsf A_i, i\in\mathbb N\}$ that is filtrated in the sense that $\mathsf A_i\times \mathsf A_j\subseteq \mathsf A_{i+j}$ for any $i$ and $j$. These include several families of algebras, defined diagrammatically and physically relevant. However one family does not share this property: it is the family of affine (or periodic) Temperley-Lieb algebras $\atl n$, $n\in\mathbb N$. This family is associated with many physical models with periodic boundary conditions whose continuum limit are CFT's on a cylinder or a torus. Clearly a fusion product for the affine family would be a desirable addition to the algebraic description of these theories. A formal definition of $\atl n$ will be given below. An intuitive one is that $\atl n$ is a ``Temperley-Lieb algebra defined on a circle'' of $n$ sites, where the generators $e_1$ and $e_n$ satisfy the usual neighbor property: $e_1e_ne_1=e_1$ and $e_ne_1e_n=e_n$. These identities rule out any obvious inclusion of $\atl{n_l}\times\atl{n_r}$ into $\atl{n_l+n_r}$. Read and Saleur's proposal for a fusion product will need an important extension to be applicable to this family.

These identities rule out any diagrammatic inclusion of $\atl{n_l}\times\atl{n_r}$ into $\atl{n_l+n_r}$. Thus Read and Saleur's proposal for a fusion product does not extend trivially to the affine Temperley-Lieb family. 

A way around this difficulty was proposed by Gainutdinov and Saleur \cite{GaiSalAffine}. Even though a general diagram of the algebra $\atl m\times \atl n$ does not correspond to a diagram in $\atl{m+n}$ in any obvious way, these authors were able to construct a injective morphism $\atl m\times \atl n\to\atl{m+n}$. Thanks to this remarkable map, the problem of defining an affine fusion may follow the path used for the regular Temperley-Lieb family.

The present article studies two alternative fusion products over the affine Temperley-Lieb algebras. They are constructed out of induction and restriction functors. Since the regular Temperley-Lieb algebra $\tl n$ is a subalgebra of the affine $\atl n$, there is a pair of functors associated with this inclusion, namely the induction $\Indar$ and the restriction $\Resar$. (The indices ``{\itshape a}'' and ``{\itshape r}'' refer to the affine and regular algebras $\atl{}$ and $\tl{}$, respectively.) There exists yet a second pair of functors, denoted by $\Indphi$ and $\Resphi$. They come from the existence of a remarkable morphism $\phi:\atl n\to\tl n$, based on tools available since a long time \cite{Chow}, but that may not be well-known in the present context. Section \ref{sec:TLalgebras} defines the regular and affine families of Temperley-Lieb algebras, constructs the morphism $\phi$ and uses the four functors to propose several fusion products of $\atl{}$-modules. All these fusions are based on the fusion $\xf=\times_f^{\tl{}}$ between modules over $\tl{}$-modules. The actual computation of a fusion product requires an extensive knowledge of the representation theory of both families and how their modules behave under the four functors. Section \ref{sec:modulesMorphism} recalls the definition of the basic modules: the standard, projective and irreducible modules over $\tl n$, and the standard, cell and irreducible ones over $\atl n$. It also computes the $\Hom$-groups between some of these affine modules. Finally Section \ref{sec:restrictionInduction} constructs the induction and restriction of these basic modules and ends with several examples of computation of the fusion of $\atl{}$-modules. Some concluding remarks follow. An appendix is devoted to the description of the weakest partial order, a tool to determine the composition factors of affine cell modules, and the computation of the dimensions of the irreducible modules of $\atl n$.

\end{section}

%
\begin{section}{The Temperley-Lieb algebras}\label{sec:TLalgebras}
%

This section defines the fusion functors to be considered in this paper. It uses the langage of categories introduced in the study of the affine Temperley-Lieb family by Graham and Lehrer \cite{GLaffine}. The first subsection defines the two Temperley-Lieb categories, the affine $\catl$ and the regular $\ctl$, and recalls the braiding on the regular Temperley-Lieb category. The Temperley-Lieb algebras $\atl n$ and $\tl n$ will then be identified to the sets of endomorphisms $\Hom_\catl(n,n)$ for the affine algebra and the set $\Hom_{\ctl}(n,n)$ for the regular one. These definitions make it clear that $\tl n\subset\atl n$ as algebras. Subsection \ref{sec:tlalgebras} shows that there is a morphism $\atl n\to \tl n$. The fact that $\atl n$ contains $\tl n$ provides two functors between modules over the two algebras, namely the induction $\Indar{( \blank )}: \Mod \tl{n} \to \Mod \atl{n}$ and the restriction $\Resar{(\blank)}: \Mod \atl{n} \to \Mod \tl{n}$. Similarly, the fact that $\tl n$ contains an image of $\atl n$ leads to two functors, an induction one $\Indphi{(\blank)}: \Mod\atl{n} \to \Mod\tl{n}$ and a restriction $\Resphi(\blank): \Mod \tl{n} \to \Mod{\atl{n}}$. Subsection \ref{sec:fusionFunctors} uses these functors to propose two possible fusions between $\atl{}$-modules.

%
%
\begin{subsection}{The Temperley-Lieb categories}\label{sec:tlcategories}
We start by introducing the two Temperley-Lieb categories: the regular and the affine. For both, the class of objects is simply the set $\mathbb N_{\geq 0}$ of non-negative integers and, for each object $n$, $\End(n)$ will be a realization of some Temperley-Lieb algebra, and some particular quotients of homomorphism groups will serve as basic modules over them. The presentation follows that found in \cite{GLaffine}.

Each of these categories will be constructed as a quotient of a more general one, denoted by $\cat$. The objects of $\cat$ are the non-negative integers. For a pair of integers $n,m$, $\Hom_{\cat}(n,m)$ is the free $\ringK$-module spanned by the periodic $(m,n)$-diagrams, which are defined as follows. We start with a rectangle, with $n$ vertices on the right side, and $m$ on the left one; the top and bottom of the rectangle are identified, so that it is topologically a cylinder. The vertices on each side are labelled with numbers\footnote{In what follows, these labels will always be taken modulo $m$ and $n$, respectively.} from $1$ to $m$ and $1$ to $n$, respectively, and are linked pairwise by lines without intersection. The diagram may also contain any number of closed lines, called loops. Two such diagrams define the same morphism if and only if their lines are isotopic. For example here are one element of $\Hom_{\cat}(4,4)$ and two of $\Hom_{\cat}(2,4)$:
\begin{equation*}
	\begin{tikzpicture}[scale = 1/3]
	\draw[line width = 2pt] (0,-1) -- (0,4);
	\draw[line width = 2pt] (3,-1) -- (3,4);
	\draw[dashed] (0,-1) -- (3,-1);
	\draw[dashed] (0,4) -- (3,4);
	\foreach \s in {1,...,4}
	{	
		\filldraw[black] (0,\s -1) circle (5pt);
		\filldraw[black] (3,\s -1) circle (5pt);
	};
	\draw (0,0)  -- (3,0);
	\draw (0,1) .. controls (1,1) and (1,2) .. (0,2);
	\draw (0,3) .. controls (1,3) and (2,1) .. (3,1);
	\draw (3,2) .. controls (2,2) and (2,3) .. (3,3);
	\draw (3/2,3) circle (10pt);
	\node[anchor = west] at (3, 3/2) {=};
	\end{tikzpicture}
	\begin{tikzpicture}[scale = 1/3]
	\draw[line width = 2pt] (0,-1) -- (0,4);
	\draw[line width = 2pt] (3,-1) -- (3,4);
	\draw[dashed] (0,-1) -- (3,-1);
	\draw[dashed] (0,4) -- (3,4);
	\foreach \s in {1,...,4}
	{	
		\filldraw[black] (0,\s -1) circle (5pt);
		\filldraw[black] (3,\s -1) circle (5pt);
	};
	\draw (0,0) .. controls (1/2,0) and (1,1) ..  (3/2,1);
	\draw (3/2,1) .. controls (2,1) and (5/2,0) .. (3,0);
	\draw (0,1) .. controls (1,1) and (1,2) .. (0,2);
	\draw (0,3) .. controls (1,3) and (2,1) .. (3,1);
	\draw (3,2) .. controls (2,2) and (2,3) .. (3,3);
	\draw (3/2,0) circle (12pt);
	\node[anchor = west] at (3, 3/2) {,};
	\end{tikzpicture}
	\qquad
	\begin{tikzpicture}[scale = 1/3]
	\draw[line width = 2pt] (0,-1) -- (0,4);
	\draw[line width = 2pt] (3,-1) -- (3,4);
	\draw[dashed] (0,-1) -- (3,-1);
	\draw[dashed] (0,4) -- (3,4);
	\foreach \s in {1,...,4}
	{	
		\filldraw[black] (0,\s -1) circle (5pt);
	};
	\filldraw[black] (3,1) circle (5pt);
	\filldraw[black] (3,2) circle (5pt);
	\draw (0,0) .. controls (1,0) and (1,3) .. (0,3);
	\draw (0,1) .. controls (3/4,1) and (3/4,2) .. (0,2);
	\draw (3,1) .. controls (2,1) and (2,2) .. (3,2);
	\draw (2,4) .. controls (1,7/3) and (2,2/3) .. (1,-1);
	\node[anchor = west] at (3, 3/2) {$\neq$};
	\end{tikzpicture}
	\begin{tikzpicture}[scale = 1/3]
	\draw[line width = 2pt] (0,-1) -- (0,4);
	\draw[line width = 2pt] (3,-1) -- (3,4);
	\draw[dashed] (0,-1) -- (3,-1);
	\draw[dashed] (0,4) -- (3,4);
	\foreach \s in {1,...,4}
	{	
		\filldraw[black] (0,\s -1) circle (5pt);
	};
	\filldraw[black] (3,1) circle (5pt);
	\filldraw[black] (3,2) circle (5pt);
	\draw (0,0) .. controls (1,0) and (1,3) .. (0,3);
	\draw (0,1) .. controls (3/4,1) and (3/4,2) .. (0,2);
	\draw (3,1) .. controls (2,1) and (2,2) .. (3,2);
	\draw (3/2,3/2) circle (12 pt); 
	\node[anchor = west] at (3, 3/2) {$.$};
	\end{tikzpicture}
\end{equation*}
The \emph{rank} of a diagram $a$ is the minimal number of times, among all isotopic diagrams equivalent to $a$, that lines intersect the top of the fundamental rectangle. The ranks of the three above examples are $0$, $1$ and $0$ respectively. The parity of a diagram is defined as the parity of its rank. Lines that connect vertices on opposite sides of the rectangle are called \emph{through lines}; a periodic $(m,n)$-diagram having exactly $n$ ($m$) through lines is said to be \emph{monic} (\emph{epic}).
 
Composition of morphisms is defined by linearly extending the concatenation of diagrams. The composition $ba$ of two periodic diagrams $a: m \to n $ and $b: n \to t$ is obtained by identifying the vertices $1$ to $n$ on $a$ and $b$, and the sides on which they lie, joining the lines that meet there, and removing the vertices. For example
\begin{equation*}
	\begin{tikzpicture}[scale = 1/3]
	\draw[line width = 2pt] (0,-1) -- (0,4);
	\draw[line width = 2pt] (3,-1) -- (3,4);
	\draw[dashed] (0,-1) -- (3,-1);
	\draw[dashed] (0,4) -- (3,4);
	\foreach \s in {1,...,4}
	{	
		\filldraw[black] (0,\s -1) circle (5pt);
		\filldraw[black] (3,\s -1) circle (5pt);
	};
	\draw (0,0) .. controls (1/2,0) and (1,-1/2) .. (1,-1);
	\draw (3/2,4) .. controls (3/2,3) and (2,1) .. (3,1); 
	\draw (0,1) .. controls (1,1) and (2,0) .. (3,0);
	\draw (0,2) .. controls (1,2) and (1,3) .. (0,3);
	\draw (3,2) .. controls (2,2) and (2,3) .. (3,3);
	\node[anchor = west] at (3, 3/2) {$\times$};
	\end{tikzpicture}
	\begin{tikzpicture}[scale = 1/3]
	\draw[line width = 2pt] (0,-1) -- (0,4);
	\draw[line width = 2pt] (3,-1) -- (3,4);
	\draw[dashed] (0,-1) -- (3,-1);
	\draw[dashed] (0,4) -- (3,4);
	\foreach \s in {1,...,4}
	{	
		\filldraw[black] (0,\s -1) circle (5pt);
	};
	\filldraw[black] (3,1) circle (5pt);
	\filldraw[black] (3,2) circle (5pt);
	\draw (0,0) .. controls (1,0) and (2,1) .. (3,1);
	\draw (0,1) .. controls (1,1) and (1,2) .. (0,2);
	\draw (0,3) .. controls (1,3) and (2,2) .. (3,2);
	\node[anchor = west] at (3, 3/2) {$=$};
	\end{tikzpicture}
	\begin{tikzpicture}[scale = 1/3]
	\draw[line width = 2pt] (0,-1) -- (0,4);
	\draw[line width = 2pt] (6,-1) -- (6,4);
	\draw[dashed] (0,-1) -- (6,-1);
	\draw[dashed] (0,4) -- (6,4);
	\foreach \s in {1,...,4}
	{	
		\filldraw[black] (0,\s -1) circle (5pt);
	};
	\filldraw[black] (6,1) circle (5pt);
	\filldraw[black] (6,2) circle (5pt);
	\draw (0,0) .. controls (1/2,0) and (1,-1/2) .. (1,-1);
	\draw (3/2,4) .. controls (3/2,3) and (2,1) .. (3,1); 
	\draw (0,1) .. controls (1,1) and (2,0) .. (3,0);
	\draw (0,2) .. controls (1,2) and (1,3) .. (0,3);
	\draw (3,2) .. controls (2,2) and (2,3) .. (3,3);
	\draw (3,0) .. controls (4,0) and (5,1) .. (6,1);
	\draw (3,1) .. controls (4,1) and (4,2) .. (3,2);
	\draw (3,3) .. controls (4,3) and (5,2) .. (6,2);
	\node[anchor = west] at (6, 3/2) {$ = $};
	\end{tikzpicture}
	\begin{tikzpicture}[scale = 1/3]
	\draw[line width = 2pt] (0,-1) -- (0,4);
	\draw[line width = 2pt] (3,-1) -- (3,4);
	\draw[dashed] (0,-1) -- (3,-1);
	\draw[dashed] (0,4) -- (3,4);
	\foreach \s in {1,...,4}
	{	
		\filldraw[black] (0,\s -1) circle (5pt);
	};
	\filldraw[black] (3,1) circle (5pt);
	\filldraw[black] (3,2) circle (5pt);
	\draw (0,0) .. controls (1/2,0) and (1,-1/2) .. (1,-1);
	\draw (3/2,4) .. controls (3/2,3) and (2,2) .. (3,2); 
	\draw (0,1) -- (3,1);
	\draw (0,2) .. controls (1,2) and (1,3) .. (0,3);
	\node[anchor = west] at (3, 3/2) {$.$};
	\end{tikzpicture}
\end{equation*}
The following $(n,n)$-diagrams will appear below as generators of the Temperley-Lieb algebras. If $n\geq 2$, $e=e_{n,0}$ is the periodic diagram where the vertices $n$ and $1$ on each side are linked, while for all $1<j<n$, the vertex $j$ on the left is linked to the vertex $j$ on the right. For instance,
\begin{equation*}
	\begin{tikzpicture}[scale = 1/3]
	\node[anchor = east] at (0,3/2) {$e_{4,0} = $};
	\draw[line width = 2pt] (0,-1) -- (0,4);
	\draw[line width = 2pt] (3,-1) -- (3,4);
	\draw[dashed] (0,-1) -- (3,-1);
	\draw[dashed] (0,4) -- (3,4);
	\foreach \s in {1,...,4}
	{	
		\filldraw[black] (0,\s -1) circle (5pt);
		\filldraw[black] (3,\s -1) circle (5pt);
	};
	\draw (0,0) .. controls (1/2,0) and (1,-1/2) .. (1,-1);
	\draw (0,3) .. controls (1/2,3) and (1,7/2) .. (1,4);
	\draw (3,0) .. controls (5/2,0) and (2,-1/2) .. (2,-1);
	\draw (3,3) .. controls (5/2,3) and (2,7/2) .. (2,4);
	\draw (0,1) -- (3,1);
	\draw (0,2) -- (3,2);
	\node[anchor = west] at (3, 3/2) {.};
	\end{tikzpicture}
\end{equation*}
The diagram $\tau_n$ is the affine diagram where the $j$th vertex on the left side is linked to the $(j-1)$-th vertex  on the right, for all $j$. Similarly, $\tau^{-1}_n$ is the affine diagram where the vertex $j$ on the left side is linked to the vertex $j+1$ on the right, for all $j$. 
\begin{equation*}
	\begin{tikzpicture}[scale = 1/3]
	\node[anchor = east] at (0,3/2) {$\tau_4 = $};
	\draw[line width = 2pt] (0,-1) -- (0,4);
	\draw[line width = 2pt] (3,-1) -- (3,4);
	\draw[dashed] (0,-1) -- (3,-1);
	\draw[dashed] (0,4) -- (3,4);
	\foreach \s in {1,...,4}
	{	
		\filldraw[black] (0,\s -1) circle (5pt);
		\filldraw[black] (3,\s -1) circle (5pt);
	};
	\draw (3,0) .. controls (5/2,0) and (3/2,-1/2) .. (3/2,-1);
	\draw (3/2,4) .. controls (3/2,7/2) and (1/2,3) .. (0,3); 
	\draw (0,0) .. controls (1,0) and (2,1) .. (3,1);
	\draw (0,1) .. controls (1,1) and (2,2) .. (3,2);
	\draw (0,2) .. controls (1,2) and (2,3) .. (3,3);
	\node[anchor = west] at (3, 3/2) {$,$};
	\end{tikzpicture}\qquad
	\begin{tikzpicture}[scale = 1/3]
	\node[anchor = east] at (0,3/2) {$\tau^{-1}_4 = $};
	\draw[line width = 2pt] (0,-1) -- (0,4);
	\draw[line width = 2pt] (3,-1) -- (3,4);
	\draw[dashed] (0,-1) -- (3,-1);
	\draw[dashed] (0,4) -- (3,4);
	\foreach \s in {1,...,4}
	{	
		\filldraw[black] (0,\s -1) circle (5pt);
		\filldraw[black] (3,\s -1) circle (5pt);
	};
	\draw (0,0) .. controls (1/2,0) and (3/2,-1/2) .. (3/2,-1);
	\draw (3/2,4) .. controls (3/2,7/2) and (5/2,3) .. (3,3); 
	\draw (0,1) .. controls (1,1) and (2,0) .. (3,0);
	\draw (0,2) .. controls (1,2) and (2,1) .. (3,1);
	\draw (0,3) .. controls (1,3) and (2,2) .. (3,2);
	\node[anchor = west] at (3, 3/2) {$,$};
	\end{tikzpicture}
\end{equation*}
One directly sees that $\tau^{-1}_n\circ \tau_n = \tau_n \circ \tau^{-1}_n$ is the identity on $\Hom_{\cat}(n,n)$. Using these three diagrams we then define $e_{n,i} = \tau^{i}_ne_{n,0}\tau^{-i}_n$, for any integer $i$; note that $(\tau_n)^{n}$ is central in $\End_{\cat}{n}$, so that $e_{n,i+n} = e_{n,i}$.

It should be noted that if $a$ and $b$ are both odd or even, then $b \circ a$ is even, while if $a$ and $b$ have different parity, then $b \circ a$ is odd. Furthermore, if $a$ has $r$ through lines and $b$ has $s$, then $b\circ a$ has at most $\text{min}(r,s)$ through lines. Finally, we define an involution on diagrams: it sends a diagram $d$ unto its mirror reflection $d^t$ with respect to an imaginary axis that splits the fundamental rectangle vertically. For instance,
\begin{equation*}
	\begin{tikzpicture}[scale = 1/3]
	\draw[line width = 2pt] (0,-1) -- (0,4);
	\draw[line width = 2pt] (3,-1) -- (3,4);
	\draw[dashed] (0,-1) -- (3,-1);
	\draw[dashed] (0,4) -- (3,4);
	\node[anchor=east] at (-0.25,1.5) {$d =\ $};
	\foreach \s in {1,...,4}
	{	
		\filldraw[black] (0,\s -1) circle (5pt);
		\filldraw[black] (3,\s -1) circle (5pt);
	};
	\draw (0,0) .. controls (1/2,0) and (1,-1/2) .. (1,-1);
	\draw (3/2,4) .. controls (3/2,3) and (2,1) .. (3,1); 
	\draw (0,1) .. controls (1,1) and (2,0) .. (3,0);
	\draw (0,2) .. controls (1,2) and (1,3) .. (0,3);
	\draw (3,2) .. controls (2,2) and (2,3) .. (3,3);
	\node[anchor = west] at (3, 3/2) {$\quad\longrightarrow\quad$};
	\end{tikzpicture}
	\begin{tikzpicture}[scale = 1/3]
	\draw[line width = 2pt] (0,-1) -- (0,4);
	\draw[line width = 2pt] (3,-1) -- (3,4);
	\draw[dashed] (0,-1) -- (3,-1);
	\draw[dashed] (0,4) -- (3,4);
	\node[anchor=east] at (-0.25,1.5) {$d^t =\ $};
	\foreach \s in {1,...,4}
	{	
		\filldraw[black] (0,\s -1) circle (5pt);
		\filldraw[black] (3,\s -1) circle (5pt);
	};
	\draw (3,0) .. controls (5/2,0) and (2,-1/2) .. (2,-1);
	\draw (3/2,4) .. controls (3/2,3) and (1,1) .. (0,1); 
	\draw (0,0) .. controls (1,0) and (2,1) .. (3,1);
	\draw (0,2) .. controls (1,2) and (1,3) .. (0,3);
	\draw (3,2) .. controls (2,2) and (2,3) .. (3,3);
	\node[anchor = west] at (3, 3/2) {$.$};
	\end{tikzpicture}
\end{equation*}
\begin{Def}
	The \emph{affine Temperley-Lieb category} $\catl(q)$ is the quotient of category $\cat$ obtained by the identification of diagrams $a$ having $n$ loops, with the morphism $\beta^{n} a$, where $\beta = q+q^{-1}$ for some invertible element $q \in\ringK$. 
\end{Def}
\noindent For example, in $\catl$, the two morphisms in $\Hom_{\catl(q)}(2,4)$ are equal:
 \begin{equation*}
	\begin{tikzpicture}[scale = 1/3]
	\draw[line width = 2pt] (0,-1) -- (0,4);
	\draw[line width = 2pt] (3,-1) -- (3,4);
	\draw[dashed] (0,-1) -- (3,-1);
	\draw[dashed] (0,4) -- (3,4);
	\foreach \s in {1,...,4}
	{	
		\filldraw[black] (0,\s -1) circle (5pt);
	};
	\filldraw[black] (3,1) circle (5pt);
	\filldraw[black] (3,2) circle (5pt);
	\draw (0,0) .. controls (1,0) and (1,3) .. (0,3);
	\draw (0,1) .. controls (3/4,1) and (3/4,2) .. (0,2);
	\draw (3,1) .. controls (2,1) and (2,2) .. (3,2);
	\draw (3/2,3/2) circle (12 pt); 
	\node[anchor = west] at (3, 3/2) {$= \beta $};
	\end{tikzpicture}
	\begin{tikzpicture}[scale = 1/3]
	\draw[line width = 2pt] (0,-1) -- (0,4);
	\draw[line width = 2pt] (3,-1) -- (3,4);
	\draw[dashed] (0,-1) -- (3,-1);
	\draw[dashed] (0,4) -- (3,4);
	\foreach \s in {1,...,4}
	{	
		\filldraw[black] (0,\s -1) circle (5pt);
	};
	\filldraw[black] (3,1) circle (5pt);
	\filldraw[black] (3,2) circle (5pt);
	\draw (0,0) .. controls (1,0) and (1,3) .. (0,3);
	\draw (0,1) .. controls (3/4,1) and (3/4,2) .. (0,2);
	\draw (3,1) .. controls (2,1) and (2,2) .. (3,2);
	\node[anchor = west] at (3, 3/2) {$.$};
	\end{tikzpicture}
\end{equation*}
It can be shown that an affine $(m,n)$-diagram becomes a monomorphism (epimorphism) in $\catl{(q)}$ if and only if it is monic (epic). Furthermore, it can be shown that every affine $(m,n)$-diagram can be expressed as $a \circ f \circ b$, where $a \in \End{m}$, $b \in \End{n}$, and $f \in \Hom_{\catl}(n,m)$ is a diagram of rank zero.  


\begin{Def}
	The \emph{regular Temperley-Lieb category} $\ctl(q)$ is the subcategory of $\catl(q)$ where $\Hom_{\ctl}(n,m)$, for $m,n\in\mathbb N_{\geq0}$, is spanned by all diagrams of rank zero.  
\end{Def}

The category $\ctl{(q)}$ is a strict monoidal braided category \cite{turaev}. The tensor product on objects is simply $n \otimes_{\ctl} m \equiv n+m$ while, on morphisms, it is defined by extending bilinearly the tensor product of diagrams. Let $f$ be a periodic $(m,r)$-diagram, and $g$ be a periodic $(n,s)$-diagram, both of rank zero. The tensor product $f\otimes g$ is the $(m+n,r+s)$-diagram obtained by \emph{gluing} $f$ on top of $g$. For instance, here is the product of a $(3,1)$-diagram with a $(2,2)$-diagram:
\begin{equation}
	\begin{tikzpicture}[scale = 1/3]
	\draw[line width = 2pt] (0,-1) -- (0,3);
	\draw[line width = 2pt] (3,-1) -- (3,3);
	\draw[dashed] (0,-1) -- (3,-1);
	\draw[dashed] (0,3) -- (3,3);
	\foreach \s in {1,...,3}
	{	
		\filldraw[black] (0,\s -1) circle (5pt);
	};
	\filldraw[black] (3,2) circle (5pt);
	\draw (0,0) .. controls (1,0) and (2,2) .. (3,2);
	\draw (0,1) .. controls (1,1) and (1,2) .. (0,2);
	\node[anchor = west] at (3, 3/4) {$\otimes_{\ctl}$};
	\node at (3/2,-2) {\phantom{}};
	\end{tikzpicture}
	\begin{tikzpicture}[scale = 1/3]
	\draw[line width = 2pt] (0,-1) -- (0,2);
	\draw[line width = 2pt] (3,-1) -- (3,2);
	\draw[dashed] (0,-1) -- (3,-1);
	\draw[dashed] (0,2) -- (3,2);
	\foreach \s in {1,...,2}
	{	
		\filldraw[black] (0,\s -1) circle (5pt);
		\filldraw[black] (3,\s -1) circle (5pt);
	};
	\draw (0,0) .. controls (1,0) and (1,1) .. (0,1);
	\draw (3,0) .. controls (2,0) and (2,1) .. (3,1);
	\node[anchor = west] at (3, 1/4) {$\equiv$};
	\node at (3/2,-5/2) {\phantom{}};
	\end{tikzpicture}
	\begin{tikzpicture}[scale = 1/3]
	\draw[line width = 2pt] (0,-1) -- (0,5);
	\draw[line width = 2pt] (3,-1) -- (3,5);
	\draw[dashed] (0,-1) -- (3,-1);
	\draw[dashed] (0,5) -- (3,5);
	\foreach \s in {1,...,5}
	{	
		\filldraw[black] (0,\s -1) circle (5pt);
	};
	\filldraw[black] (3,4) circle (5pt);
	\filldraw[black] (3,3) circle (5pt);
	\filldraw[black] (3,2) circle (5pt);
	\draw (0,2) .. controls (1,2) and (2,4) .. (3,4);
	\draw (0,3) .. controls (1,3) and (1,4) .. (0,4);
	\draw (0,0) .. controls (1,0) and (1,1) .. (0,1);
	\draw (3,2) .. controls (2,2) and (2,3) .. (3,3);
	\node[anchor = west] at (3, 2) {.};
	\end{tikzpicture}
\end{equation}
The commutor (or braiding) is a natural isomorphism $\eta:-_1\otimes -_2\rightarrow -_2\otimes -_1$ such that $\eta_{r,s}\circ(f\otimes g)=(g\otimes f)\circ \eta_{m,n}$, for all $m,n,r,s\in\mathbb N_{\geq0}$, $f\in\Hom(m,r)$ and $g\in\Hom(n,s)$. For $\ctl(q)$, the commutors are given by
	\begin{align}\label{eq:commutor}
		\com{r}{s} = \prod_{i=1}^s\big(\prod_{j=r-1}^0 \T{r+s,i+j} \big) = \prod_{i=r}^1\big(\prod_{j=0}^{s-1} \T{r+s,i+j}  \big),
	\end{align}
where, for $n\geq 2$ and $1 \leq i \leq n-1$, the $\T{n,i}$ can be chosen as
	\begin{equation}
		\T{n,i} = (-q)^{1/2} 1_{\End{n}} + (-q)^{-1/2} e_{n,i}.
	\end{equation}
and the factors in a product are listed starting from the right, that is, $\prod_{i=1}^s\T{i}=\T s\T{s-1}\dots\T2\T1$ and $\prod_{i=s}^1\equiv\T1\T2\dots\T{s-1}\T s$. (Whenever the context is clear, the number $n$ of sites on the elements of $\tl n$ will omitted from now on.) The elements $\T{i}$ satisfy many different identities; in particular we note 
\begin{equation}\label{eq:magic.t.and.e}
	e_{i}\T{i+1}\T{i} = e_{i}e_{i+1} = \T{i}\T{i+1}e_{i+1}, \qquad \T{i}\T{i+1}\T{i} = \T{i+1}\T{i}\T{i+1},\quad \T{i}\T{j} = \T{j}\T{i}, \quad \forall |i-j|\geq 2.
\end{equation}
Furthermore, it can be shown that for all $n>1$, $(\eta_{1,n-1})^{n}$ and $(\eta_{n-1,1})^{n}$ are both central in $\End_{\ctl}(n)$.
\end{subsection}
%
%
\begin{subsection}{The Temperley-Lieb algebras}\label{sec:tlalgebras}
The \emph{affine Temperley-Lieb algebra} $\atl{n}(q)$ is defined as the $\ringK$-algebra of endomorphisms of $n$ in $\catl(q)$. Equivalently, it is the $\ringK$-algebra generated by $e_0,\tau$, and $\tau^{-1}$ (that is, $e_{n,0}, \tau_n$ and $\tau^{-1}_n$). The defining relations are, for $n > 2$,
\begin{equation}\label{eq:rel.tl.1}
	e_i e_i = (q+q^{-1})e_{i}, \qquad e_{i}e_{i \pm 1}e_i = e_i, \quad \forall i = 1, 2,\hdots n,
\end{equation}
\begin{equation}\label{eq:rel.tl.2}
	e_i e_j = e_j e_i, \quad \text{ if } |i-j| \geq 2,
\end{equation}
\begin{equation}\label{eq:rel.tl.3}
	\tau e_i = e_{i+1}\tau, \qquad e_1\tau^{2} = e_1e_2e_3\hdots e_{n-1}, \qquad \tau\tau^{-1}=\tau^{-1}\tau=1_{\End{n}},
\end{equation}
where $e_i = \tau^{i}e_0\tau^{-i}$. (As noted earlier, $e_{i+n}=e_i$.) The case $n= 2 $ is slightly different, the relations are then simply
\begin{equation}
	e_{1}e_{1} = (q+q^{-1})e_{1}, \qquad \tau^{2}e_{1} = e_{1}\tau^{2} = e_{1}.
\end{equation}

The \emph{regular Temperley-Lieb algebra} $\tl{n}(q)$ is defined as the $\ringK$-algebra of endomorphisms of $n$ in $\ctl(q)$. Equivalently, it is the $\ringK$-algebra generated\footnote{While $u_{n,i}$ and $e_{n,i}$ correspond to the same diagram, we use a different notation to distinguish elements of $\tl{n}$ and $\atl{n}$.} by the $u_i \equiv e_i$ for $1\leq i \leq n-1$, with the defining relations \eqref{eq:rel.tl.1} and \eqref{eq:rel.tl.2} for $n\geq 2$ but with the index $i$ limited to $1\leq i\leq n-1$. The definition makes it clear that $\tl{n} \subset \atl{n}$ as algebras. What is less obvious is that $\tl{n}$ is isomorphic to a quotient algebra of $\atl{n}$.~\cite{GaiSalAffine}
\begin{Prop}\label{prop:defphi}
Let $\phi: \atl{n} \to \tl{n}$ be the linear map defined, for $n>2$, by
	\begin{equation}
		\phi(\tau) = (-q)^{3/2}\eta_{n-1,1} , \qquad \phi(e) = \com{n-1}{1}^{-1}u_{1}\eta_{n-1,1};
	\end{equation}
for $n=2$, by 
	\begin{equation}
		\phi(\tau) = (-q)^{3/2}\eta_{1,1}=q(q\cdot 1_{\tl2} -u_1), \qquad \phi(e_1) = u_{1};
	\end{equation}
and for $n=1$, by
$$\phi(\tau) = (-q)^{3/2} 1_{\tl{1}}.$$
This map is a surjective morphism of algebras.
\end{Prop}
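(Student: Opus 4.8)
The plan is to exploit the fact that $\atl n$ is presented by the generators $\tau,\tau^{-1},e_0$ subject to relations \eqref{eq:rel.tl.1}--\eqref{eq:rel.tl.3}: a prescription of images of the generators extends to an algebra morphism exactly when those images satisfy every defining relation, and surjectivity will fall out along the way. I treat $n>2$ first. Set $E_i := \phi(\tau)^i\,\phi(e_0)\,\phi(\tau)^{-i}$, the intended image of $e_i$. Since the braiding $\com{n-1}{1}$ is an isomorphism, $\phi(\tau)$ is invertible in $\tl n$ and declaring $\phi(\tau^{-1})=\phi(\tau)^{-1}$ secures the last relation of \eqref{eq:rel.tl.3}. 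The relation $\tau e_i=e_{i+1}\tau$ holds by the very definition of $E_i$, and because $(\com{n-1}{1})^{n}$ is central in $\End_{\ctl}(n)$ (recorded above) the element $\phi(\tau)^n$ is central, so $E_{i+n}=E_i$, matching $e_{i+n}=e_i$.

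The decisive simplification is that $E_i=u_i$ for $1\le i\le n-1$. A direct cancellation of the scalars and of the conjugating copies of $\com{n-1}{1}$ gives $E_1=\phi(\tau)\phi(e_0)\phi(\tau)^{-1}=u_1$. For $1\le j\le n-2$ write $u_j=u_j'\otimes \id_1$ with $u_j'\in\End(n-1)$; naturality of the commutor, $\com{n-1}{1}\circ(f\otimes g)=(g\otimes f)\circ\com{n-1}{1}$, then yields $\com{n-1}{1}\,u_j\,(\com{n-1}{1})^{-1}=u_{j+1}$. Iterating, $E_i=(\com{n-1}{1})^{i-1}u_1(\com{n-1}{1})^{1-i}=u_i$ for all $1\le i\le n-1$. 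In particular $u_1,\dots,u_{n-1}$ lie in the image, so $\phi$ is surjective.

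Relations \eqref{eq:rel.tl.1} and \eqref{eq:rel.tl.2} now follow for all cyclic indices by transport along the inner automorphism $\sigma:x\mapsto\phi(\tau)\,x\,\phi(\tau)^{-1}$ of $\tl n$, which sends $E_i\mapsto E_{i+1}$. Indeed the regular Temperley--Lieb relations give $E_1^2=u_1^2=(q+q^{-1})E_1$, $E_1E_2E_1=u_1u_2u_1=E_1$, $E_2E_1E_2=E_2$, and $E_1E_{1+d}=E_{1+d}E_1$ for every $2\le d\le n-2$ (so that $1+d\le n-1$). Applying powers of $\sigma$ shifts each of these to all of its cyclic translates, and since every index pair of cyclic distance $d$ with $2\le d\le n-2$ is such a translate of $(1,1+d)$, one obtains every instance of \eqref{eq:rel.tl.1}--\eqref{eq:rel.tl.2}, including those involving $e_0=e_n$. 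No further computation is needed for these.

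The one genuinely affine relation remaining is the wrap-around $e_1\tau^2=e_1e_2\cdots e_{n-1}$ of \eqref{eq:rel.tl.3}, which under $\phi$ reads
\begin{equation*}
(-q)^{3}\,u_1\,(\com{n-1}{1})^{2}=u_1u_2\cdots u_{n-1}.
\end{equation*}
This is the main obstacle and the only step demanding real work. I would expand $\com{n-1}{1}=\T1\T2\cdots\T{n-1}$ from \eqref{eq:commutor}, multiply on the left by $u_1=e_1$, and telescope the double product using the identities \eqref{eq:magic.t.and.e}, especially $e_i\T{i+1}\T{i}=e_ie_{i+1}$ together with the far-commutation $e_i\T{j}=\T{j}e_i$ for $|i-j|\ge2$, collapsing $u_1(\com{n-1}{1})^2$ to the ordered product $u_1u_2\cdots u_{n-1}$ up to the scalar $(-q)^{-3}$. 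Finally the small cases are handled directly: for $n=2$ one checks $\phi(e_1)^2=(q+q^{-1})u_1$ and $\phi(\tau)^2\phi(e_1)=u_1=\phi(e_1)$ using $\phi(\tau)=q(q\cdot1_{\tl2}-u_1)$, and verifies that this $\phi(\tau)$ is invertible; for $n=1$ the map sends $\tau$ to a nonzero scalar and there is nothing to check. The genuine difficulty is thus concentrated entirely in the braid identity above, everything else being structural.
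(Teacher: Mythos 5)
Your proposal is correct and follows essentially the same route as the paper's proof: reduce everything to the regular Temperley--Lieb relations by using the commutor's naturality to shift $u_1$ to $u_i$ (the paper's computations $\phi(e_i)\phi(e_{i+1})\phi(e_i)=\eta^{i-1}(u_1\eta u_1\eta^{-1}u_1)\eta^{1-i}=\dots$ are precisely your conjugation by $\sigma$), obtain surjectivity from $\phi(e_i)=u_i$, and isolate the wrap-around relation $e_1\tau^2=e_1e_2\cdots e_{n-1}$ as the one computation requiring the identities \eqref{eq:magic.t.and.e}. The only difference is that you leave that telescoping of $u_1(\com{n-1}{1})^2$ as a sketch, whereas the paper carries it out explicitly down to the final check $u_{n-1}\T{n-1}\T{n-1}=(-q)^{-3}u_{n-1}$; the sketch is the right computation.
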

\begin{proof}
	We need to verify that $\phi(\tau)$ and $\phi(e)$ satisfy all the affine Temperley-Lieb relations, namely equations \eqref{eq:rel.tl.1}  to \eqref{eq:rel.tl.3}. We write $\eta \equiv \com{n-1}{1}$, to shorten the notation. For $n=1$ all conditions are trivially satisfied. 
For $n = 2$, one must simply recognize that
	 		\begin{equation*}
	 			\phi(\tau)\phi(e_1) = q(q\cdot 1_{\tl2} -u_1)u_{1} = - u_{1} = \phi(e_1)\phi(\tau),
	 		\end{equation*}
	 	so that $(\phi(\tau))^{2}\phi(e_{1}) = \phi(e_{1})(\phi(\tau))^{2} = -(-u_{1}) = u_{1}$.
	 For $n>2$, the first equation is simply
	\begin{equation}
		\phi(e_i)\phi(e_i) = \eta^{i-1}u_1\eta^{1-i}\eta^{i-1}u_1\eta^{1-i} = \beta\eta^{i-1}u_1\eta^{1-i} = \beta \phi(e_i).
	\end{equation}
	The second condition is 
	\begin{align*}
		\phi(e_i)\phi(e_{i + 1})\phi(e_i) &= \eta^{i-1}u_1\eta^{1-i}\eta^{i}u_1\eta^{-i}\eta^{i-1}u_1\eta^{1-i}\\
			&= \eta^{i-1} (u_1\eta u_1 \eta^{-1}u_1) \eta^{1-i}\\
			&= \eta^{i-1} (u_1u_2\eta\eta^{-1}u_1) \eta^{1-i}\\
			&= \eta^{i-1} u_1 \eta^{1-i} = \phi(e_i),
	\end{align*}
where we used the fact that $\eta$ is a commutor, that $u_{n,1} = u_{n-1,1} \otimes 1_{\End{1}}$, and that $u_1u_2u_1 = u_1$. The same arguments also give
	\begin{align*}
		\phi(e_i)\phi(e_{i - 1})\phi(e_i) & = \eta^{i-1}u_1\eta^{1-i}\eta^{i-2}u_1\eta^{2-i}\eta^{i-1}u_1\eta^{1-i}\\
			& = \eta^{i-1}u_1\eta^{-1}u_1\eta u_1\eta^{1-i}\\
			& = \eta^{i-2}u_2 u_1 u_2 \eta^{2-i} = \eta^{i-2} u_2 \eta^{2-i} \\
			& = \phi(e_i),
	\end{align*}
	so that the conditions \eqref{eq:rel.tl.1} are satisfied. Next, suppose that $n > i\geq 2+j>0$, then
	\begin{align*}
		\phi(e_i)\phi(e_j) &= \eta^{i-1} u_1 \eta^{j-i} u_1 \eta^{1-j} = \eta^{j-1}( \eta^{i-j}u_1 \eta^{j-i} u_1 \eta^{i-j})\eta^{1-i}\\
		& = \eta^{j-1}( u_{1+i-j} u_1 \eta^{i-j})\eta^{1-i}\\
		& = \eta^{j-1}( u_1 \eta^{i-j} u_1 )\eta^{1-i} = \phi(e_j)\phi(e_i),
	\end{align*}
	where we again used properties of the commutors with the fact that the $u_k$s themselves satisfy \eqref{eq:rel.tl.2}. Finally, repeatedly applying equation \eqref{eq:magic.t.and.e} gives
	\begin{align*}
(-q)^{-3}\phi(e_1\tau^{2})= u_1 \eta^2 	
							& = u_1 \T{1}\T{2}\T{3}\hdots\T{n-1}\T{1}\T{2}\T{3}\hdots \T{n-1}\\
							& = u_1 (\T{1}\T{2}\T{1})\T{3}\hdots\T{n-1}\T{2}\T{3}\hdots \T{n-1}\\
							& =(u_1 \T{2}\T{1})\T{2}\T{3}\hdots\T{n-1}\T{2}\T{3}\hdots \T{n-1}\\
							& = u_1u_2(\T{2}\T{3}\T{2})\T{4}\hdots\T{n-1}\T{3}\hdots \T{n-1}\\
							& = (u_1u_2\T{3}\T{2})\T{3}\T{4}\hdots \T{n-1}\T{3}\hdots\T{n-1}\\
							& = u_1u_2u_3\T{3}\T{4}\hdots\T{n-1}\T{3}\hdots \T{n-1}\\
							& = \hdots\\
							& = u_1u_2u_3u_4\hdots u_{n-1}\T{n-1}\T{n-1}.
	\end{align*}
	We can then check directly that $u_{n-1}\T{n-1}\T{n-1} = (-q)^{-3}u_{n-1}$, so that $$\phi(e_1)\eta^{2} = \phi(e_1)\phi(e_2)\hdots \phi(e_{n-1}).$$
	 Finally, one checks that the generators $u_i$ are all in the image of $\phi$.
\end{proof}

The morphism $\phi$ is (almost) an idempotent. More precisely:
\begin{Prop}Let $\iota: \tl{n} \to \atl{n}$, be the canonical injection. Then $\phi\circ\iota\circ\phi=\phi$.
\end{Prop}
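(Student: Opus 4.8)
The plan is to show that the composite $\phi\circ\iota:\tl n\to\tl n$ is the identity; composing it with $\phi$ on the right then gives $\phi\circ\iota\circ\phi=(\phi\circ\iota)\circ\phi=\id_{\tl n}\circ\phi=\phi$ at once. (Since $\phi$ is surjective by Proposition~\ref{prop:defphi}, the two statements are in fact equivalent.) Because $\phi\circ\iota$ is a composite of algebra morphisms, hence itself an algebra morphism, it suffices to check that it fixes each generator $u_i$, $1\le i\le n-1$, of $\tl n$. The small cases are immediate: for $n=1$ the algebra $\tl1$ is spanned by the identity, and for $n=2$ one has $\iota(u_1)=e_1$ and $\phi(e_1)=u_1$ by the corresponding clause of Proposition~\ref{prop:defphi}. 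So I would concentrate on $n>2$.

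Set $\eta=\com{n-1}{1}$. Starting from $e_{i+1}=\tau e_i\tau^{-1}$ (immediate from $e_i=\tau^i e_0\tau^{-i}$) and using $\phi(\tau)=(-q)^{3/2}\eta$ together with the fact that $\phi$ is a morphism, the scalar factors cancel and one obtains the recursion $\phi(e_{i+1})=\eta\,\phi(e_i)\,\eta^{-1}$, with base value $\phi(e_1)=u_1$. Thus proving $\phi(\iota(u_i))=\phi(e_i)=u_i$ for all $1\le i\le n-1$ reduces, by induction, to the single conjugation relation $\eta\,u_i\,\eta^{-1}=u_{i+1}$, valid for $1\le i\le n-2$.

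This relation, which is the crux of the argument, I would obtain from naturality of the braiding rather than any explicit computation. For $1\le i\le n-2$ the generator $u_i$ acts only on the strands $i,i+1$, both lying among the first $n-1$ strands, so under the gluing $n=(n-1)+1$ it factors as $u_i=u_{n-1,i}\otimes 1_{\End1}$. Applying the commutor identity $\com{n-1}{1}\circ(f\otimes g)=(g\otimes f)\circ\com{n-1}{1}$ with $f=u_{n-1,i}\in\End{(n-1)}$ and $g=1_{\End1}$ gives $\eta\,u_i=(1_{\End1}\otimes u_{n-1,i})\,\eta$; since prepending one strand shifts the index by one, $1_{\End1}\otimes u_{n-1,i}=u_{i+1}$, whence $\eta\,u_i\,\eta^{-1}=u_{i+1}$ (this is the same move already used in the proof of Proposition~\ref{prop:defphi}). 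Feeding this into the recursion yields $\phi(e_i)=u_i$ for all $1\le i\le n-1$, so $\phi\circ\iota=\id_{\tl n}$ and therefore $\phi\circ\iota\circ\phi=\phi$.

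The one genuinely non-formal point is precisely this shift relation $\eta\,u_i\,\eta^{-1}=u_{i+1}$, everything else being bookkeeping around the fact that $\phi$ and $\iota$ are algebra morphisms. Care must be taken with the index range: the factorisation $u_i=u_{n-1,i}\otimes 1_{\End1}$, and hence the shift, is available only for $i\le n-2$, which is exactly the range the induction needs in order to reach $u_{n-1}$.
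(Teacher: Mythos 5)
Your proof is correct and rests on the same computation as the paper's: establishing $\phi(e_i)=u_i$ by conjugating $u_1$ with the commutor $\eta_{n-1,1}$ (the shift $\eta\,u_i\,\eta^{-1}=u_{i+1}$ via naturality of the braiding is exactly the move the paper uses). The only cosmetic difference is that you package the conclusion as $\phi\circ\iota=\id_{\tl n}$ and compose with $\phi$, whereas the paper verifies $\phi\circ\iota\circ\phi$ directly on the generators $\tau$ and $e_0$ of $\atl n$; both routes are immediate once $\phi(e_i)=u_i$ is known.
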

\begin{proof}This follows from the fact that $\phi(e_i)=u_i$:
\begin{align*}
\phi(e_i)&=\phi(\tau^i e_0\tau^{-i})=(-q)^{3i/2}\eta^i_{n-1,1}(\eta_{n-1,1}^{-1}u_1\eta_{n-1,1})(-q)^{-3i/2}\eta^{-i}_{n-1,1}\\
&=\eta_{n-1,1}^{i-1}u_1\eta_{n-1,1}^{-i+1}=u_{i-1+1}\eta_{n-1,1}^{i-1}\eta_{n-1,1}^{-i+1}=u_i.
\end{align*}
Then $\phi(\eta_{n-1,1})=\eta_{n-1,1}$ where, in the latter $\eta$, the $e_i$ are replaced by $u_i$. From these observations, $\phi\circ\iota\circ\phi(\tau)=\phi(\tau)$ follows easily and
$$\phi\circ\iota\circ\phi(e_0)=\phi(\eta_{n-1,1}^{-1}e_1\eta_{n-1,1})=\phi(\eta_{n-1,1}^{-1}\tau e_0\tau^{-1}\eta_{n-1,1})=\phi(e_0).$$
\end{proof}

The various morphisms between the Temperley-Lieb algebras produces many functors between their module categories. We shall use these to study the module categories themselves, but also to construct different classes of fusion functors on these categories.

\begin{Def}
	Again let $\iota: \tl{n} \to \atl{n}$ denote the canonical injection. The associated induction and restriction functors are
	\begin{equation}
		\Indar{}: \Mod \tl{n} \to \Mod \atl{n}, \qquad \Resar{}: \Mod \atl{n} \to \Mod \tl{n}.
	\end{equation}
Similarly the induction and restriction functors related to the morphism $\phi: \atl{n} \to \tl{n}$
	\begin{equation}
		\Indphi{}: \Mod\atl{n} \to \Mod\tl{n}, \qquad \Resphi{}: \Mod \tl{n} \to \Mod{\atl{n}}
	\end{equation}
are given by the usual functors: $\mathsf V\mapsto\ \Indphi{(\mathsf V)}= {}_{\tl n}{\tl n}_{\atl n}\otimes_{\atl n}\mathsf V$ if $\mathsf V\in\Mod\atl n$ and $\mathsf X\mapsto\ \Resphi(\mathsf X)=\Hom_{\tl n}(\tl n,\mathsf X)$ if $\mathsf X\in\Mod\tl n$. (The right $\atl n$-action on $\tl n$ is through the morphism $\phi$.) Both $(\Indar{}, \Resar)$ and $(\Indphi,\Resphi)$ are adjoint pairs.
\end{Def}

The compositions of some of these functors are simple. To study the first identity, we indicate by subscripts which action is at play. For example, because of $\phi$, $\tl n$ can be understood as both a left and a right $\atl n$-module. Consider ${}_{\tl n}{\tl n}_{\atl n} \otimes_{\atl n}{\atl n}_{\tl n}$. Since $\phi$ restricted to $\tl n$ is the identity, any element of the right factor $\atl n$ can be moved to the left one. Thus 
${}_{\tl n}{\tl n}_{\atl n} \otimes_{\atl n}{\atl n}_{\tl n}\simeq {}_{\tl n}{\tl n}_{\tl n}$ and 
$$\Indphi{} \circ \Indar{}(\mathsf X)={}_{\tl n}{\tl n}_{\atl n}\otimes_{\atl n}({\atl n}\otimes_{\tl n}\mathsf X)\simeq{}_{\tl n}\tl n\otimes_{\tl n}\mathsf X\simeq \mathsf X$$
for any $\tl n$-module $\mathsf X$, that is, $\Indphi{} \circ \Indar{} \xrightarrow{\sim} \id_{\Mod\tl{n}}$. To see the second identity, $\Resar{} \circ \Resphi{} \xrightarrow{\sim} \id_{\Mod\tl{n}}$, recall that $f\in\Hom_{\tl n}(\tl n,\mathsf X)$ is completely determined by its value at $1_{\tl n}$ and there is thus a one-to-one correspondence $f\leftrightarrow f(1_{\tl n})=x_f\in \mathsf X$. The action of $\atl n$ on $\Resphi{(\mathsf X)}=\Hom_{\tl n}(\tl n,\mathsf X)$ is given by $(af)(x)=f(x\phi(a))$, $a\in\atl n$, which corresponds to
\begin{equation}\label{eq:oneToOne}af\quad\longleftrightarrow\quad x_{af}=(af)(1_{\tl n})=f(1_{\tl n}\phi(a))=\phi(a)x_f\end{equation}
which is equal to $ax_f$ if $a$ is in $\tl n$. In other words the correspondence $f\leftrightarrow x_f$ is a $\tl n$-isomorphism and $\Resar{} \circ \Resphi{} \xrightarrow{\sim} \id_{\Mod\tl{n}}$ as claimed. The third identity, $\Indphi{}\circ\Resphi{} \xrightarrow{\sim} \id_{\Mod\tl{n}}$, follows from the surjectivity of $\phi$. For all $a\in\tl n$, there exists a $b\in\atl n$ such that $\phi(b)=a$ and, thus, any element $a\otimes_{\atl n}f\in \Indphi{}\circ\Resphi{}(\mathsf X)$ can be written as $1_{\tl n}\otimes_{\atl n}bf$. Again, because of the correspondence between elements $f\in\Hom(\tl n,\mathsf X)$ and the elements $x$ of $\mathsf X$, it follows that $\Indphi{}\circ\Resphi{}(\mathsf X)$ and $\mathsf X$ are the same as vector spaces. The actions of $\tl n$ on $\Indphi{}\circ\Resphi{}(\mathsf X)$ and $\mathsf X$ coincide because of \eqref{eq:oneToOne}. The following lemma sums up these observations.
\begin{Lem}\label{lem:functorId}The restriction and induction functors satisfy
\begin{equation}\label{eq:functorId1}
\Indphi{} \circ \Indar{} \xrightarrow{\sim}\ \Indphi{}\circ\Resphi{} \xrightarrow{\sim}\id_{\Mod\tl{n}},\end{equation}
\begin{equation}\label{eq:functorId2}
\text{\rm and}\quad\Resar{} \circ \Resphi{} \xrightarrow{\sim} \id_{\Mod\tl{n}}.
\end{equation}
\end{Lem}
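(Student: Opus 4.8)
The plan is to treat the lemma as three separate natural isomorphisms and to derive all of them from a single algebraic input: since $\phi(e_i)=u_i$ (as shown in the preceding proposition), the composite $\phi\circ\iota$ is the identity on $\tl n$, and by Proposition~\ref{prop:defphi} the map $\phi$ is surjective. With these two facts in hand, each isomorphism becomes a standard tensor--Hom manipulation, and the only genuine work lies in keeping the left/right $\tl n$- and $\atl n$-actions straight through the identifications. I would use throughout the explicit descriptions $\Indar{\mathsf X}=\atl n\otimes_{\tl n}\mathsf X$, $\Indphi{\mathsf V}={}_{\tl n}{\tl n}_{\atl n}\otimes_{\atl n}\mathsf V$, and $\Resphi{\mathsf X}=\Hom_{\tl n}(\tl n,\mathsf X)$ with $\atl n$ acting by $(af)(x)=f(x\phi(a))$.

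For $\Indphi{}\circ\Indar{}\xrightarrow{\sim}\id_{\Mod\tl n}$, I would invoke associativity of the tensor product to write
$$\Indphi{}\circ\Indar{}(\mathsf X)\simeq\big({}_{\tl n}{\tl n}_{\atl n}\otimes_{\atl n}{\atl n}_{\tl n}\big)\otimes_{\tl n}\mathsf X.$$
The heart of the matter is the bimodule identification ${}_{\tl n}{\tl n}_{\atl n}\otimes_{\atl n}{\atl n}_{\tl n}\simeq{}_{\tl n}{\tl n}_{\tl n}$ via $t\otimes_{\atl n}a\mapsto t\phi(a)$: this is visibly a left $\tl n$-isomorphism, and for the right action one checks $t\otimes_{\atl n}a\iota(s)\mapsto t\phi(a\iota(s))=t\phi(a)\phi(\iota(s))=t\phi(a)s$, so the right $\tl n$-structure is the correct one \emph{precisely because} $\phi\circ\iota=\id_{\tl n}$. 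The composite then collapses to ${}_{\tl n}\tl n_{\tl n}\otimes_{\tl n}\mathsf X\simeq\mathsf X$.

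For the two identities involving $\Resphi{}$, I would exploit the bijection $f\leftrightarrow x_f=f(1_{\tl n})$, under which the $\atl n$-action reads $x_{af}=\phi(a)x_f$. For $\Resar{}\circ\Resphi{}\xrightarrow{\sim}\id_{\Mod\tl n}$, restricting this action along $\iota$ yields $x_{\iota(t)f}=\phi(\iota(t))x_f=tx_f$, so $f\leftrightarrow x_f$ is already a $\tl n$-isomorphism. For $\Indphi{}\circ\Resphi{}\xrightarrow{\sim}\id_{\Mod\tl n}$, surjectivity of $\phi$ is the key: choosing $b$ with $\phi(b)=a$ lets one rewrite any generator $a\otimes_{\atl n}f$ of ${}_{\tl n}{\tl n}_{\atl n}\otimes_{\atl n}\Resphi{\mathsf X}$ as $1_{\tl n}\otimes_{\atl n}(bf)$, so the assignment $a\otimes_{\atl n}f\mapsto a\,f(1_{\tl n})=ax_f$ is well defined; I would then verify it is $\atl n$-balanced and left $\tl n$-linear, with its bijectivity following from the correspondence $g\leftrightarrow x_g$.

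I expect the main obstacle to be bookkeeping rather than conceptual depth: correctly tracking which of the four one-sided actions (two left, two right, over two algebras) is being used at each step, and—if one insists on the statement in full strength—verifying that each object-wise isomorphism is natural in $\mathsf X$, i.e. compatible with $\tl n$-module morphisms; this last point is immediate since all the maps above are built from functoriality of $\otimes$ and $\Hom$ together with canonical isomorphisms. The one delicate point that must not be glossed over is that the identity $\phi\circ\iota=\id_{\tl n}$ is exactly what forces the right $\tl n$-structures to survive each identification—without it the composites would recover the correct underlying vector spaces but not the $\tl n$-module structure.
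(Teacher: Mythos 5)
Your proposal is correct and follows essentially the same route as the paper: the collapse of ${}_{\tl n}{\tl n}_{\atl n}\otimes_{\atl n}{\atl n}_{\tl n}$ to ${}_{\tl n}{\tl n}_{\tl n}$ via $\phi\circ\iota=\id_{\tl n}$ for the first identity, the correspondence $f\leftrightarrow f(1_{\tl n})$ with $x_{af}=\phi(a)x_f$ for the restriction identity, and surjectivity of $\phi$ for $\Indphi{}\circ\Resphi{}$. You merely spell out a few well-definedness and naturality checks that the paper leaves implicit.
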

\begin{Prop}\label{prop:res.ind.hom}
	For $\mathsf U,\mathsf V \in \Mod\tl{n}$,
	\begin{equation}\label{eq:resphi.and.hom}
		\Hom_{\atl{n}}\left(\Resphi{\mathsf U},\Resphi{\mathsf V}\right) \simeq \Hom_{\tl{n}}\left(\mathsf U,\mathsf V\right).
	\end{equation} 
In particular, $\Resphi{\mathsf V} \simeq \Resphi{\mathsf U} $ if and only if $\mathsf U \simeq \mathsf V$. Furthermore,
	\begin{equation}\label{eq:resphi.and.hom.2}
		\Hom_{\atl{n}}\left(\Indar{\mathsf U},\Resphi{\mathsf V}\right) \simeq \Hom_{\tl{n}}\left(\mathsf U,\mathsf V\right),
	\end{equation}
and $\Indar{\mathsf U} \simeq \Indar{\mathsf V}$ if and only if $\mathsf U \simeq \mathsf V$.
\end{Prop}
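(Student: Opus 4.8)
The plan is to obtain both isomorphisms purely formally, as consequences of the two adjunctions $(\Indar{},\Resar{})$ and $(\Indphi{},\Resphi{})$ together with the functorial identities collected in Lemma~\ref{lem:functorId}; no diagrammatic computation should be needed. Throughout I would use the adjunction isomorphisms
\begin{equation*}
\Hom_{\tl n}(\Indphi{\mathsf A},\mathsf B)\simeq\Hom_{\atl n}(\mathsf A,\Resphi{\mathsf B}),\qquad
\Hom_{\atl n}(\Indar{\mathsf C},\mathsf D)\simeq\Hom_{\tl n}(\mathsf C,\Resar{\mathsf D}),
\end{equation*}
valid for $\mathsf A,\mathsf D\in\Mod\atl n$ and $\mathsf B,\mathsf C\in\Mod\tl n$ and natural in all arguments.

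For \eqref{eq:resphi.and.hom}, I would apply the first adjunction with $\mathsf A=\Resphi{\mathsf U}$ and $\mathsf B=\mathsf V$, giving
\begin{equation*}
\Hom_{\atl n}(\Resphi{\mathsf U},\Resphi{\mathsf V})\simeq\Hom_{\tl n}(\Indphi{}\circ\Resphi{\mathsf U},\mathsf V),
\end{equation*}
and then invoke the identity $\Indphi{}\circ\Resphi{}\xrightarrow{\sim}\id_{\Mod\tl n}$ of \eqref{eq:functorId1} to replace $\Indphi{}\circ\Resphi{\mathsf U}$ by $\mathsf U$, yielding $\Hom_{\tl n}(\mathsf U,\mathsf V)$. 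For \eqref{eq:resphi.and.hom.2} the argument is symmetric: applying the second adjunction with $\mathsf C=\mathsf U$ and $\mathsf D=\Resphi{\mathsf V}$ gives
\begin{equation*}
\Hom_{\atl n}(\Indar{\mathsf U},\Resphi{\mathsf V})\simeq\Hom_{\tl n}(\mathsf U,\Resar{}\circ\Resphi{\mathsf V}),
\end{equation*}
and then $\Resar{}\circ\Resphi{}\xrightarrow{\sim}\id_{\Mod\tl n}$ from \eqref{eq:functorId2} identifies $\Resar{}\circ\Resphi{\mathsf V}$ with $\mathsf V$.

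The two ``if and only if'' statements follow by applying a suitable functor to a given isomorphism. For the reverse directions, $\mathsf U\simeq\mathsf V$ trivially implies $\Resphi{\mathsf U}\simeq\Resphi{\mathsf V}$ and $\Indar{\mathsf U}\simeq\Indar{\mathsf V}$ by functoriality. For the forward directions, if $\Resphi{\mathsf U}\simeq\Resphi{\mathsf V}$ then applying $\Resar{}$ and using $\Resar{}\circ\Resphi{}\xrightarrow{\sim}\id$ gives $\mathsf U\simeq\mathsf V$, and likewise, if $\Indar{\mathsf U}\simeq\Indar{\mathsf V}$ then applying $\Indphi{}$ and using $\Indphi{}\circ\Indar{}\xrightarrow{\sim}\id$ gives $\mathsf U\simeq\mathsf V$. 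In other words, $\Resphi{}$ and $\Indar{}$ are injective on isomorphism classes precisely because they admit one-sided inverses up to natural isomorphism.

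Since the whole argument is formal, I do not expect a serious obstacle; the only points requiring care are the bookkeeping of the adjunction directions (tracking which module category each $\Hom$-group lives in) and the fact that the equivalences of Lemma~\ref{lem:functorId} are \emph{natural} isomorphisms of functors, so that they may legitimately be applied inside a $\Hom$-group and composed with the other functors. As a sanity check one may also note that $\Resphi{}$ is nothing but restriction of scalars along the surjection $\phi$: a linear map $\mathsf U\to\mathsf V$ is $\atl n$-linear for the $\phi$-twisted actions if and only if it is $\tl n$-linear, because $\phi(\atl n)=\tl n$. This reproves \eqref{eq:resphi.and.hom} directly and exhibits $\Resphi{}$ as fully faithful, hence reflecting isomorphisms, which recovers the first equivalence.
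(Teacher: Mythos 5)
Your proof is correct and follows essentially the same route as the paper: both isomorphisms are obtained from Frobenius reciprocity for the adjoint pairs $(\Indphi{},\Resphi{})$ and $(\Indar{},\Resar{})$ combined with the identities of Lemma~\ref{lem:functorId}, exactly as in the paper's proof. The only (minor) divergence is in the two ``if and only if'' claims, where you apply the one-sided inverse functors $\Resar{}$ and $\Indphi{}$ directly to a given isomorphism, while the paper argues via the isomorphisms $\Hom_{\tl n}(\mathsf U,\mathsf W)\simeq\Hom_{\tl n}(\mathsf V,\mathsf W)$ for all $\mathsf W$; both are valid, and your version is arguably the more economical, as is your closing observation that $\Resphi{}$ is restriction of scalars along the surjection $\phi$ and hence fully faithful.
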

\begin{proof}Both \eqref{eq:resphi.and.hom} and \eqref{eq:resphi.and.hom.2} are consequences of the Frobenius theorem for adjoint pairs. For example, by \eqref{eq:functorId1}
$$\Hom_{\tl n}(\mathsf U,\mathsf V)\simeq\Hom_{\tl n}(\Indphi{}\circ\Resphi{}\mathsf U,\mathsf V)\simeq\Hom_{\atl n}(\Resphi{\mathsf U},\Resphi{\mathsf V}).$$
Furthermore, if $\Resphi{\mathsf U} \simeq \Resphi{\mathsf V}$, then for all $ \mathsf W \in \Mod\tl{n}$,
	\begin{equation}
		\Hom_{\tl{n}}\left(\mathsf U,\mathsf W\right) \simeq \Hom_{\atl{n}}\left(\Resphi{\mathsf U},\Resphi{\mathsf W}\right)  \simeq \Hom_{\atl{n}}\left(\Resphi{\mathsf V},\Resphi{\mathsf W}\right) \simeq \Hom_{\tl{n}}\left(\mathsf V,\mathsf W\right).
	\end{equation}
Since the $\Hom$ functors are fully faithful, it follows that $\mathsf U \simeq \mathsf V$. The other statements are obtained by similar arguments.
\end{proof}
Finally, we note that since induction functors preserves projectivity, if $U$ is a projective $\tl{n}$-module, and $\mathsf M$ is a projective $\atl{n}$-module, then $\Indar{\mathsf U}$, and $\Indphi{\mathsf M}$ are also projective. In particular, starting from the Peirce decomposition of the regular Temperley-Lieb algebra (which is known), the functor $\Indar{}$ gives the Peirce decomposition of the affine Temperley-Lieb algebra. Such a decomposition will be obtained in section \ref{sub:Indar} when $q$ if not a root of unity.
\end{subsection}
%
%
\begin{subsection}{The fusion functors}\label{sec:fusionFunctors}
Let $R$ be a commutative ring, and $\cat$ be an $R$-linear category; a \emph{fusion product} on $\cat$ is given by a family of bifunctors $ (\blank \xf \blank)_{A,B} : \Mod \End{A} \times \Mod \End{B} \to \Mod \End{C_{A,B}} $ for objects $A,B,C_{A,B} \in \cat$, called \emph{fusion functors}. For instance, a monoidal category can be endowed with a fusion product in a very natural manner: for any pair of objects $A,B$, define
\begin{equation}
	( \blank \xf \blank)_{A,B} = \End_{\cat}\left(A\otimes B\right) \otimes_{\End{A}\otimes\End{B}} \left( \blank \otimes_{R} \blank \right),
\end{equation}
where $\otimes_{R} $ is the tensor product on free $R$-modules. A fusion product is said to be \emph{commutative} if for all objects $A,B$, $ (-_1 \xf -_2)_{A,B} \xrightarrow{\sim} (-_2 \xf -_1)_{B,A}$; this assumes of course that $C_{A,B} = C_{B,A}$. Similarly, a fusion product is said to be \emph{associative} if  there is a natural isomorphism $\left((\blank,\blank)_{A,B}, \blank \right)_{C_{A,B},D} \xrightarrow{\sim} \left(\blank,(\blank, \blank )_{B,D} \right)_{A,C_{B,D}}$ for all objects $A,B,D \in \cat$.
   
  The regular Temperley-Lieb category $\ctl$ can be equipped with a fusion product that is both associative and commutative, because it is a braided monoidal category. Our goal is to define, and compute, similar fusion products for the affine Temperley-Lieb category.
\begin{Def}\label{def:fusionFunctors}
The fusion functors $\xfone$ and $\xftwo$ are defined as
\begin{align}
  		( \blank \xfone \blank) & \equiv\ \Resphi{((\Indphi{\blank}) \xf (\Indphi{\blank})) },\label{eq:fusion1}\\
  		( \blank \xftwo \blank) & \equiv\ \Indar{((\Indphi{\blank} )\xf (\Indphi{\blank})) },\label{eq:fusion2}
\end{align}
  where $( \blank \xf \blank)$ will denote, from now on, the fusion product on the regular Temperley-Lieb category.\footnote{The functors $\xfone$ and $\xftwo$ are not functors on $\catl$, but on a category whose objects are themselves functors $\catl\to \mathsf{Vect}_{\mathbb C}$. These objects are defined in \cite{GLaffine}. A category whose objects are these functors on $\ctl$ is constructed in \cite{BSA}.}
\end{Def}
  It is relatively straightforward to show that $( \blank \xfone \blank)_{m,n}$ defines a commutative and associative fusion product on the affine Temperley-Lieb category. It is commutative because the fusion functors on $\ctl$ are, and it is associative because $\Indphi{} \circ \Resphi{}  \xrightarrow{\sim} \id_{\Mod\tl{n}}$ and the fusion product on $\ctl$ is associative. Similarly, the bifunctors $( \blank \xftwo \blank)_{m,n} $ defines a commutative and associative fusion product on the affine Temperley-Lieb category, for the same reasons.
  
  The main reason while we are interested in these particular fusion products is that their behaviour is somewhat ``bounded''. The first fusion product sends pairs of finitely-generated affine modules to a finite-dimensional module, while the second would send them to another, possibly infinite-dimensional, finitely-generated module.
  
  However, many other fusion products could be defined on this category. For instance, one could choose
  $$( \blank \xfthree \blank)_{m,n} \equiv\ \Resphi{((\Resar{\blank}) \xf (\Resar{\blank}))_{m,n} }. $$
  Since restriction functors preserve the module's dimension, if $\mathsf M$ is an infinite-dimensional affine module, say $\atl{n}$ itself, then $\Resar{\mathsf M}$ is also infinite-dimensional. Since $\tl{n}$ only has finitely many non-isomorphic indecomposable modules, all of which are finite-dimensional, it follows that $\Resar{\mathsf M}$ is a direct sum of finite modules, at least one of which must appear infinitely many times. To get a physical interpretation of this fusion, one would have to truncate these sums somehow, or only consider fusion of finite-dimensional modules. We chose to limit ourselves here to the two fusion products $( \blank \xfone \blank)$ and $( \blank \xftwo \blank)$.
\end{subsection}
\end{section}
%
\begin{section}{Modules and morphisms}\label{sec:modulesMorphism}
%

This section introduces the most common modules over both the regular algebra $\tl n$ and the affine one $\atl n$. Their $\Hom$-groups are also described.
%
%
\begin{subsection}{Modules over $\tl n$}\label{sec:modulesOfTln}
The representation theory of the algebra $\tl n$ was studied first by Goodman and Wenzl \cite{GoodWenzl93} and independently by Martin \cite{Martin}. (See also \cite{GLaffine}, \cite{WesRep95} and \cite{RSA}.) A complete list of all finite-dimensional indecomposable modules over $\tl n(\beta)$, up to isomorphisms, is known for all values of $\beta$ \cite{WesRep95, BRSA}. We shall need only a subset of these, namely the irreducible, standard and projective modules. Here is their description.

Let $n,k$ be two non-negative integers of the same parity. Let $\TheM{n,k}$ denote $\Hom_{\ctl}(k,n)$ with the left $\tl n$-action. In this $\tl n$-module, the non-monic diagrams form a submodule. The {\em standard} module $\TheS{n,k}$ is the quotient of $\TheM{n,k}$ by this submodule. If $k>n$, then all diagrams are non-monic and $\TheS{n,k}=0$. Set $\Lambda_n=\{k\in\mathbb N_0\,|\, 0\leq k\leq n\textrm{ and }k\equiv n\Mod 2\}$. Then the module $\TheS{n,k}$, $k\in\Lambda_n$, is non-zero and its dimension is expressed in terms of binomials:
\begin{equation}\label{eq:dimSnk}\dim \TheS{n,k}=\begin{pmatrix}n\\ (n-k)/2\end{pmatrix}-\begin{pmatrix}n\\ (n-k)/2-1\end{pmatrix}\end{equation}
with the convention that $\left(\begin{smallmatrix}n\\ m\end{smallmatrix}\right)=0$ if $m<0$. We identify monic diagrams in $\TheM{n,k}$ with their equivalence classes in the quotient $\TheS{n,k}$. If $q$ is not a root of unity, then $q$ is said to be {\em generic}, the algebra $\tl n(\beta)$ is semisimple, and the $\TheS{n,k},k\in\Lambda_n$, form a complete set of non-isomorphic irreducible modules over $\tl n$ and, as a left-module, $\tl n\simeq \oplus'_{0\leq k\leq n}\dim(\TheS{n,k})\cdot \TheS{n,k}$, where the $\oplus'$ indicates that the direct sum runs over $k$'s of the parity of $n$.

Let $q$ be a root of unity and $\ell\geq 2$ be the smallest integer such that $q^{2\ell}=1$. In this case, $\tl n(\beta=q+q^{-1})$ is non-semisimple (unless either $n<\ell$ or $\beta=0$ with $n$ odd), and the standard $\TheS{n,k}$ are not all irreducible. Call $k$ {\em critical} (or critical for $q$) if $k+1\equiv 0\Mod \ell$. The set $\Lambda_n$ is then partitioned as follows. If $k\in\Lambda_n$ is critical, it forms its own class in the partition. If the element $k$ is not critical, then its class $[k]$ consists of the images (in $\Lambda_n$) of $k$ generated by reflections with respect to the critical integers. If $k_c$ is a critical integer, then $2k_c-k$ is the reflection of $k$ through $k_c$.  The class of a non-critical $k$ thus contains precisely one integer between each pair of consecutive critical ones. Neighboring elements in a non-critical class $[k]$ are ordered as $k_L<\dots < k^{--}<k^-<k<k^+<k^{++}<\dots <k_R$, so that $k_L\ge 0$ and $k_R\le n$ are the smallest and largest elements in $[k]\subset\Lambda_n$. The notation $k^j$ ($k^{-j}$) is also used to refer to the $j$-th element to the right of $k$ (to its left) so that, for example, $k^{--}=k^{-2}$ and $k^{+++}=k^3$. 

If $k$ is non-critical and $k<k_R$, then $\TheS{n,k}$ is reducible but indecomposable. Let $\TheI{n,k}$ denote its irreducible quotient. Then the short sequence
\begin{equation}\label{eq:sesTheS}0\longrightarrow \TheI{n,k^+}\longrightarrow \TheS{n,k}\longrightarrow \TheI{n,k}\longrightarrow 0\end{equation}
is exact and non-split. For $k=k_R$, the standard $\TheS{n,k_R}$ is irreducible. The set of $\TheI{n,k}$ for $k\in\Lambda_n$ forms a complete set of non-isomorphic irreducible of $\tl n(\beta)$ (except for $\beta=0$ with $n$ even for which the statement holds for the set $\Lambda_n\setminus\{0\}$). The projective covers of the irreducible $\TheI{n,k}$ are denoted by $\TheP{n,k}$. If $k$ is critical, then $\TheS{n,k}=\TheI{n,k}=\TheP{n,k}$. If $k$ is non-critical, then the sequence
\begin{equation}\label{eq:sesTheP}0\longrightarrow \TheS{n,k^-}\longrightarrow \TheP{n,k}\longrightarrow \TheS{n,k}\longrightarrow 0\end{equation}
is exact. (If $k=k_L$ and thus $k^-\not\in\Lambda_n$, the $\TheS{n,k^-}$ is understood to be $0$.) The number of compositions factors of the $\TheS{n,k}$ is either one or two and the number for the projective is either one, two, three or four.
 
For $k$ non-critical, the $\Hom$-groups among these three classes of modules (irreducible, standard and projective) are given by the following table:
\begin{center}
\begin{tabular}{cc|ccc}
\multicolumn{2}{c|}{\multirow{2}{*}{$\Hom({\mathsf M},{\mathsf N})$}} & \multicolumn{3}{c}{${\mathsf N}$} \\
     &  & \ \qquad $\TheI{n,k'}$ \qquad \ & \ \qquad $\TheS{n,k'}$ \qquad \ & \ \qquad $\TheP{n,k'}$ \qquad \ \\
\hline
\multicolumn{1}{c}{\multirow{3}{*}{${\mathsf M}$}} & $\TheI{n,k}$ & $\delta_{k',k}\mathbb C{}$ & $\delta_{k',k^-}\mathbb C{}$ & $\delta_{k',k}\mathbb C{}$ \\
\multicolumn{1}{c}{}                           & $\TheS{n,k}$ & $\delta_{k',k}\mathbb C{}$ & $({\delta_{k',k}+\delta_{k',k^-}})\mathbb C{}$ & $({\delta_{k',k}+\delta_{k',k^+}})\mathbb C{}$ \\
\multicolumn{1}{c}{}                           & $\TheP{n,k}$ & $\delta_{k',k}\mathbb C{}$ & $
({\delta_{k',k}+\delta_{k',k^-}})\mathbb C{}$ & $({2\:\delta_{k',k}+\delta_{k',k^-}+\delta_{k',k^+}})\mathbb C{}$\ .
\end{tabular}
\end{center}
with the following conventions. The $\Hom$-groups involving $\TheS{n,k_R}$ are to be read in the row and column of $\TheI{n,k_R}(=\TheS{n,k_R})$, and those with $\TheP{n,k_L}$ in the row and column of $\TheS{n,k_L}(=\TheP{n,k_L})$. Finally, if $\beta=0$ and $n$ is even, the case $k=0$ for $\TheS{n,k}$ is to be read in $\TheI{n,2}(=\TheS{n,0})$ and, again, the module $\TheI{n,0}$ is then trivial.
\end{subsection}
%
%
\begin{subsection}{Modules over $\atl n$}\label{sec:affine.modules}

The definition of the basic modules over $\atl n(\beta)$ follows similar lines. Again $n$ and $k$ are non-negative integers of identical parity. Let $\TheH{n,k}$ be the left $\atl n$-module $\Hom_{\catl}(k,n)$. Again the non-monic diagrams in $\TheH{n,k}$ span a submodule and the {\em standard module} $\TheW{n,k}$ is the quotient of $\TheH{n,k}$ by these non-monic diagrams. If $k>n$, then $\TheW{n,k} = 0$, since a monic diagram in $\TheH{n,k}$ would need to have more than $n$ through lines, which is impossible. Again, we identify monic diagrams with their equivalence classes in $\TheW{n,k}$. Contrarily to the $\TheS{n,k}$ of $\tl n$, the standard modules $\TheW{n,k}$ of $\atl n$ are not finite-dimensional. For example, a basis of $\TheW{4,2}$ is given by the set
\begin{align*}
& 
\begin{tikzpicture}[scale = 1/3]
	\draw[line width = 2pt] (0,-1) -- (0,4);
	\draw[line width = 2pt] (3,-1) -- (3,4);
	\draw[dashed] (0,-1) -- (3,-1);
	\draw[dashed] (0,4) -- (3,4);
	\node[anchor = east] at (0,1.5) {$\Big\{\dots,\ \  v_1\tau=\ $};
	\foreach \s in {1,...,4}
	{	
		\filldraw[black] (0,\s -1) circle (5pt);
	};
	\filldraw[black] (3,3) circle (5pt);
	\filldraw[black] (3,2) circle (5pt);
	\draw (0,2) .. controls (1,2) and (1,3) .. (0,3);%
	\draw (0,0) .. controls (1,0) and (2,3) .. (3,3);%
	\draw (2,-1) .. controls (2,0) and (2,2) .. (3,2); %
	\draw (0,1) .. controls (1,1) and (3/2,5/2) .. (3/2,4);%
	\node[anchor = west] at (3, 1.0) {$,$};
	\end{tikzpicture} 
\begin{tikzpicture}[scale = 1/3]
	\draw[line width = 2pt] (0,-1) -- (0,4);
	\draw[line width = 2pt] (3,-1) -- (3,4);
	\draw[dashed] (0,-1) -- (3,-1);
	\draw[dashed] (0,4) -- (3,4);
	\node[anchor = east] at (0,1.25) {$v_1=\ $};
	\foreach \s in {1,...,4}
	{	
		\filldraw[black] (0,\s -1) circle (5pt);
	};
	\filldraw[black] (3,3) circle (5pt);
	\filldraw[black] (3,2) circle (5pt);
	\draw (0,2) .. controls (1,2) and (1,3) .. (0,3);
	\draw (0,0) .. controls (1,0) and (2,2) .. (3,2);
	\draw (0,1) .. controls (1,1) and (2,3) .. (3,3);
	\node[anchor = west] at (3, 3/2) {$,$};
\end{tikzpicture} 
\begin{tikzpicture}[scale = 1/3]
	\draw[line width = 2pt] (0,-1) -- (0,4);
	\draw[line width = 2pt] (3,-1) -- (3,4);
	\draw[dashed] (0,-1) -- (3,-1);
	\draw[dashed] (0,4) -- (3,4);
	\node[anchor = east] at (0,1.5) {$v_1\tau^{-1}=\ $};
	\foreach \s in {1,...,4}
	{	
		\filldraw[black] (0,\s -1) circle (5pt);
	};
	\filldraw[black] (3,3) circle (5pt);
	\filldraw[black] (3,2) circle (5pt);
	\draw (0,3) .. controls (1,3) and (1,2) .. (0,2); %
	\draw (3,3) .. controls (5/2,3) and (2,7/2) .. (2,4);
	\draw (0,0) .. controls (1/2,0) and (1,-1/2) .. (1,-1);
	\draw (0,1) .. controls (1,1) and (2,2) .. (3,2); %
	\node[anchor = west] at (3, 1.2) {$,\dots ,$};
\end{tikzpicture}  
\begin{tikzpicture}[scale = 1/3]
	\draw[line width = 2pt] (0,-1) -- (0,4);
	\draw[line width = 2pt] (3,-1) -- (3,4);
	\draw[dashed] (0,-1) -- (3,-1);
	\draw[dashed] (0,4) -- (3,4);
	\node[anchor = east] at (0,1.25)  {$ v_2\tau=\ $};
	\foreach \s in {1,...,4}
	{	
		\filldraw[black] (0,\s -1) circle (5pt);
	};
	\filldraw[black] (3,3) circle (5pt);
	\filldraw[black] (3,2) circle (5pt);
	\draw (0,1) .. controls (1,1) and (1,2) .. (0,2);
	\draw (0,0) .. controls (1,0) and (2,3) .. (3,3);
	\draw (0,3) .. controls (1/2,3) and (3/2, 7/2) .. (3/2,4);
	\draw (3/2,-1) .. controls (3/2,-1/2) and (2,2) .. (3,2);
	\node[anchor = west] at (3, 3/2) {$,$};
\end{tikzpicture}  
\begin{tikzpicture}[scale = 1/3]
	\draw[line width = 2pt] (0,-1) -- (0,4);
	\draw[line width = 2pt] (3,-1) -- (3,4);
	\draw[dashed] (0,-1) -- (3,-1);
	\draw[dashed] (0,4) -- (3,4);
	\node[anchor = east] at (0,1.25) {$v_2=\ $};
	\foreach \s in {1,...,4}
	{	
		\filldraw[black] (0,\s -1) circle (5pt);
	};
	\filldraw[black] (3,3) circle (5pt);
	\filldraw[black] (3,2) circle (5pt);
	\draw (0,2) .. controls (1,2) and (1,1) .. (0,1);
	\draw (0,3) -- (3,3);
	\draw (0,0) .. controls (1,0) and (2,2) .. (3,2);
	\node[anchor = west] at (3, 3/2) {$,$};
\end{tikzpicture}  
\begin{tikzpicture}[scale = 1/3]
	\draw[line width = 2pt] (0,-1) -- (0,4);
	\draw[line width = 2pt] (3,-1) -- (3,4);
	\draw[dashed] (0,-1) -- (3,-1);
	\draw[dashed] (0,4) -- (3,4);
	\node[anchor = east] at (0,1.25) {$v_2\tau^{-1}=\ $};
	\foreach \s in {1,...,4}
	{	
		\filldraw[black] (0,\s -1) circle (5pt);
	};
	\filldraw[black] (3,3) circle (5pt);
	\filldraw[black] (3,2) circle (5pt);
	\draw (0,0) .. controls (1/2,0) and (1,-1/2) .. (1,-1);%
	\draw (0,2) .. controls (1,2) and (1,1) .. (0,1);%
	\draw (0,3) .. controls (1,3) and (2,2) .. (3,2);%
	\draw (3,3) .. controls (5/2,3) and (2,7/2) .. (2,4);%
	\node[anchor = west] at (3, 3/2) {$,\dots,$};
\end{tikzpicture}\\  
& 
\begin{tikzpicture}[scale = 1/3]
	\draw[line width = 2pt] (0,-1) -- (0,4);
	\draw[line width = 2pt] (3,-1) -- (3,4);
	\draw[dashed] (0,-1) -- (3,-1);
	\draw[dashed] (0,4) -- (3,4);
	\node[anchor = east] at (0,1.5) {$\ \ \ \dots,\ \ v_3\tau =\ $};
	\foreach \s in {1,...,4}
	{	
		\filldraw[black] (0,\s -1) circle (5pt);
	};
	\filldraw[black] (3,3) circle (5pt);
	\filldraw[black] (3,2) circle (5pt);
	\draw (0,3) .. controls (1/2,3) and (1,7/2) .. (1,4);
	\draw (0,2) .. controls (1,2) and (2,3) .. (3,3); %
	\draw (0,1) .. controls (1,1) and (1,0) .. (0,0); %
	\draw (3/2,-1) .. controls (3/2,-1/2) and (2,2) .. (3,2);
	\node[anchor = west] at (3, 3/2) {$,$};
	\end{tikzpicture}  
\begin{tikzpicture}[scale = 1/3]
	\draw[line width = 2pt] (0,-1) -- (0,4);
	\draw[line width = 2pt] (3,-1) -- (3,4);
	\draw[dashed] (0,-1) -- (3,-1);
	\draw[dashed] (0,4) -- (3,4);
	\node[anchor = east] at (0,1.5) {$v_3=\ $};
	\foreach \s in {1,...,4}
	{	
		\filldraw[black] (0,\s -1) circle (5pt);
	};
	\filldraw[black] (3,3) circle (5pt);
	\filldraw[black] (3,2) circle (5pt);
	\draw (0,3) -- (3,3);
	\draw (0,2) -- (3,2);
	\draw (0,1) .. controls (1,1) and (1,0) .. (0,0);
	\node[anchor = west] at (3, 3/2) {$,$};
\end{tikzpicture}  
\begin{tikzpicture}[scale = 1/3]
	\draw[line width = 2pt] (0,-1) -- (0,4);
	\draw[line width = 2pt] (3,-1) -- (3,4);
	\draw[dashed] (0,-1) -- (3,-1);
	\draw[dashed] (0,4) -- (3,4);
	\node[anchor = east] at (0,1.5) {$v_3\tau^{-1}=\ $};
	\foreach \s in {1,...,4}
	{	
		\filldraw[black] (0,\s -1) circle (5pt);
	};
	\filldraw[black] (3,3) circle (5pt);
	\filldraw[black] (3,2) circle (5pt);
	\draw (3,3) .. controls (5/2,3) and (2,7/2) .. (2,4);%
	\draw (0,3) .. controls (1,3) and (2,2) .. (3,2); %
	\draw (0,1) .. controls (1,1) and (1,0) .. (0,0); %
	\draw (0,2) .. controls (1,2) and (3/2,1/2) .. (3/2,-1);%
	\node[anchor = west] at (3, 3/2) {$,\dots,$};
	\end{tikzpicture} 
\begin{tikzpicture}[scale = 1/3]
	\draw[line width = 2pt] (0,-1) -- (0,4);
	\draw[line width = 2pt] (3,-1) -- (3,4);
	\draw[dashed] (0,-1) -- (3,-1);
	\draw[dashed] (0,4) -- (3,4);
	\node[anchor = east] at (0,1.5) {$ v_4\tau=\ $};
	\foreach \s in {1,...,4}
	{	
		\filldraw[black] (0,\s -1) circle (5pt);
	};
	\filldraw[black] (3,3) circle (5pt);
	\filldraw[black] (3,2) circle (5pt);
	\draw (0,1) .. controls (1,1) and (2,3) .. (3,3); %
	\draw (0,3) .. controls (1/2,3) and (1,7/2) .. (1,4); %
	\draw (0,2) .. controls (1,2) and (3/2,7/2) .. (3/2,4); %
	\draw (2,-1) .. controls (2,0) and (2,2) .. (3,2); %
	\draw (0,0) .. controls (1/2,0) and (1,-1/2) .. (1,-1);
	\node[anchor = west] at (3, 1.0) {$,$};
\end{tikzpicture}
\begin{tikzpicture}[scale = 1/3]
	\draw[line width = 2pt] (0,-1) -- (0,4);
	\draw[line width = 2pt] (3,-1) -- (3,4);
	\draw[dashed] (0,-1) -- (3,-1);
	\draw[dashed] (0,4) -- (3,4);
	\node[anchor = east] at (0,1.5) {$v_4=\ $};
	\foreach \s in {1,...,4}
	{	
		\filldraw[black] (0,\s -1) circle (5pt);
	};
	\filldraw[black] (3,3) circle (5pt);
	\filldraw[black] (3,2) circle (5pt);
	\draw (0,3) .. controls (1/2,3) and (1,7/2) .. (1,4);
	\draw (0,0) .. controls (1/2,0) and (1,-1/2) .. (1,-1);
	\draw (0,2) .. controls (1,2) and (2,3) .. (3,3);
	\draw (0,1) .. controls (1,1) and (2,2) .. (3,2); 
	\node[anchor = west] at (3, 3/2) {$,$};
\end{tikzpicture} 
\begin{tikzpicture}[scale = 1/3]
	\draw[line width = 2pt] (0,-1) -- (0,4);
	\draw[line width = 2pt] (3,-1) -- (3,4);
	\draw[dashed] (0,-1) -- (3,-1);
	\draw[dashed] (0,4) -- (3,4);
	\node[anchor = east] at (0,1.5) {$v_4\tau^{-1}=\ $};
	\foreach \s in {1,...,4}
	{	
		\filldraw[black] (0,\s -1) circle (5pt);
	};
	\filldraw[black] (3,3) circle (5pt);
	\filldraw[black] (3,2) circle (5pt);
	\draw (0,3) .. controls (1/2,3) and (1,7/2) .. (1,4);
	\draw (3,3) .. controls (5/2,3) and (2,7/2) .. (2,4);
	\draw (0,0) .. controls (1/2,0) and (1,-1/2) .. (1,-1);
	\draw (0,1) .. controls (1,1) and (3/2,1/2) .. (3/2,-1);
	\draw (0,2) -- (3,2);
	\node[anchor = west] at (3, 1.) {$,\ \ \dots\Big\}$};
\end{tikzpicture} 
\end{align*}
A detailed study of the property of these modules for $k =0$ and $1$ was obtained in \cite{QasimiStokman}, but the module's structure for $k \geq 2 $ is still largely unknown.

For an invertible $z\in\ringK$ and a positive $k\in\Lambda_n$, the map $f_z: x \to x \circ(\tau - z\,\id)$ defines an endomorphism of $\TheW{n,k}$. The \emph{cell module} $\TheW{n,k;z}$ is defined as the cokernel of this morphism. For $k=0$ and again $z$ an invertible element in $\ringK$, the endomorphism $g$ of $\TheW{n,0}$ is the map adding a non-contractible loop to any element; the cell module $\TheW{n,0;z}$ is then the cokernel of $g - (z+z^{-1})\id $. The dimension of the cell module is
\begin{equation}\label{eq:dimWnkz}\dim\TheW{n,k;z}=\begin{pmatrix}n\\ (n-k)/2\end{pmatrix}.\end{equation}

To describe the structure of the cell modules $\TheW{n,k;z}$, we follow Graham and Lehrer \cite{GLaffine}. Let $\paires$ the set of pairs $(k,z)$ with $k$ a non-negative integer, $z$ an invertible element in $\ringK$, quotiented by the identification $(0,z)\sim(0,z^{-1})$. Let also 
$$\paires_n=\{(k,z)\in \paires\ |\ k\leq n\textrm{ and }k\equiv n\Mod 2\}$$
except if $q+q^{-1}=0$ and $n$ is a non-zero even integer, in which case the pair $(0,q)$ is removed. On standard modules, the \emph{Gram form} is the bilinear form 
\begin{equation}
	\langle - , - \rangle_{n,k}^{\mathsf W}: \TheW{n,k}\times \TheW{n,k} \to \TheW{k,k}, \qquad \langle x , y \rangle_{n,k}^{\mathsf W} = y^{t}\circ x,
\end{equation}
where $x^{t}$ is the involution of $x$ naturally obtained from the involution on morphisms. Note that for all $f \in \atl{n}$, 
\begin{equation}
	\langle f x , y \rangle_{n,k}^{\mathsf W} = y^{t} \circ f x = (f^{t}y)^{t}x = \langle x , f^{t}y \rangle_{n,k}^{\mathsf W},
\end{equation}
so that the radical $\TheR{n,k}$ of this form is a submodule. Because $\tau^t=\tau^{-1}$, the Gram form pairs distinct cell modules:
\begin{equation}
	\langle - , - \rangle_{n,k;z}^{\mathsf W}: \TheW{n,k;z}\times \TheW{n,k;z^{-1}} \to \ringK[z], \qquad \langle x , y \rangle_{n,k;z}^{\mathsf W} = y^{t}\circ x.
\end{equation}
The radical $\TheR{n,k;z}$ in $\TheW{n,k;z}$ of this form also defines a submodule. Let $\TheL{n,k;z}$ be the quotient of $\TheW{n,k;z}/\TheR{n,k;z}$.
%
%
\begin{Thm}[Thm.~(2.8) of \cite{GLaffine}]Let $K=\mathbb C$, $n$ a non-negative integer and $q\in\mathbb C^\times$.
\begin{itemize}
\item[(1)] For $(k,z)$ and $(j,y)\in\paires$, $\mathsf N$ a $\atl n$-submodules of $\TheW{n,k;z}$ and $f:\TheW{n,j;y}\rightarrow\TheW{n,k;z}/\mathsf N$ a non-zero morphism. Then $j\geq k$. Moreover, if $j=k$, then $\Hom(\TheW{n,j;y},\TheW{n,k;z}/\mathsf N)\simeq \mathbb C$.
\item[(2)] The radical of $\TheW{n,k;z}$ is the radical of the form $\langle - , - \rangle_{n,k;z}^{W}$.
\item[(3)] The family of $\TheL{n,k;z}$ indexed by $(k,z)\in\paires_n\,$ form a complete set of non-isomorphic irreducible $\atl n$-modules.
\end{itemize}
\end{Thm}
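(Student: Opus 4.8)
The plan is to follow the cellular-algebra strategy of Graham and Lehrer adapted to the affine setting, relying on three ingredients already at hand: the through-line count ``$b\circ a$ has at most $\min(r,s)$ through lines,'' the contravariance $\langle fx,y\rangle=\langle x,f^ty\rangle$ of the Gram form (so that its radical is a submodule), and the fact that each standard or cell module is cyclically generated over $\atl n$ by the class $w_k$ of a single monic $(k,n)$-diagram. The essential reduction throughout is to pass from the infinite-dimensional standard modules $\TheW{n,k}$ to the finite-dimensional cell modules $\TheW{n,k;z}$ by fixing the twist $z$.

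For (1), I would exploit that in $\TheW{n,k;z}$ every element is obtained from $w_k$ by acting with $(n,n)$-diagrams, any action dropping the through-line count below $k$ yielding a non-monic diagram, hence $0$. Given a nonzero $f:\TheW{n,j;y}\to\TheW{n,k;z}/\mathsf N$, the image $f(w_j)$ is nonzero and generates $\im f$, so it is the class of a combination of monic $(k,n)$-diagrams. Composing on the left with a suitable epic $(n,j)$-diagram, a transpose of a generator, isolates the top, $j$-through-line content; by the through-line inequality this composite can be nonzero only if $j\ge k$, giving the first claim. When $j=k$, the same bookkeeping forces $f(w_k)$ to be a scalar multiple of the generator of the target cell, while the value of the central translation operator ($x\mapsto x\circ\tau$ for $k\ge1$, the non-contractible-loop operator for $k=0$) must agree on source and target, forcing $y=z$; the one remaining scalar gives $\Hom\simeq\mathbb C$.

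For (2), the point is to show that $\TheW{n,k;z}/\TheR{n,k;z}=\TheL{n,k;z}$ is irreducible whenever the form is nonzero, equivalently that every proper submodule is contained in $\TheR{n,k;z}$. Take $x\in\TheW{n,k;z}$ with $x\notin\TheR{n,k;z}$; then $\langle x,y\rangle\ne0$ for some $y\in\TheW{n,k;z^{-1}}$, where $\langle x,y\rangle=y^t\circ x\in\ringK[z]$. The diagram $y^t$ is an epic $(n,k)$-diagram, and $y^t$ followed by a monic generator realizes, up to the nonzero scalar $\langle x,y\rangle$, an element of $\atl n$ carrying $x$ back onto $w_k$. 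Hence $\atl n\cdot x$ meets $w_k$, so it equals $\TheW{n,k;z}$ modulo the radical; any submodule containing such an $x$ is therefore everything modulo $\TheR{n,k;z}$, and a proper submodule must lie inside $\TheR{n,k;z}$. This is precisely the cellular statement that the form radical is the module radical.

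For (3), irreducibility of each nonzero $\TheL{n,k;z}$ is (2); they are pairwise non-isomorphic because an isomorphism $\TheL{n,k;z}\xrightarrow{\sim}\TheL{n,j;y}$ forces $j\ge k$ and $k\ge j$ by (1), hence $j=k$ and then $y=z$ by the eigenvalue matching. For completeness, given a finite-dimensional irreducible $S$, let $k$ be the largest through-line count for which some $(n,n)$-diagram acts nonzero on $S$; the ideal of diagrams with fewer through lines then acts as zero, and $S$ becomes a quotient of $\TheW{n,k}$, where the central twist acts with an eigenvalue (finite dimension guarantees one) fixing $z$, so $S\simeq\TheL{n,k;z}$. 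Tracking which pairs yield a nonzero form gives the index set $\paires_n$, including the deletion of $(0,q)$ when $\beta=0$ with $n$ a nonzero even integer, for which $\TheL{n,0;q}=0$. The main obstacle is exactly that $\atl n$ and the $\TheW{n,k}$ are infinite-dimensional, so the classical finite cellular-algebra theorems cannot be invoked off the shelf: the real effort is the passage to the finite-dimensional cells via the twist $z$, and the verification in (2) that, after this reduction, the contravariant form still governs the entire submodule lattice.
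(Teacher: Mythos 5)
The paper offers no proof of this statement: it is imported verbatim as Theorem (2.8) of Graham and Lehrer \cite{GLaffine}, so there is nothing in-paper to compare your argument against. What you have written is a reconstruction of the standard cellular-algebra strategy, which is indeed the route of the cited source. Parts (1) and (2) of your sketch are essentially sound once the ``composing with an epic diagram'' step is phrased via the element $p_j=x_{n,j}\circ y_{n,j}\in\atl n$ (with $y_{n,j}\circ x_{n,j}=\id$, as in the paper's Proposition \ref{prop:hom.h.and.w}): $f(w_j)=p_jf(w_j)$ and the through-line inequality force $j\ge k$; for $j=k$ the monic part of $\atl k$ is $\ringK[\tau^{\pm1}]$, which acts by powers of $z$ on the cell module, so $f(w_k)\in\ringK\,(w_k+\mathsf N)$ and $\Hom$ is one-dimensional; and for (2), $x_{n,k}\circ y^t$ carries any element outside the form radical back onto the cyclic generator, so $\TheR{n,k;z}$ is the unique maximal submodule.

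Part (3) has two concrete problems. First, ``let $k$ be the \emph{largest} through-line count for which some $(n,n)$-diagram acts nonzero on $S$'' is backwards: the identity has $n$ through lines and never acts by zero, so this always returns $n$. You want the \emph{smallest} $k$ such that the two-sided ideal $J_k$ spanned by diagrams with at most $k$ through lines acts nonzero, so that $J_{k-2}S=0$ while $J_kS=S$, which is what exhibits $S$ as a quotient of copies of $\TheW{n,k}$. Second, you restrict to finite-dimensional irreducibles without justification, whereas the theorem classifies all irreducible $\atl n$-modules and $\atl n$ is infinite-dimensional; one needs to argue that every irreducible is finite-dimensional (for instance, Dixmier's version of Schur's lemma makes the central element act by a scalar on any irreducible of countable dimension, and $\atl n$ is module-finite over the central subalgebra generated by $\Lambda$, as the paper's Appendix A notes), or else state the theorem for finite-dimensional modules only. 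Finally, the case $\beta=0$, $k=0$ — precisely the one responsible for deleting $(0,q)$ from $\paires_n$ — cannot be run with your $y_{n,0}$, which carries a factor $1/\beta$; it requires the separate treatment the paper itself gives in the proof of Proposition \ref{prop:hom.h.and.w}.
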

\noindent Let $q\in\mathbb C^\times$ and $(k,z)$ and $(j,y)\in\paires$. The pair $(j,y)$ is said to succeed $(k,z)$ if there exists a non-negative integer $m$ such that $j=k+2m$ and either
\begin{equation}\label{eq:AandB}\textrm{(A)}\ \ z^2=(-q)^j\textrm{\ \ and\ \ }y=z(-q)^{-m}\qquad\textrm{or}\qquad
\textrm{(B)}\ \ z^2=(-q)^{-j}\textrm{\ \ and\ \ }y=z(-q)^{m}.\end{equation}
Let $\preceq$ be the weakest partial order on $\paires$ that contains $(k,z)\preceq (j,y)$ for all successors $(j,y)$ of $(k,z)$.
\begin{Thm}[(3.4) and (5.1) of \cite{GLaffine}]\label{thm:GL34}
Let $k,j,n$ be integers such that $0\leq k<j\leq n$ and $q\in\mathbb C^\times$.
\begin{itemize}
\item[(1)] $\Hom(\TheW{n,j;y},\TheW{n,k;z})\simeq \mathbb C$ if $(k,z)\preceq (j,y)$ and $\simeq 0$ otherwise.
\item[(2)] The multiplicity of the irreducible module $\TheL{n,j;y}$ in $\TheW{n,k;z}$ is one if $(k,z)\preceq (j,y)$ and zero otherwise.
\end{itemize}
\end{Thm}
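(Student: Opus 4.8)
The plan is to deduce both statements from the cellular structure of $\atl{n}$ together with Theorem~(2.8) of \cite{GLaffine}, establishing (1) first and obtaining (2) as a consequence. Throughout, $j>k$ is fixed, so the coarse part of (1)---that a nonzero morphism forces $j\geq k$---is already supplied by Theorem~(2.8)(1); what remains is to determine \emph{exactly} which eigenvalue data $(j,y)$ admit a nonzero map into $\TheW{n,k;z}$, and to see that the space of such maps is at most one-dimensional. A morphism $\TheW{n,j;y}\to\TheW{n,k;z}$ is determined by the image $v$ of a generator $w$ of the cyclic module $\TheW{n,j;y}$, and $v$ must be annihilated by everything that annihilates $w$ and must carry the twist-eigenvalue fixed by $y$. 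Thus the morphism exists precisely when $\TheW{n,k;z}$ contains a vector with the ``$(j,y)$-highest-weight'' property dictated by the presentation of $\TheW{n,j;y}$ as a quotient of $\TheW{n,j}$.

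For the existence half of (1) I would reduce to the covering relations of $\preceq$: it suffices to exhibit one nonzero homomorphism for each successor $(j,y)$ of $(k,z)$, i.e. for a single $m$ with $j=k+2m$ satisfying (A) or (B), and then compose along chains of successors to cover the whole order. The required map is constructed diagrammatically, sending $w$ to the class in $\TheW{n,k;z}$ of the half-diagram that joins the $j$ through-lines down to $k$ through-lines by $m$ nested cups, weighted by a geometric series in $z$ (equivalently in powers of $-q$) chosen so that the right $\tau$-action reproduces the eigenvalue $y$. Checking that this assignment respects the defining relations and descends to the cokernel is a direct computation that invokes the eigenvalue relation $z^2=(-q)^{\pm j}$ at exactly the point where the cups are inserted.

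For the vanishing half, suppose $(k,z)\not\preceq(j,y)$; I would argue that no candidate image $v$ can exist. Such a $v$ would be a twist-eigenvector whose eigenvalue, computed from $y$, must also be realizable inside $\TheW{n,k;z}$, where the available eigenvalues are governed by $z$ and by the cabling that lowers the through-line count by $2m$. Matching the two forces precisely one of the relations (A) or (B), hence $(k,z)\preceq(j,y)$, a contradiction. When a nonzero $v$ does exist, its uniqueness up to scalar---and thus $\Hom\simeq\mathbb{C}$---follows from the multiplicity-one clause of Theorem~(2.8)(1) applied to the submodule of $\TheW{n,k;z}$ generated by $v$, whose top cell label is $(j,y)$.

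Finally, (2) is extracted from (1) using Theorem~(2.8)(2), by which the radical $\TheR{n,k;z}$ is the radical of the Gram form, so that the Loewy structure of $\TheW{n,k;z}$ is controlled by homomorphisms between cell modules. The decomposition matrix is triangular with respect to $\preceq$ (a composition factor $\TheL{n,j;y}$ of $\TheW{n,k;z}$ forces $(k,z)\preceq(j,y)$), and an induction over the finite poset $\paires_n$ then shows each such factor occurs with multiplicity exactly one: since $\TheW{n,j;y}$ has simple head $\TheL{n,j;y}$, any multiplicity greater than one would produce a two-dimensional Hom space between cell modules, contradicting (1). The main obstacle is the vanishing half of (1): teasing out both branches (A) and (B) with the correct powers of $-q$ from the eigenvalue analysis, and ruling out any spurious homomorphism, is the delicate heart of the argument.
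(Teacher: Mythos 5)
A point of order first: the paper does not prove this statement at all. It is imported verbatim from Graham and Lehrer \cite{GLaffine} (their results (3.4) and (5.1)); the only in-paper commentary is the remark that Graham and Lehrer construct the morphism $\TheW{n,j;y}\to\TheW{n,k;z}$ explicitly when $(j,y)$ is an immediate successor of $(k,z)$. So there is no internal proof to compare against, and your proposal must be judged as a reconstruction of the Graham--Lehrer argument. As such, your outline of part (1) is the right shape --- reduce existence to covering relations of $\preceq$, build the map diagrammatically by capping $2m$ through-lines and tuning coefficients so the $\tau$-eigenvalues match, and get uniqueness from the multiplicity-one clause of Theorem (2.8)(1) --- but the two genuinely hard steps (verifying the explicit morphism descends to the cokernel of $f_z$, and the exhaustive eigenvalue analysis showing that only (A) and (B) survive) are acknowledged rather than carried out.

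The concrete gap is in your deduction of (2) from (1). For a module $W$ with simple head $L$ and $\End(L)\simeq\mathbb C$, left-exactness of $\Hom(W,-)$ along a composition series of a finite-length module $M$ gives
\begin{equation*}
\dim\Hom(W,M)\ \le\ \sum_{L'}[M:L']\cdot\dim\Hom(W,L')\ =\ [M:L],
\end{equation*}
i.e.\ the inequality runs from the Hom space \emph{up to} the multiplicity, not the reverse. A composition factor $\TheL{n,j;y}$ occurring twice in $\TheW{n,k;z}$ is therefore perfectly consistent with $\Hom(\TheW{n,j;y},\TheW{n,k;z})\simeq\mathbb C$: the second copy need not sit in any quotient of $\TheW{n,k;z}$ reachable as the image of a map from the cell module. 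Your argument correctly yields the ``zero otherwise'' half of (2) (from cellular triangularity and the coarse part of Theorem (2.8)) and ``multiplicity at least one'' (from the existence half of (1)), but ``exactly one'' does not follow formally from $\dim\Hom\le 1$. This is precisely why (5.1) is a separate, and substantially harder, theorem than (3.4) in \cite{GLaffine}.
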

\noindent Graham and Lehrer's results are even stronger than what is stated here as they construct explicitly the (injective) morphism $\TheW{n,j;y}\to\TheW{n,k;z}$ for $(j,y)$ an immediate successor of $(k,z)$.
It is not too hard to show that, when $q$ is not a root of unity, there is at most one element $(j,y)$ that succeeds (strictly) the element $(k,z)$. If $q$ is a root of unity and there is an element succeeding the pair $(k,z)$, then there are infinitely many of them in $\paires$. Because of the requirement that $m$ be non-negative in conditions (A) and (B), a (strict) successor $(j,y)$ of $(k,z)$ has always $j>k$. The numbers of successors of $(k,z)$ in $\paires_n$ is thus always finite.

Generically the solutions of the conditions (A) and (B) in \eqref{eq:AandB} are distinct and it may seem that the number of successors may increase exponentially: there are $2$ successors of $(k,z)$, there are $4$ successors of these successors, and so on. However one can show that there are at most two distinct successors of the successors. Appendix \ref{app:wpo} proves this in full generality and computes the dimensions of the irreducible $\atl{}$-modules. However, for the task of computing fusion, we shall need an explicit form of these successors only when $z$ is a (half-)integer power of $-q$. So let
$$z_k=(-q)^{k/2}.$$
The next result describes the weakest partial order when $z$ is equal to $z_k$ for some $k$.

\begin{Prop}\label{prop:weakest}Let $q$ be a root of unity other than $\pm 1$ and $\ell\geq 2$ be the smallest positive integer such that $q^{2\ell}=1$. If $k$ is non-critical, then the successors of $(k_0,u_0)=(k,z_{k+2})$ in the weakest partial order $\preceq$ are organized as 
\begin{equation*}
\begin{tikzpicture}[baseline={(current bounding box.center)},scale=0.5]
\node (k0) at (0,12) [] {$(k_0,u_0)$};
\node (j0) at (0,8) [] {$(j_0,y_0)$};
\node (k1) at (0,4) [] {$(k_1,u_1)$};
\node (j1) at (0,0) [] {$(j_1,y_1)$};
\node (i0) at (4,8) [] {$(i_0,x_0)$};
\node (h0) at (4,4) [] {$(h_0,v_0)$};
\node (i1) at (4,0) [] {$(i_1,x_1)$};
\node (k2) at (0,-2) [] {$\ $};
\node (h1) at (4,-2) [] {$\ $};
\node (mid) at (2,-2) [] {$\ $};
\draw[<-] (k0) -- (j0);\draw[<-] (k0) -- (i0);
\draw[<-] (j0) -- (k1);\draw[<-] (j0) -- (h0);
\draw[<-] (i0) -- (k1);\draw[<-] (i0) -- (h0);
\draw[<-] (k1) -- (j1);\draw[<-] (k1) -- (i1);
\draw[<-] (h0) -- (j1);\draw[<-] (h0) -- (i1);
\draw[dashed,<-] (j1) -- (k2);\draw[dashed,<-] (j1) -- (mid);
\draw[dashed,<-] (i1) -- (mid);\draw[dashed,<-] (i1) -- (h1);
\draw[thick,dotted] (0.5,13) -- (-1.5,13) -- (-1.5,7.75) -- (3.5,2.75) -- (5.5,2.75) -- (5.5,8) -- (0.5,13);
\draw[thick,dotted] (0.5,5) -- (-1.5,5) -- (-1.5,-0.0) -- (0.5,-2.);
\draw[thick,dotted] (5.5,-2) -- (5.5,0) -- (0.5,5);\end{tikzpicture}
\end{equation*}
where $(k,z)\leftarrow(j,y)$ stands for $(k,z)\prec(j,y)$, a vertical arrow means that the pair joined solves the condition (A) and a diagonal one the condition (B) of \eqref{eq:AandB}. The pairs appearing in the partial order are
\begin{equation*}\left.
\begin{aligned}
(k_a,u_a)&= (k+2\ell a,z_{k+2+2\ell a})\\
(j_a,y_a)&=(k+2+2\ell a,z_{k+2\ell a})\\
(i_a,x_a)&=(k^++2\ell a,z_{k^++2+2\ell a})\\
(h_a,v_a)&=(k^++2+2\ell a,z_{k^++2\ell a})
\end{aligned}\right\}\quad \textrm{for }a\geq 0.
\end{equation*}

If $k$ is critical, then the partial order is
\begin{equation*}
\begin{tikzpicture}[baseline={(current bounding box.center)},scale=0.5]
\node (k0) at (0,0) [] {$(k_0,u_0)$};
\node (g0) at (4,0) [] {$(j_0,y_0)$};
\node (k1) at (8,0) [] {$(k_1,u_1)$};
\node (g1) at (12,0) [] {$(j_1,y_1)$};
\node (k2) at (15,0) [] {$\ $};
\draw[->] (k0) -- (g0);\draw[->] (g0) -- (k1);
\draw[->] (k1) -- (g1);\draw[dashed,->] (g1) -- (k2);
\end{tikzpicture}
\end{equation*}
with the pairs $(k_a,u_a)$ and $(j_a,y_a)$ above.
\end{Prop}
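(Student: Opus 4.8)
The plan is to convert the succession rule \eqref{eq:AandB} into congruences for the exponents of $-q$, and then, for each pair listed in the statement, to solve those congruences for the smallest admissible $m$; the immediate successors so obtained are precisely the arrows of the two diagrams, and the full filter generated by $(k_0,u_0)$ is recovered by checking that the listed set is closed under succession. First I would fix the arithmetic setup. Every pair occurring below has the form $(p,z_s)$ with $s\equiv p\pmod 2$, and since $z_s=(-q)^{s/2}$ we have $z_s^2=(-q)^s$ and $z_s=z_{s'}$ whenever $4\ell\mid s-s'$, because $(-q)^{2\ell}=q^{2\ell}=1$. With this, condition (A) says that $(p,z_s)$ admits the successor $(p+2m,\,z_{s-2m})$ exactly when $(-q)^s=(-q)^{p+2m}$, i.e.\ $s\equiv p+2m\pmod{2\ell}$, while condition (B) gives the successor $(p+2m,\,z_{s+2m})$ exactly when $s\equiv-(p+2m)\pmod{2\ell}$, in both cases for $m\ge 0$. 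Writing $s=p+2d$, these reduce to $m\equiv d\pmod\ell$ for (A) and $m\equiv-(p+d)\pmod\ell$ for (B), so the successors of $(p,z_{p+2d})$ depend only on $d$ and on $p\bmod\ell$. All pairs in the statement have $d=+1$ (the $k_a$ and $i_a$) or $d=-1$ (the $j_a$ and $h_a$).

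Next I would carry out the per-type computation. Set $r=(k+1)\bmod\ell$, so that $k$ is critical iff $r=0$, and in the non-critical case $k^+=k+2(\ell-r)$, whence $k^+\equiv-2-k\pmod\ell$. For a $d=+1$ pair $(p,z_{p+2})$, rule (A) has least solution $m=1$, producing the $d=-1$ pair $(p+2,z_p)$, while rule (B) has least solution $m\equiv-(p+1)\pmod\ell$; for a $d=-1$ pair $(p,z_{p-2})$, rule (A) has least solution $m=\ell-1$, producing a $d=+1$ pair, while (B) again has least solution $m\equiv-(p+1)\pmod\ell$. Feeding the four types $(k_0,u_0),(j_0,y_0),(i_0,x_0),(h_0,v_0)$ through these two rules yields exactly the eight arrows of the first diagram: the (A)-arrows $k_a\to j_a$, $j_a\to k_{a+1}$, $i_a\to h_a$, $h_a\to i_{a+1}$ (least $m\in\{1,\ell-1\}$), and the (B)-arrows $k_a\to i_a$, $i_a\to k_{a+1}$, $j_a\to h_a$, $h_a\to j_{a+1}$. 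The one point deserving care is that the diamond closes, i.e.\ that $j_a$ and $i_a$ share the successors $k_{a+1}$ and $h_a$; this is exactly where the reflection identity $k^+\equiv-2-k\pmod\ell$ enters, since it interchanges the two least (B)-solutions $\ell-r$ (for pairs sitting over the $k$-column) and $r$ (for pairs over the $k^+$-column), forcing both $i_0\to k_1$ and $j_0\to h_0$ to land on the correct targets.

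I would then verify closure and reachability. Solving (A) and (B) for all admissible $m$ (not just the least) shows that every direct successor of a pair in $P=\{k_a,j_a,i_a,h_a:a\ge0\}$ again lies in $P$, the larger values of $m$ merely increasing $a$; hence $P$ contains the filter generated by $(k_0,u_0)$, and conversely the eight cover relations reach every element of $P$ from $(k_0,u_0)$, so $P$ is that filter and all extra succession relations are consequences of transitivity. In the critical case $r=0$ the (B)-congruence becomes $m\equiv0\pmod\ell$, whose least nonnegative solution $m=0$ returns the pair itself; the first nontrivial value $m=\ell$ then yields only the skip relations $k_a\prec k_{a+1}$ and $j_a\prec j_{a+1}$, which are implied by transitivity through the (A)-arrows. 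Thus in the critical case the nontrivial covers come solely from (A), giving the single chain $(k_0,u_0)\prec(j_0,y_0)\prec(k_1,u_1)\prec\cdots$ of the second diagram. Finally one must confirm that the listed pairs are pairwise distinct so that neither diagram degenerates; this follows from comparing first coordinates modulo $2\ell$ together with the $z$-indices modulo $4\ell$, using $q\neq\pm1$ (for instance, when $k^+=k+2$ one still has $z_{k+4}\neq z_k$ since $q^2\neq1$).

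The main obstacle is the bookkeeping of the modular exponent arithmetic rather than any conceptual difficulty: one must consistently track the $z$-index only modulo $4\ell$ (so that apparent discrepancies such as $z_{k-2\ell+2}$ versus $z_{k+2+2\ell}$ are resolved by $(-q)^{2\ell}=1$), identify the correct least positive $m$ in each residue class, and keep the two roles of $r$ and $\ell-r$ straight across the reflection $k\mapsto k^+$. Once this is organized, the two diagrams fall out directly from the four elementary rules derived in the first step.
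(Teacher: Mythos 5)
Your proposal is correct and follows essentially the same route as the paper: the published proof is a ``direct computation'' that solves conditions (A) and (B) for each family (it works out $(j_a,y_a)$ and $(i_a,x_a)$ explicitly and declares the rest similar), and your congruence reformulation $m\equiv d$ resp.\ $m\equiv-(p+d)\pmod\ell$ is just a uniform packaging of those same computations. Your additional checks (closure of the listed set under succession, the collapse to a chain in the critical case, and distinctness of the pairs via $q\neq\pm1$) are correct and in fact tie up details the paper leaves implicit.
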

\begin{proof}The result follows from direct computations of which we give two examples. First the pairs $(j_a,y_a)$ are obtained by solving condition (A) starting with $(k_0,u_0)=(k,z_{k+2})$. This condition is $u_0^2=(-q)^{k+2}=(-q)^{j}$. The smallest $j_0$ is then $j_0=k+2$ with $y_0=z_{k+2}(-q)^{-1}=z_k$. It is also clear that, since $q^{2\ell}=1$, the general solution is $j_a=k+2+\ell a$ and $y_a=z_{k+2\ell a}$ with $a\in\mathbb N_{\geq 0}$.

Second, if $k$ is non-critical, there are (unique) integers $o$ and $p$ such that $k=o\ell -1-p$ with $1\leq p\leq \ell-1$. Then the condition (B) gives the pair $(i_a,x_a)$. The integer $i$ (or $i_a$) must solve $u_0^2=(-q)^{-i}$, that is
$$(-q)^i=(-q)^{-k-2}=(-q)^{-ol+1+p-2}=(-q)^{k^+-2o\ell}$$
where $k^+=o\ell-1+p$ is the reflection of $k$ through the first critical line on its right. (See section \ref{sec:modulesOfTln}.) Hence the general solution is $i_a=k^++2\ell a$ for $a\geq 0$ with
$$x_a=u_0(-q)^{(k^++2\ell a-k)/2}=(-q)^{k/2+1+(k^++2\ell a-k)/2}=z_{k^++2+2\ell a}.$$
The family $(h_a,v_a)$ and the weakest partial order when $k$ is critical are obtained similarly.
\end{proof}

This section ends with the description of the $\Hom$-groups between the families $\TheH{n,k}$ and $\TheW{n,k}$ of modules.
\begin{Prop}\label{prop:hom.h.and.w}
For all $n\geq  k,k' \geq 0$, the following isomorphims hold as $(\atl{k},\atl{k'})$-bimodules:
\begin{align}\label{eq:hom.theh}
		\Hom_{\atl{n}}\left( \TheH{n,k}, \TheH{n,k'}\right) & \simeq \TheH{k,k'},\\
	\label{eq:hom.theh.to.thew}
		\Hom_{\atl{n}}\left( \TheH{n,k}, \TheW{n,k'}\right) & \simeq \TheW{k,k'},\\
	\label{eq:hom.thew.to.thew.1}
		\Hom_{\atl{n}}\left( \TheW{n,k}, \TheW{n,k'}\right)& \simeq \Hom_{\atl{k}}\left( \TheW{k,k}, \TheW{k,k'}\right).
\end{align}
\end{Prop}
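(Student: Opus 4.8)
The plan is to reduce all three isomorphisms to one structural fact: for $n\ge k$ the left $\atl n$-module $\TheH{n,k}=\Hom_\catl(k,n)$ is \emph{cyclic}. Fix the monic $(n,k)$-diagram $\iota\in\Hom_\catl(k,n)$ whose $k$ through lines run straight across and whose remaining $n-k$ left vertices are joined in nested, non-winding pairs, and let $\iota^t\in\Hom_\catl(n,k)$ be its involution. Then $\iota^t\circ\iota=\beta^{(n-k)/2}\,1_{\End{k}}$, so when $\beta\neq0$ the element $P=\beta^{-(n-k)/2}\,\iota\circ\iota^t\in\End_\catl(n)=\atl n$ satisfies $P\circ\iota=\iota$, and every $d\in\Hom_\catl(k,n)$ can be written $d=\beta^{-(n-k)/2}\,(d\circ\iota^t)\circ\iota$ with $d\circ\iota^t\in\atl n$; hence $\TheH{n,k}=\atl n\cdot\iota$. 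The main obstacle is precisely the degenerate case $\beta=0$, where $\iota^t\circ\iota=0$ and this splitting collapses; I expect to treat it separately, either by pairing the extra vertices along a non-contractible arc so that no contractible loop is produced on composing with $\iota^t$, or by specializing from generic $q$. The rank bookkeeping needed below is routine by comparison.

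For \eqref{eq:hom.theh} I introduce the precomposition map $\Phi\colon\TheH{k,k'}=\Hom_\catl(k',k)\to\Hom_{\atl n}(\TheH{n,k},\TheH{n,k'})$, $g\mapsto f_g$ with $f_g(x)=x\circ g$. A direct check shows $f_g$ is left $\atl n$-linear and that $\Phi$ intertwines the left $\atl k$-action and the right $\atl{k'}$-action, so it is a bimodule homomorphism. It is injective because $f_g(\iota)=\iota\circ g$ and $\iota$ is a monomorphism. Surjectivity is where cyclicity is used: given $f$, left-linearity together with $P\circ\iota=\iota$ gives $f(\iota)=P\circ f(\iota)=\iota\circ g$, where $g:=\beta^{-(n-k)/2}\,\iota^t\circ f(\iota)\in\Hom_\catl(k',k)$; then for any $x=a\circ\iota$ we get $f(x)=a\circ f(\iota)=a\circ\iota\circ g=x\circ g$, so $f=\Phi(g)$.

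Isomorphism \eqref{eq:hom.theh.to.thew} follows by repeating this argument with the target replaced by the quotient $\TheW{n,k'}$. Precomposition descends to $\Phi'\colon\TheW{k,k'}\to\Hom_{\atl n}(\TheH{n,k},\TheW{n,k'})$, $\bar g\mapsto f_{\bar g}$ with $f_{\bar g}(x)=\overline{x\circ g}$; this is well defined since a non-monic $g$ makes $x\circ g$ non-monic, so its class vanishes. The same computation $f(\iota)=P\circ f(\iota)=\iota\circ\bar g$ with $\bar g=\beta^{-(n-k)/2}\,\iota^t\circ f(\iota)\in\TheW{k,k'}$ shows $f=\Phi'(\bar g)$, while injectivity again follows from $\iota$ being a monomorphism.

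Finally I deduce \eqref{eq:hom.thew.to.thew.1} from \eqref{eq:hom.theh.to.thew} applied at the levels $n$ and $k$. Precomposition with the canonical surjection $\TheH{n,k}\to\TheW{n,k}$ embeds $\Hom_{\atl n}(\TheW{n,k},\TheW{n,k'})$ into $\Hom_{\atl n}(\TheH{n,k},\TheW{n,k'})\simeq\TheW{k,k'}$, with image the set of $\bar g$ such that $f_{\bar g}$ annihilates every non-monic $x\in\Hom_\catl(k,n)$; likewise $\Hom_{\atl k}(\TheW{k,k},\TheW{k,k'})$ embeds into $\Hom_{\atl k}(\TheH{k,k},\TheW{k,k'})\simeq\TheW{k,k'}$ with image the set of $\bar g$ such that $\overline{y\circ g}=0$ for every non-monic $y\in\End_\catl(k)$. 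It remains to show these two conditions on $\bar g$ coincide. For one direction, since composing on the left with the monic $\iota$ preserves rank, $\overline{\iota\circ y\circ g}=0$ forces $\overline{y\circ g}=0$; for the other, a non-monic $x$ carries an arc among its $k$ right endpoints and so factors as $x=x_1\circ e$ with $e\in\End_\catl(k)$ non-monic, whence $x\circ g=x_1\circ(e\circ g)$ and $\operatorname{rank}(x\circ g)\le\operatorname{rank}(e\circ g)$, so $\overline{e\circ g}=0$ gives $\overline{x\circ g}=0$. The two images therefore agree, which yields the isomorphism.
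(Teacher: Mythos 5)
Your main line of attack --- cyclicity of $\TheH{n,k}$, a one-sided inverse to the generator, and the precomposition map --- is exactly the paper's, and equations \eqref{eq:hom.theh} and \eqref{eq:hom.theh.to.thew} go through as you describe whenever $\beta\neq 0$. The genuine gap is the case $\beta=0$ (i.e.\ $q=\pm i$), which is part of the statement and which you explicitly leave open. Your nested-arc generator gives $\iota^t\circ\iota=\beta^{(n-k)/2}\,1_{\End{k}}$, so the splitting dies at $\beta=0$ for \emph{every} $k$, not just $k=0$. The fix you gesture at (routing so that no contractible loop forms) is precisely what the paper does for $k>0$: its $y_{n,k}$ is $x_{n,k}^t$ with one through line wound around the cylinder, so that the $(n-k)/2$ arcs of $y$ interleave with those of $x$ and chain into that through line instead of closing up, giving $y\circ x=\id_{\atl{k}}$ exactly, with no power of $\beta$. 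This has to be carried out, not just expected, and it genuinely cannot be extended to $k=0$: with no through line to wind, the paper's $y_{n,0}$ still carries a factor $1/\beta$, and the case $k=0=\beta$ requires a separate argument of a different kind (the paper proves surjectivity of $G$ directly, using $y=x_{n,2}\circ y_{n,2}$ with $y\circ x=x$, the relation $\tau^{-2}x=x$, and $e_{n-1}x=0$ to pin down the possible values of $f(x)$). Your fallback of ``specializing from generic $q$'' is not sound here: $\Hom$-spaces over a non-semisimple specialization can jump, so nothing computed at generic $q$ transfers.

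Two smaller points. For \eqref{eq:hom.thew.to.thew.1} you take a different route from the paper, which instead writes $\TheW{n,k}\simeq\Hom_{\catl}(k,n)\otimes_{\atl{k}}\TheW{k,k}$ and applies the tensor--hom adjunction to \eqref{eq:hom.theh.to.thew}; your identification of both sides as the same subspace of $\TheW{k,k'}$ is correct and arguably more concrete, but it rests on the unproved (standard, and worth a line) fact that every non-monic diagram in $\Hom_{\catl}(k,n)$ factors as $x_1\circ e$ with $e\in\End_{\catl}(k)$ non-monic. Also, what you call ``rank'' in that argument is the number of through lines; in this paper ``rank'' means the winding number across the top of the fundamental rectangle, so the wording should be adjusted even though the monotonicity you invoke (through lines cannot increase under composition) is the correct and relevant fact.
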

\begin{proof}
	We detail the proof of \eqref{eq:hom.theh} as the proof of \eqref{eq:hom.theh.to.thew} is similar. Yoneda's lemma gives $\TheH{k,k'} \simeq \mathsf{Nat}\left( \TheH{-,k}, \TheH{-,k'}\right)$, which gives a natural embedding $G: \TheH{k,k'} \to \Hom_{\atl{n}}\left( \TheH{n,k}, \TheH{n,k'}\right)$. Explicitly, for $b \in\TheH{k,k'}$, $G(b)(v) = v \circ b$ for any $v \in \TheH{n,k}$. If $\beta \neq 0 $ or if $k \neq 0 $, the proof consists in constructing an inverse for $G$. (The case $\beta=0$ and $k=0$ adopts another strategy and is treated at the end of the proof.) The existence of this inverse rests upon two observations:
	\begin{enumerate}
		\item $\TheH{n,k}$ is cyclic, so there exists $x\in\TheH{n,k}$ such that $\TheH{n,k} = \atl{n} x$;
		\item this generator $x$ can be chosen so that there exists $y \in \TheH{k,n}$ such that $y\circ x = \id_{\atl{k}}$.
	\end{enumerate}
The generator $x=x_{n,k}\in\TheH{n,k}$ can be chosen as having $(n-k)/2$ non-nested arcs on its left side followed by $k$ through lines. If $k>0$, the corresponding $y=y_{n,k}$ is then obtained by first taking the involution $x^t$ of $x$ and then moving the first through line of $x$ to the top position. For instance, if $k=2, n=6$, theses choices are
\begin{equation*}
\begin{tikzpicture}[scale = 1/3]
	\draw[line width = 2pt] (0,-1) -- (0,6);
	\draw[line width = 2pt] (3,-1) -- (3,6);
	\draw[dashed] (0,-1) -- (3,-1);
	\draw[dashed] (0,6) -- (3,6);
	\foreach \s in {1,...,6}
	{	
		\filldraw[black] (3,\s -1) circle (5pt);
	};
	\filldraw[black] (0,1) circle (5pt);
	\filldraw[black] (0,0) circle (5pt);
	\draw (0,0) -- (3,0);
	\draw (3,5) .. controls (2,5) and (1,1) ..  (0,1);
	\draw (3,3) .. controls (2,3) and (2,4) .. (3,4);
	\draw (3,2) .. controls (2,2) and (2,1) .. (3,1);
	\node[anchor = west] at (3, 5/2) {,};
	\node[anchor = east] at (0,5/2) {$y =$ };
\end{tikzpicture}\qquad
\begin{tikzpicture}[scale = 1/3]
	\draw[line width = 2pt] (0,-1) -- (0,6);
	\draw[line width = 2pt] (3,-1) -- (3,6);
	\draw[dashed] (0,-1) -- (3,-1);
	\draw[dashed] (0,6) -- (3,6);
	\foreach \s in {1,...,6}
	{	
		\filldraw[black] (0,\s -1) circle (5pt);
	};
	\filldraw[black] (3,1) circle (5pt);
	\filldraw[black] (3,0) circle (5pt);
	\draw (0,0) -- (3,0);
	\draw (0,1) -- (3,1);
	\draw (0,3) .. controls (1,3) and (1,2) .. (0,2);
	\draw (0,5) .. controls (1,5) and (1,4) .. (0,4);
	\node[anchor = west] at (3,5/2) {,};
	\node[anchor = east] at (0,5/2) {$x =$ };
\end{tikzpicture}\qquad
\begin{tikzpicture}[scale = 1/3]
	\draw[line width = 2pt] (0,-1) -- (0,6);
	\draw[line width = 2pt] (3,-1) -- (3,6);
	\draw[dashed] (0,-1) -- (3,-1);
	\draw[dashed] (0,6) -- (3,6);
	\foreach \s in {1,...,6}
	{	
		\filldraw[black] (3,\s -1) circle (5pt);
	};
	\filldraw[black] (0,1) circle (5pt);
	\filldraw[black] (0,0) circle (5pt);
	\draw (0,0) -- (3,0);
	\draw (3,5) .. controls (2,5) and (1,1) ..  (0,1);
	\draw (3,3) .. controls (2,3) and (2,4) .. (3,4);
	\draw (3,2) .. controls (2,2) and (2,1) .. (3,1);
	\node[anchor = east] at (0,5/2) {$y\circ x =$ };
	\draw[line width = 2pt] (3,-1) -- (3,6);
	\draw[line width = 2pt] (6,-1) -- (6,6);
	\draw[line width = 2pt] (0,-1) -- (0,6);
	\draw[line width = 2pt] (3,-1) -- (3,6);
	\draw[dashed] (0,-1) -- (3,-1);
	\draw[dashed] (0,6) -- (3,6);
	\filldraw[black] (6,1) circle (5pt);
	\filldraw[black] (6,0) circle (5pt);
	\draw (6,0) -- (3,0);
	\draw (6,1) -- (3,1);
	\draw (3,3) .. controls (4,3) and (4,2) .. (3,2);
	\draw (3,5) .. controls (4,5) and (4,4) .. (3,4);
	\draw (3,0) -- (6,0);
	\draw (3,1) -- (6,1);
	\draw[dashed] (3,-1) -- (6,-1);
	\draw[dashed] (3,6) -- (6,6);
	\foreach \s in {1,...,6}
	{	
		\filldraw[black] (3,\s -1) circle (5pt);
	};
	\node[anchor = west] at (6, 5/2) {$\ =$};
	\draw[line width = 2pt] (8,-1) -- (8,6);
	\draw[line width = 2pt] (11,-1) -- (11,6);
	\draw[dashed] (8,-1) -- (11,-1);
	\draw[dashed] (8,6) -- (11,6);
	\foreach \s in {1,2}
	{	
		\filldraw[black] (8,\s -1) circle (5pt);
		\filldraw[black] (11,\s -1) circle (5pt);
	};
	\draw (8,0) -- (11,0);
	\draw (8,1) -- (11,1);
	\node[anchor = west] at (11, 5/2) { . };
\end{tikzpicture}
\end{equation*}
If $k=0$ but $\beta\neq 0$, the corresponding $y$ is
$$
\begin{tikzpicture}[scale = 1/3]
	\draw[line width = 2pt] (0,-1) -- (0,7);
	\draw[line width = 2pt] (3,-1) -- (3,2);
	\draw[line width = 2pt] (3,4) -- (3,7);
	\draw[line width = 2pt,dashed] (3,2) -- (3,4);
	\node[anchor = west] at (2, 3.4) {$\vdots$};
	\draw[dashed] (0,-1) -- (3,-1);
	\draw[dashed] (0,7) -- (3,7);
	\foreach \s in {1,...,3}
	{	
		\filldraw[black] (3,\s -1) circle (5pt);
	};
	\foreach \s in {5,...,7}
	{	
		\filldraw[black] (3,\s -1) circle (5pt);
	};
	\draw (3,6) .. controls (1,6) and (1,0) ..  (3,0);
	\draw (3,4) .. controls (2,4) and (2,5) .. (3,5);
	\draw (3,2) .. controls (2,2) and (2,1) .. (3,1);
	\node[anchor = west] at (3.2, 5/2) {,};
	\node[anchor = east] at (0,5/2) {$y =y_{n,0}=\frac1\beta$ };
\end{tikzpicture}
$$
so that, indeed, $y_{n,0}\circ x_{n,0} = \id_{\atl{0}}$ which is the empty link diagram.
Define then $F:\Hom_{\atl{n}}\left( \TheH{n,k}, \TheH{n,k'}\right) \to \TheH{k,k'}$, by $F(f)= y \circ f(x)$, for a morphism $f \in \Hom_{\atl{n}}\left( \TheH{n,k}, \TheH{n,k'}\right)$. Then, for any $v\in\TheH{n,k}$ and $w\in\TheH{k,k'}$: 
\begin{equation}
	((G\circ F)(f))(v) = v \circ F(f) = \underbrace{v \circ y}_{\in \atl{n}} \circ f(x) = f(v\circ \underbrace{y\circ x}_{\id_{\atl{k}}}) = f(v),
\end{equation}
and
\begin{equation}
	(F\circ G)(w) = F((v \to v\circ w)) = \underbrace{y\circ x}_{\id_{\atl{k}}} \circ w = w.
\end{equation}
It thus follows that $G$ and $F$ are isomorphisms. Note in particular that, since $G$ is surjective, it follows readily that $F$ is a morphism of bimodules: for all $a \in  \atl{k}$, $c \in  \atl{k'}$, $b\in \TheH{k,k'}$,
\begin{equation}
	F(aG(b)c) = y\circ (aG(b)c)(x) = y\circ x\circ a\circ b \circ c = a\circ b \circ c= a F(G(b))c.
\end{equation}

To obtain equation \eqref{eq:hom.thew.to.thew.1}, note that
\begin{equation}
	\TheW{n,k} \simeq \Hom_{\catl}(k,n)\otimes_{\atl{k}}\TheW{k,k},
\end{equation}
and use the adjunction theorem together with \eqref{eq:hom.theh.to.thew}.

The case where both $k$ and $\beta$ are zero remains. For this, we instead prove that $G$ is surjective. Choose $x = x_{n,0}$ as in the generic case, but now pick $y = x_{n,2} \circ y_{n,2} \in \atl{n}$. One can verify directly that $y \circ x = x$, and thus for all $f \in \Hom_{\atl{n}}\left( \TheH{n,0}, \TheH{n,k'}\right) $, $y f(x) = f(y \circ x) = f(x)$. In particular this implies that 
\begin{equation*}
	f(x) = x_{n,2} \circ g,
\end{equation*}
for some $g \in \TheH{2,k'}$. However, $\tau^{-2}x = x$ impose the further condition $\tau^{-2} x_{n,2} \circ g = x_{n,2} \circ g$, but since $\tau^{-2} x_{n,2} \neq x_{n,2}$ it follows that $g$ cannot contain any monic diagrams. In other words, $g$ can be written as a sum of elements of the form $g' \circ h $ for some $g' \in \TheH{2,0} $, $h \in \TheH{0,k'}$ and, thus, $g'$ can only be 
\begin{tikzpicture}[scale = 1/10]
	\draw[line width = 0.5pt] (0,0) -- (0,3);
	\draw[line width = 0.5pt] (2,0) -- (2,3);
	\foreach \s in {1,...,2}
	{	
		\filldraw[black] (0,\s ) circle (5pt);
	};
	\draw (0,2) .. controls (1.5,2) and (1.5,1) .. (0,1);
\end{tikzpicture}
or
\begin{tikzpicture}[scale=0.10]	
	\draw[line width = 0.5pt] (0,0) -- (0,3);
	\draw[line width = 0.5pt] (2,0) -- (2,3);
	\foreach \s in {0,1}
	{	
		\filldraw[black] (0,\s+1) circle (5pt);
	};
	\draw (0,2) .. controls (1,2) .. (1,3);
	\draw (0,1) .. controls (1,1) .. (1,0);
\end{tikzpicture}
up to constant in $\mathbb C$. However note that $e_{n-1}x = 0$, but $e_{n-1}x_{n,2} \circ g = x_{n,2} \circ e_{1}g$, which rules out the second possibility. Hence $g' = x_{2,0} \circ \alpha$ for some $\alpha \in \TheH{0,0}$, and $f(x) = x \circ \alpha \circ h'$ so 
\begin{equation*}
	f = G(\alpha \circ h').
\end{equation*} 
Since $G$ is both injective and surjective, it is an isomorphism.
\end{proof}
\begin{Coro}\label{coro:hom.h.and.w}
	If $n \geq k'>k$, then
	\begin{equation}
		\Hom_{\atl{n}}\left( \TheH{n,k}, \TheW{n,k'}\right) \simeq \Hom_{\atl{n}}\left( \TheW{n,k}, \TheW{n,k'}\right) \simeq 0,
	\end{equation}
	\begin{equation}
		\End_{\atl{n}}(\TheH{n,k}) \simeq \atl{k}, \qquad \End_{\atl{n}}(\TheW{n,k}) \simeq \ringK[\tau,\tau^{-1}].
	\end{equation}
	Furthermore, $\TheW{n,k}$ is indecomposable.
\end{Coro}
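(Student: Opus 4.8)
The four assertions are all corollaries of Proposition~\ref{prop:hom.h.and.w}, so the plan is to specialise its three isomorphisms \eqref{eq:hom.theh}, \eqref{eq:hom.theh.to.thew}, \eqref{eq:hom.thew.to.thew.1} and to supplement them with a single explicit endomorphism computation. For the vanishing statements, I would start from the observation already made in the text that $\TheW{k,k'}=0$ whenever $k'>k$, since a monic diagram in $\Hom_{\catl}(k',k)$ would require $k'$ through lines while only $k$ are available. Substituting $k'>k$ into \eqref{eq:hom.theh.to.thew} gives $\Hom_{\atl n}(\TheH{n,k},\TheW{n,k'})\simeq\TheW{k,k'}=0$ at once, and substituting it into \eqref{eq:hom.thew.to.thew.1} gives $\Hom_{\atl n}(\TheW{n,k},\TheW{n,k'})\simeq\Hom_{\atl k}(\TheW{k,k},\TheW{k,k'})=\Hom_{\atl k}(\TheW{k,k},0)=0$.

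For $\End_{\atl n}(\TheH{n,k})$, I would set $k'=k$ in \eqref{eq:hom.theh}, obtaining $\End_{\atl n}(\TheH{n,k})\simeq\TheH{k,k}=\Hom_{\catl}(k,k)=\atl k$ as bimodules. One subtlety to record is that the map $G$ of Proposition~\ref{prop:hom.h.and.w}, given by $G(b)(v)=v\circ b$, is an algebra \emph{anti}-isomorphism rather than an isomorphism; since the involution $d\mapsto d^t$ identifies $\atl k$ with its opposite algebra, this still yields $\End_{\atl n}(\TheH{n,k})\simeq\atl k$ as algebras.

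For $\End_{\atl n}(\TheW{n,k})$, setting $k'=k$ in \eqref{eq:hom.thew.to.thew.1} reduces the problem to computing $\End_{\atl k}(\TheW{k,k})$. For $k\geq 1$ the monic $(k,k)$-diagrams are exactly the powers $\tau^m$, $m\in\mathbb Z$, so $\TheW{k,k}$ has basis $\{\tau^m\}$ and is cyclic, generated by the class $v_0$ of $\id$. Any endomorphism $\varphi$ is then determined by $\varphi(v_0)=\sum_m c_m\tau^m$; conversely, right multiplication by any such $\sum_m c_m\tau^m$ commutes with the left $\atl k$-action and preserves the submodule of non-monic diagrams, because composing a non-monic diagram with $\tau^m$ cannot increase its number of through lines, so it descends to a well-defined endomorphism. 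Since right multiplications by $w$ and $w'$ agree iff $w=w'$, this identifies $\End_{\atl k}(\TheW{k,k})\simeq\ringK[\tau,\tau^{-1}]$; the case $k=0$ is handled the same way, with the non-contractible-loop operator in place of $\tau$. Finally, indecomposability is immediate: $\ringK[\tau,\tau^{-1}]$ is a commutative integral domain, hence has no idempotents other than $0$ and $1$, and a module whose endomorphism ring has only trivial idempotents is indecomposable, so $\TheW{n,k}$ is indecomposable.

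The main obstacle I anticipate is the explicit computation $\End_{\atl k}(\TheW{k,k})\simeq\ringK[\tau,\tau^{-1}]$: one must verify both that every endomorphism of the cyclic module $\TheW{k,k}$ arises from right multiplication (i.e. that the left annihilator of the generator is contained in that of any putative image) and that the $\tau^m$ remain linearly independent so that no relations collapse the Laurent ring. The $k=0$ edge case also deserves separate care, since there the relevant operator is the loop, which generates a polynomial rather than a Laurent ring, yet still a domain — which is all that the indecomposability argument requires.
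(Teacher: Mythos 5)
Your proof is correct and follows essentially the same route as the paper: specialise the three isomorphisms of Proposition \ref{prop:hom.h.and.w}, note that the monic diagrams in $\atl k$ are the powers of $\tau$, and deduce indecomposability from the endomorphism ring being an integral domain. Your two added refinements — that $G$ is an anti-isomorphism (harmless, since the involution identifies $\atl k$ with its opposite) and the separate treatment of $k=0$ where the loop generates a polynomial rather than Laurent ring — are correct and slightly more careful than the paper's own proof.
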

\begin{proof}
	Note that if $k'>k$, $\TheW{k,k'}= 0$, so the first equation follows directly from the previous proposition. The second equation is obtained by putting $k' = k$ in \eqref{eq:hom.theh} and \eqref{eq:hom.thew.to.thew.1}, noting that $\atl{k} \equiv \TheH{k,k}$, and that the set of monic diagrams in $\atl{k}$ is given by the integer powers of $\tau$. Finally, it follows readily from the second equation that the ring of endomorphisms of $\TheW{n,k}$ is an integral domain, so that there are no non-trivial idempotents in $\End_{\atl n}(\TheW{n,k})$.
\end{proof}

Finally the last proposition describes the homomorphisms between the standard modules $\TheW{n,k}$.
\begin{Prop}\label{prop:hom.w.to.w}
	If $n\geq k,k' \geq 0$, 
	then
	\begin{equation}
		\Hom_{\atl{n}}\left( \TheW{n,k}, \TheW{n,k'} \right) = \delta_{k,k'} \mathbb C[\tau ,\tau^{-1}].
	\end{equation}
\end{Prop}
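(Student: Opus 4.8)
The plan is to reduce everything to the case $n=k$ and then split on the comparison of $k$ and $k'$. By the bimodule isomorphism \eqref{eq:hom.thew.to.thew.1}, $\Hom_{\atl n}(\TheW{n,k},\TheW{n,k'})\simeq\Hom_{\atl k}(\TheW{k,k},\TheW{k,k'})$, which is legitimate since $n\ge k,k'$; so it suffices to compute the right-hand side. If $k<k'$ there are no monic $(k,k')$-diagrams, hence $\TheW{k,k'}=0$ and the $\Hom$ vanishes. If $k=k'$, Corollary \ref{coro:hom.h.and.w} gives $\End_{\atl k}(\TheW{k,k})\simeq\mathbb C[\tau,\tau^{-1}]$, which is exactly the diagonal term $\delta_{k,k'}\mathbb C[\tau,\tau^{-1}]$. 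The entire content is therefore the off-diagonal case $k>k'$, where I must prove that $\Hom_{\atl k}(\TheW{k,k},\TheW{k,k'})=0$.

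For that case, first I would make the standard cyclic-generator reduction. The module $\TheW{k,k}$ is cyclic, generated by the class $\mathbf 1$ of the identity diagram, and the annihilator of $\mathbf 1$ is precisely the ideal $\mathsf N\subset\atl k$ spanned by the non-monic (rank $<k$) diagrams. Consequently a homomorphism $f$ is determined by $w=f(\mathbf 1)$, and $f$ is well defined exactly when $\mathsf N w=0$; this identifies $\Hom_{\atl k}(\TheW{k,k},\TheW{k,k'})$ with $\{w\in\TheW{k,k'}\ :\ \mathsf N w=0\}$. Since each regular generator $u_i=e_{k,i}$ ($1\le i\le k-1$) has rank $k-2<k$ and hence lies in $\mathsf N$, any such $w$ satisfies $u_iw=0$ for all $i$.

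The remaining step, and the crux, is to show that no nonzero $w\in\TheW{k,k'}$ is annihilated by all the $u_i$ when $k'<k$. I would argue through restriction to $\tl k$: if $w\neq 0$ and $u_iw=0$ for all $i$, then $\mathbb C w$ is a one-dimensional $\tl k$-submodule of $\Resar{\TheW{k,k'}}$ on which every $u_i$ acts as zero, i.e.\ a copy of the sign module $\TheI{k,k}=\TheS{k,k}$; hence $\TheI{k,k}$ would be a composition factor of $\Resar{\TheW{k,k'}}$. But every basis diagram of $\TheW{k,k'}$ has exactly $k'$ through-lines reaching the inner boundary, and the regular $\tl k$-action on the outer boundary can only preserve or lower this number (capping two through-lines yields a non-monic diagram, which is zero in $\TheW{k,k'}$); thus $\Resar{\TheW{k,k'}}$ is built from states with at most $k'<k$ through-lines and cannot contain $\TheI{k,k}$, which requires $k$ through-lines. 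This contradiction forces $w=0$, finishing the off-diagonal case.

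The main obstacle is exactly this last through-line bound: making precise that the sign representation $\TheI{k,k}$ of $\tl k$ occurs among the constituents of $\Resar{\TheW{k,k'}}$ only when $k'=k$. I expect to justify it directly from the defect-number filtration of the restricted module, so as not to forward-reference the general computation of $\Resar{}$. As a cross-check (and an alternative route that I would keep in reserve), one may note that $\mathsf N w=0$ also forces $w$ into the radical of the Gram form $\langle-,-\rangle^{\mathsf W}_{k,k'}$: taking $a=x\circ(w')^t\in\mathsf N$ for monic $x,w'$ gives $aw=x\circ\langle w,w'\rangle^{\mathsf W}_{k,k'}$, and the linear independence of the diagrams $x\circ\tau^m$ yields $\langle w,w'\rangle^{\mathsf W}_{k,k'}=0$ for all $w'$; one then excludes an embedded copy of $\TheW{k,k}$ using that $\tau$ acts on the infinite-dimensional $\TheW{k,k'}$ without eigenvectors.
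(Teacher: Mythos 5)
Your reduction to $\Hom_{\atl k}(\TheW{k,k},\TheW{k,k'})$, the disposal of the cases $k\le k'$, and the identification of the remaining $\Hom$-group with $\{w\in\TheW{k,k'}\,:\,\mathsf N w=0\}$, where $\mathsf N$ is the span of the non-monic diagrams, are all correct and match the paper's setup. The gap is in the crux. You replace the condition $\mathsf N w=0$ by the much weaker condition $u_iw=0$ for the regular generators $u_1,\dots,u_{k-1}$ alone, and then claim that no nonzero $w$ can satisfy it because the sign module $\TheI{k,k}=\TheS{k,k}$ ``requires $k$ through lines'' while $\TheW{k,k'}$ only has $k'<k$. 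That claim is false: for the regular subalgebra $\tl k\subset\atl k$, arcs that wind around the annulus behave exactly like through lines (cutting an affine diagram of rank $r$ with $k'$ through lines open along the seam produces a regular diagram with $k'+2r$ through lines). This is precisely the ``surgery'' the paper uses later to prove that, for generic $q$, $\Resar{\TheW{n,k';z}}\simeq\bigoplus_{r=0}^{(n-k')/2}\TheS{n,k'+2r}$ --- a sum that always contains the sign module $\TheS{n,n}$. A concrete counterexample to your key step: take $k=2$, $k'=0$, let $v_0$ be the diagram whose arc joins the two outer nodes without crossing the seam, $v_1$ the one whose arc crosses it, and $g$ the insertion of a non-contractible loop. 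Then $u_1v_0=\beta v_0$ and $u_1v_1=gv_0$, so $w=gv_0-\beta v_1\neq 0$ satisfies $u_1w=0$. Only the full condition $\mathsf N w=0$ kills such $w$ (here $e_2w=(g^2-\beta^2)v_1\neq 0$), and your argument never uses anything beyond the regular generators. (A small side issue: $u_i$ has rank $0$, not $k-2$; what puts it in $\mathsf N$ is that it has $k-2<k$ through lines.)

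Your ``reserve'' route is the one the paper actually takes, and it is the one that works: from $\mathsf N w=0$ one deduces that $w$ lies in the radical of the Gram form $\langle-,-\rangle^{\mathsf W}_{k,k'}$; projecting to the cell modules $\TheW{k,k';z}$ and treating $z$ as a formal parameter, the coefficients of $\mu_z(w)$ are Laurent polynomials in $z$ which must vanish for all but finitely many $z$, since Graham--Lehrer's results force the radical of $\TheW{k,k';z}$ to be zero outside a finite set of $z$; hence they vanish identically and $w=0$. Your proposed ending for that route (``exclude an embedded copy of $\TheW{k,k}$ using that $\tau$ has no eigenvectors'') does not obviously close the argument; the passage through the one-parameter family of cell modules is what does the work, and you should adopt it as the main line rather than the through-line bound.
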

\begin{proof}The cases $k'\geq k$ are covered by corollary \ref{coro:hom.h.and.w} and proposition \ref{prop:hom.h.and.w} has shown that $\Hom_{\atl{n}}\left( \TheW{n,k}, \TheW{n,k'}\right)\simeq \Hom_{\atl{k}}\left( \TheW{k,k}, \TheW{k,k'}\right)$. The rest of the proof assumes $k'<k$. A basis for $\TheW{k,k'}$ can be built from a finite number of $(k,k')$-diagrams $v_i, i\in I$ (with $|I|=\dim \TheW{k,k';z}$), together with their twisted partners $v_i\tau^{j}$, $j\in\mathbb Z\setminus \{0\}$. (The four $(4,2)$-diagrams $v_1,v_2,v_3$ and $v_4$ of the basis for $W_{4,2}$ given above provide an example.) Then a common basis for the modules $\TheW{k,k';z}$ is the (finite) set of equivalence classes $\{[v_i],i\in I\}$. Let $f\in \Hom_{\atl{k}}\left( \TheW{k,k}, \TheW{k,k'} \right)$ and $x \in \TheW{k,k}$. For all $i \in \mathbb{Z}$, $e_i x = 0$ and thus $y^{t}\circ f(x) = 0$ for all $y \in  \TheW{k,k'}$. Let $\mu_z : \TheW{k,k'} \to \TheW{k,k';z} $ be the canonical projection to the cell module. If $z$ is seen as a formal parameter, the element $\mu_z(f(x))$ has coefficients in $\mathbb C[z,z^{-1}]$ in the basis described above (because $\tau^jv_i\mapsto z^jv_i$). Moreover, by the previous argument, $\mu_z(f(x))$ must be in the radical of the Gram form. Proposition \ref{thm:GL34} states that this radical vanishes, except possibly for finitely many $z\in \mathbb C$, and thus all coefficients of $\mu_z(f(x))$ must be identically zero as polynomials of $z$ and $z^{-1}$. It follows that $f(x) = 0$ for all $x$.
\end{proof}
\end{subsection}
\end{section}
%
\begin{section}{Restriction and induction functors}\label{sec:restrictionInduction}
%
In this section the functors $\Indphi{}$, $\Resphi$ and $\Indar{}$ are applied to the ``basic" modules defined in the previous section. The results will be the main tools for computing fusion.
%
%
\begin{subsection}{The functors $\Indphi$ and $\Resphi$}\label{sec:twoFunctors}
 We start with the induction $\Indphi{}$.
%
%
\begin{Prop}\label{prop:indH}
	If $n \geq k$,
		\begin{equation}
			\Indphi{\TheH{n,k}} \simeq \TheM{n,k},\qquad \Indphi{\TheW{n,k}} \simeq \TheS{n,k},
		\end{equation}
\end{Prop}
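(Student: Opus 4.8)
The plan is to realise both isomorphisms through one explicit \emph{unwinding} map and then to deduce the second from the first by right exactness. First observe that the rank-zero diagrams in $\TheH{n,k}=\Hom_{\catl}(k,n)$ are closed under the action of $\tl n=\langle u_1,\dots,u_{n-1}\rangle$: each $u_i$ is itself a rank-zero $(n,n)$-diagram with $i\le n-1$, and composing two rank-zero diagrams never creates a winding. Hence $\TheM{n,k}=\Hom_{\ctl}(k,n)$ is a $\tl n$-submodule of $\Resar{\TheH{n,k}}$. I would then define $\bar\jmath:\TheM{n,k}\to\Indphi{\TheH{n,k}}$ by $m\mapsto 1\otimes m$. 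Since $\phi\circ\iota=\id_{\tl n}$, for $u\in\tl n$ one has $1\otimes\iota(u)m=\phi(\iota(u))\otimes m=u\otimes m$, so $\bar\jmath(um)=u\,\bar\jmath(m)$ and $\bar\jmath$ is $\tl n$-linear. Surjectivity is immediate: by Proposition \ref{prop:hom.h.and.w} the module $\TheH{n,k}$ is cyclic over $\atl n$, generated by the rank-zero diagram $x_{n,k}$ consisting of $(n-k)/2$ arcs followed by $k$ through lines; moving every affine generator across the tensor via $u\otimes a v=u\phi(a)\otimes v$ shows that $\Indphi{\TheH{n,k}}$ is generated over $\tl n$ by $1\otimes x_{n,k}=\bar\jmath(x_{n,k})$.

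The main obstacle is the injectivity of $\bar\jmath$, equivalently that no non-zero planar $w\in\TheM{n,k}$ satisfies $1\otimes w=0$. I would prove this by constructing a left inverse. The engine is the relation $1\otimes\tau v=\phi(\tau)(1\otimes v)=(-q)^{3/2}\com{n-1}{1}(1\otimes v)$ holding in $\Indphi{\TheH{n,k}}$, which says that pulling a twist across the tensor turns it into the braiding $\com{n-1}{1}\in\tl n$; diagrammatically this combs every winding out of an affine diagram and rewrites it as a $\ringK$-combination of planar ones. To make this rigorous I would define an $\atl n$-linear unwinding map $g:\TheH{n,k}\to\Resphi{\TheM{n,k}}$ that is the identity on rank-zero diagrams and sends a general diagram, put in the normal form $d=a\circ f\circ b$ with $a\in\atl n$, $b\in\atl k$, and $f\in\TheM{n,k}$ of rank zero, to $g(d)=\phi(a)\circ f\circ\phi(b)$, the target carrying the $\atl n$-action $a\cdot m=\phi(a)\,m$ coming from $\Resphi$. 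Then $g(x_{n,k})=x_{n,k}$, and the adjoint of $g$ under $\Indphi\dashv\Resphi$ is a $\tl n$-map $\pi:\Indphi{\TheH{n,k}}\to\TheM{n,k}$ with $\pi\circ\bar\jmath=\id$ (checked on the generator $x_{n,k}$), forcing $\bar\jmath$ to be an isomorphism. The technical heart is checking that $g$ is well defined, i.e.\ independent of the decomposition $d=a\circ f\circ b$; this is precisely a repackaging of the commutor identities \eqref{eq:magic.t.and.e} already underlying the fact that $\phi$ is an algebra morphism (Proposition \ref{prop:defphi}), and amounts to saying that the unwinding extends $\phi$ to a functor $\catl\to\ctl$ restricting to the identity on $\ctl$.

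Finally, the statement for $\TheW{n,k}$ follows from the one for $\TheH{n,k}$ by right exactness of $\Indphi{(\blank)}=\tl n\otimes_{\atl n}(\blank)$. Applying it to the short exact sequence $0\to N\to\TheH{n,k}\to\TheW{n,k}\to 0$, where $N$ is the submodule spanned by the non-monic affine diagrams, gives $\Indphi{\TheW{n,k}}\simeq\Indphi{\TheH{n,k}}/\operatorname{im}(\Indphi{N})$. Transporting along $\bar\jmath^{-1}\simeq\pi$ and using that $g$ is the identity on planar diagrams and does not increase the number of through lines (composition never increases it, and the identity term of each braiding preserves it), the image of $\Indphi{N}$ corresponds exactly to the span $N'$ of the non-monic planar diagrams: indeed $g(N)\subseteq N'$, and conversely each diagram of $N'$ already lies in $N$ and is fixed by $g$, so the two submodules match. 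Hence $\Indphi{\TheW{n,k}}\simeq\TheM{n,k}/N'=\TheS{n,k}$, as claimed.
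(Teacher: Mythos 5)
Your argument is correct and, at its core, is the paper's own: everything is driven by the cyclicity of $\TheH{n,k}$ over $\atl{n}$ with the rank-zero generator $x_{n,k}$ and the resulting identification of $a\otimes_{\atl{n}}x_{n,k}$, $a\in\tl{n}$, with the planar diagram $a\circ z_{n,k}$. The differences are organizational but worth recording. The paper writes the map in the other direction, $a\otimes_{\atl{n}}x_{n,k}\mapsto a\circ z_{n,k}$, gets surjectivity from cyclicity, and disposes of injectivity by asserting that every class in $\tl{n}\otimes_{\atl{n}}\TheH{n,k}$ has a \emph{unique} rank-zero representative $1\otimes y$; your left inverse $\pi$, obtained as the adjoint of the unwinding map $g$, is exactly that assertion promoted to a map, so the technical crux is the same in both texts: the well-definedness of the unwinding, i.e.\ that $\phi$ acts consistently across the non-unique decompositions $d=a\circ f\circ b$. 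Neither you nor the paper proves this in detail; you at least localize the difficulty honestly, and it does follow from the braiding identities underlying Proposition \ref{prop:defphi}. Where you genuinely add something is the second isomorphism: the paper settles it with ``similar arguments,'' whereas your right-exactness computation, identifying $\operatorname{im}(\Indphi{N})$ with the span of the non-monic planar diagrams, is a clean and complete reduction. The one point to make explicit there is that $g$ sends non-monic diagrams to combinations of non-monic ones; this is best seen by factoring a non-monic $v$ through fewer than $k$ strands and invoking the same functoriality of the unwinding, rather than by appeal to the ``identity term of each braiding.''
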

\begin{proof}
We focus on the result for $\Indphi{\TheH{n,k}} $, as those for $\Indphi{\TheW{n,k}} $ are obtained using very similar arguments. The proof hinges on a crucial fact used in the proof of proposition \ref{prop:hom.h.and.w}: the modules $\TheH{n,k}$ and $\TheW{n,k}$ are both cyclic, and the generator can be chosen as a rank zero diagram which we call $x_{n,k}$. It follows that
	\begin{equation}
		\Indphi{\TheH{n,k}}= \tl{n}\otimes_{\atl{n}} \TheH{n,k} = \lbrace a \otimes_{\atl{n}} x_{n,k}\ |\ a \in \tl{n} \rbrace.
	\end{equation}
We start with cases with $n>2$. 
If we write $z_{n,k}$ for the diagram $x_{n,k}$ seen as an element of $\tl{n}$, then the desired morphism $\Indphi{\TheH{n,k}}\rightarrow \TheM{n,k}$ is obtained by sending $a \otimes_{\atl{n}}x_{n,k} \to a \circ z_{k,n}$. Its surjectivity follows from the cyclicity of $\TheM{n,k}$. Moreover, in every conjugacy class of $\tl{n}\otimes_{\atl{n}} \TheH{n,k}$, there is a single representative of the form $1_{\End{n}}\otimes y$ with $y\in\tl n$. The map is thus injective.
\end{proof}
Computing the induction of the cell modules requires a bit more work, and relies on the braiding $\eta$ introduced in section \ref{sec:tlcategories}.
%
%
\begin{Prop}\label{prop:indphi.thew}
	If $n\geq k$,
		\begin{equation}
			\Indphi{\TheW{n,k;z}} \simeq
			\begin{cases}
				\TheS{n,k}, & \text{if } z = (-q)^{\frac{k+2}{2}},\\
				0, & \text{otherwise}.
			\end{cases}
		\end{equation}

\end{Prop}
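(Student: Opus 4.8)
The plan is to exploit the right exactness of $\Indphi{}$ together with the fact that $\TheW{n,k;z}$ is, by construction, a cokernel. Recall that $\TheW{n,k;z}$ is the cokernel of $f_z:\TheW{n,k}\to\TheW{n,k}$, $x\mapsto x\circ(\tau-z\,\id)$. Writing $\rho$ for the endomorphism $x\mapsto x\circ\tau$ (right multiplication by $\tau=\tau_k$ on the $k$-object side), we have $f_z=\rho-z\,\id$, and $\rho,\id\in\End_{\atl n}(\TheW{n,k})$ because right multiplication commutes with the left $\atl n$-action. Applying the right-exact functor $\Indphi{}$ to the exact sequence $\TheW{n,k}\xrightarrow{f_z}\TheW{n,k}\to\TheW{n,k;z}\to 0$ and using $\Indphi{\TheW{n,k}}\simeq\TheS{n,k}$ from Proposition \ref{prop:indH} yields the exact sequence
\begin{equation*}
\TheS{n,k}\xrightarrow{\ \Indphi{\rho}-z\,\id\ }\TheS{n,k}\longrightarrow\Indphi{\TheW{n,k;z}}\longrightarrow 0,
\end{equation*}
so that $\Indphi{\TheW{n,k;z}}$ is the cokernel of $\Indphi{\rho}-z\,\id$ acting on $\TheS{n,k}$.

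Next I would observe that $\Indphi{\rho}$ is a scalar. Indeed $\Indphi{\rho}\in\End_{\tl n}(\Indphi{\TheW{n,k}})=\End_{\tl n}(\TheS{n,k})$, and the $\Hom$-table of Section \ref{sec:modulesOfTln} gives $\End_{\tl n}(\TheS{n,k})\simeq\mathbb C$ for $k$ non-critical (the entry $(\delta_{k',k}+\delta_{k',k^-})\mathbb C$ reduces to $\mathbb C$ at $k'=k$); for $k$ critical, or for generic $q$, $\TheS{n,k}$ is irreducible and Schur's lemma gives the same conclusion. Hence $\Indphi{\rho}=\lambda\,\id$ for some $\lambda\in\mathbb C^\times$ (invertible since $\rho$ is), and $\Indphi{\rho}-z\,\id=(\lambda-z)\,\id$. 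Its cokernel is therefore $\TheS{n,k}$ when $z=\lambda$ and $0$ otherwise, so the whole proposition reduces to the single computation $\lambda=(-q)^{(k+2)/2}$.

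To compute $\lambda$ I would evaluate on the cyclic generator. Under the identification of Proposition \ref{prop:indH}, $1\otimes_{\atl n}x_{n,k}$ corresponds to the monic rank-zero diagram generating $\TheS{n,k}$, and $\lambda$ is read off from $1\otimes_{\atl n}(x_{n,k}\circ\tau_k)$. The step to carry out is the diagrammatic identity expressing $x_{n,k}\circ\tau_k$ as $g\cdot x_{n,k}$ for an explicit $g\in\atl n$: the affine shift $\tau_k$ on the $k$-object side is transported along the $k$ through lines of $x_{n,k}$ to a braiding of those strands on the $n$-object side. Applying $\phi$, the identities $\phi(\tau_n)=(-q)^{3/2}\com{n-1}{1}$ and $\phi(e_{n,i})=u_i$ express $\phi(g)$ as a product of factors $\T{n,i}=(-q)^{1/2}1+(-q)^{-1/2}e_{n,i}$; since $e_{n,i}$ sends the generator of $\TheS{n,k}$ to a non-monic diagram---hence to $0$---whenever $i,i+1$ both lie on through lines, every such factor collapses to the scalar $(-q)^{1/2}$, and the count of factors gives $\lambda=(-q)^{(k+2)/2}$. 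As a check, when $n=k$ one has $x_{k,k}=\id$, so $x_{k,k}\circ\tau_k=\tau_k\cdot x_{k,k}$, $g=\tau_k$, and $\phi(\tau_k)=(-q)^{3/2}\com{k-1}{1}=(-q)^{3/2}\,\T{k,k-1}\cdots\T{k,1}$ acts on the one-dimensional $\TheS{k,k}$ (on which every $e_{k,i}$ vanishes) as $(-q)^{3/2}\big((-q)^{1/2}\big)^{k-1}=(-q)^{(k+2)/2}$.

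The main obstacle is precisely the diagrammatic bookkeeping of the third paragraph: pinning down the element $g\in\atl n$ that transfers $\tau_k$ onto the through lines, and verifying that the $(n-k)/2$ arcs of $x_{n,k}$ contribute no net power of $(-q)$, so that the exponent equals $(k+2)/2$ independently of $n$. Everything else is formal: once the braiding count is settled, the reduction $\End_{\tl n}(\TheS{n,k})\simeq\mathbb C$ and right exactness make the dichotomy between $\TheS{n,k}$ and $0$ automatic.
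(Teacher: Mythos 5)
Your formal skeleton is correct and is a genuinely cleaner packaging of the soft part of the argument: realizing $\TheW{n,k;z}$ as the cokernel of $f_z=\rho-z\,\id$, applying the right-exact $\Indphi{}$, and using $\End_{\tl n}(\TheS{n,k})\simeq\mathbb C$ to see that $\Indphi{\rho}=\lambda\,\id$ reduces the whole proposition to the single evaluation $\lambda=(-q)^{(k+2)/2}$, bypassing the paper's element-by-element manipulation of $a\otimes_{\atl n}(x_{n,k}+I)$ via $\phi\circ\iota\circ\phi=\phi$. But the obstacle you flag is not bookkeeping; it is where the entire content of the paper's proof lives, and your sketch of it does not work as stated. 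The element $g$ with $g\,x_{n,k}=x_{n,k}\circ\tau_k$ is not a braiding of the through strands alone: the identity the paper establishes is $e_1e_3\cdots e_{n-k-1}\,\tau_n\,x_{n,k}=x_{n,k}\circ\tau_k$, so $g$ carries a prefactor of $(n-k)/2$ generators that re-close the arcs displaced by the rotation. Consequently $\phi(g)$ is not just a product of factors $\T{n,i}$, and the factor-by-factor collapse to $(-q)^{1/2}$ fails for every $\T{n,i}$ whose index meets the arc region of $x_{n,k}$: only the $k-1$ factors with both strands on through lines have their $e_{n,i}$-part killed by the generator; the others act either by $(-q)^{1/2}+(-q)^{-1/2}\beta$ or non-diagonally. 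The paper does not fight this factor by factor. It writes $x_{n,k}=x_{n-1,k-1}\otimes x_{1,1}$ and uses naturality of the commutor, $\eta_{n-1,1}\,x_{n,k}=(x_{1,1}\otimes x_{n-1,k-1})\circ\eta_{k-1,1}$, so that after the prefactor restores $x_{n,k}$ one is left with $x_{n,k}\circ\eta_{k-1,1}$; the non-monic part of $\eta_{k-1,1}$ dies in the quotient and only then is the count $(-q)^{3/2}\cdot\big((-q)^{1/2}\big)^{k-1}=(-q)^{(k+2)/2}$ legitimate. Your $n=k$ sanity check succeeds precisely because there are no arcs in that case.

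A second, smaller gap: your argument silently assumes $k>0$. For $k=0$ the cell module is by definition the cokernel of $g-(z+z^{-1})\id$ for the loop-insertion endomorphism $g$, not of $\rho-z\,\id$. The same Schur-type argument gives $\Indphi{g}=\mu\,\id$, but one must compute $\mu=-\beta$ (the paper does this via $x_{n,0}=x_{n,2}\circ b$ and $\eta_{1,1}\circ b=-(-q)^{-3/2}\,b$) and then observe that the two solutions $z=-q$ and $z=-q^{-1}$ of $z+z^{-1}=-\beta$ are identified in $\paires$, so that the answer is again the single point $z=(-q)^{(0+2)/2}$.
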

\begin{proof}
	Recall the element $x_{n,k}$ introduced in the proof of proposition \ref{prop:hom.h.and.w}:
	\begin{equation*}
\begin{tikzpicture}[scale = 1/3]
	\draw[line width = 2pt] (0,-2) -- (0,8);
	\draw[line width = 2pt] (3,-2) -- (3,8);
	\draw[dashed] (0,-2) -- (3,-2);
	\draw[dashed] (0,8) -- (3,8);
	\foreach \s in {0,2,3,4,5,7,8}
	{	
		\filldraw[black] (0,\s -1) circle (5pt);
	};
	\filldraw[black] (3,-1) circle (5pt);
	\filldraw[black] (3,1) circle (5pt);
	\filldraw[black] (3,2) circle (5pt);
	\draw (0,-1) -- (3,-1);
	\draw (0,1) -- (3,1);
	\draw (0,2) -- (3,2);
	\draw (0,3) .. controls (1,3) and (1,4) .. (0,4);
	\draw (0,6) .. controls (1,6) and (1,7) .. (0,7);
	\node[anchor = west] at (0.5,0) {$\dots$};
	\node[anchor = west] at (0,11/2) {$\vdots$};
	\node[anchor = east] at (-0.5,3) {$x_{n,k} =$ };
	\draw[<->] (4,-1) -- (4,2);
	\node[anchor = west] at (4,0.5) {$k$};
	\node[anchor = west] at (4.5,3) {$ \in \Hom_{\ctl}{(k,n)} \subset \Hom_{\catl}{(k,n)}.$};
\end{tikzpicture}
	\end{equation*}
For $k>0$, on can verify that
	\begin{equation*}
		e_1e_3\hdots e_{n-k-1} \tau(n) x_{n,k} = x_{n,k} \circ \tau(k).
	\end{equation*}
(If $k=n$, then the generators $e_1e_3\hdots e_{n-k-1}$ are simply omitted. This holds also in the following argument.) It follows that in $\TheW{n,k;z}$, 
	\begin{equation*}
		e_1e_3\hdots e_{n-k-1} \tau(n)( x_{n,k} + I) = z x_{n,k} + I,
	\end{equation*}
where $I$ is the kernel of the canonical projection $\TheH{n,k} \to \TheW{n,k;z}$. However, 
	\begin{equation*}
		e_1e_3\hdots e_{n-k-1} \eta_{n-1,1} x_{n,k} = x_{n,k}\circ \eta_{k-1,1},
	\end{equation*}
where we used the fact that $x_{n,k} = (x_{n-1,k-1} \otimes x_{1,1}) \in \Hom_{\ctl}{(k,n)} $ and that, 
$\eta$ being a braiding, $\eta_{n-1,1} x_{n,k}=(x_{1,1}\otimes x_{n-1,k-1})\eta_{k-1,1}$. Since $\eta_{k-1,1} = (-q)^{\frac{k-1}{2}}1_{\tl{k}} + B$, where $B$ is a sum of non-monic diagrams,
	\begin{equation*}
		e_1e_3\hdots e_{n-k-1} \eta_{n-1,1}( x_{n,k} + I) = (-q)^{\frac{k-1}{2}} x_{n,k} + I,
	\end{equation*}
	so that
	\begin{equation}\label{eq:nonMonic}
		\iota \circ \phi(e_1e_3\hdots e_{n-k-1} \tau(n))(x_{n,k} + I) = (-q)^{\frac{k+2}{2}} x_{n,k} + I.
	\end{equation}
We can check directly that $x_{n,k} + I$ is a generator of $\TheW{n,k;z}$, so that every element of $\Indphi{\TheW{n,k;z}}$ can be put in the form $a \otimes_{\atl{n}} (x_{n,k} + I)$, for some $a \in \tl{n}$. Since $z \neq 0$ and $\phi \circ \iota \circ \phi = \phi$,
\begin{align*}	
	a \otimes_{\atl{n}} (x_{n,k} + I) 	& = z^{-1} a \otimes_{\atl{n}} z(x_{n,k} + I)\\
										& = z^{-1}a\otimes_{\atl{n}} e_1e_3\hdots e_{n-k-1}\tau(n) (x_{n,k} + I)\\
										& = z^{-1}a \phi( e_1e_3\hdots e_{n-k-1}\tau(n)) \otimes_{\atl{n}} (x_{n,k} + I)\\
										& = z^{-1} a \phi\circ\iota\circ\phi( e_1e_3\hdots e_{n-k-1}\tau(n)) \otimes_{\atl{n}} (x_{n,k}+I)\\
										& = z^{-1}a  \otimes_{\atl{n}} \iota\circ\phi( e_1e_3\hdots e_{n-k-1}\tau(n))(x_{n,k}+I)\\
										& = z^{-1}(-q)^{\frac{k+2}{2}} a\otimes_{\atl{n}} (x_{n,k} + I).
\end{align*}
We thus conclude that $a \otimes_{\atl{n}} (x_{n,k} + I) = 0$ unless $z = (-q)^{\frac{k+2}{2}}$. The kernel $J$ of the canonical projection $\TheW{n,k}\rightarrow \TheW{n,k;z}$ is $\{a\otimes_{\atl{n}}x_{n,k}(\tau(k)-z)\,|\, a\in\tl n\}$ and the above computation also shows that, if $z= (-q)^{\frac{k+2}{2}}$, then $\Indphi{J}=0$. Since $\Indphi{}$ is right exact, this implies that $\Indphi{\TheW{n,k}}\simeq\Indphi{\TheW{n,k;z}}$.

For $k=0$, the proof is similar except that we now use the fact that $x_{n,0} = x_{n,2}\circ b$, where
\begin{equation*}
	\begin{tikzpicture}[scale = 1/3]
	\draw[line width = 2pt] (0,-1) -- (0,2);
	\draw[line width = 2pt] (3,-1) -- (3,2);
	\draw[dashed] (0,-1) -- (3,-1);
	\draw[dashed] (0,2) -- (3,2);
	\foreach \s in {1,2}
	{	
		\filldraw[black] (0,\s -1) circle (5pt);
	};
	\draw (0,0) .. controls (1,0) and (1,1) .. (0,1);
	\node[anchor = west] at (4,0) {$ \in \Hom_{\ctl}{(0,2)} \subset \Hom_{\catl}{(0,2)},$};
	\node[anchor = east] at (0,0) {$b =$};
	\end{tikzpicture}
\end{equation*}
that $\eta_{1,1} \circ b = -(-q)^{-3/2} b$, and that $e_1e_3\hdots e_{n-1} \tau x_0(n) =  g x_{n,0}$, where $g$ is the endomorphism that adds a non-contractible loop. (See the definition of $\TheW{n,k;z}$ in section \ref{sec:affine.modules}.)
With these, equation \eqref{eq:nonMonic} is replaced by 
$$\iota\circ\phi[e_1e_3\dots e_{n-1}\tau(n)](x_{n,0}+I)=-\beta(x_{n,0}+I),$$
and the result follows.
\end{proof}

\begin{Prop}\label{coro:dim.irred.affine}
	For all $0 \leq k \leq n$,
		\begin{equation}
			\dim \TheL{n,k;z_{k+2}} = \dim \TheI{n,k},
		\end{equation}
	where $z_{r} = (-q)^{r/2}$.
\end{Prop}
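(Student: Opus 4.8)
The plan is to realize $\TheL{n,k;z_{k+2}}$ as $\Resphi{\TheI{n,k}}$ and then exploit that $\Resphi$ preserves dimension. First I would note that, by the one-to-one correspondence $f\leftrightarrow f(1_{\tl n})$ used in Lemma~\ref{lem:functorId}, the functor $\Resphi$ is nothing but pullback along the surjection $\phi$: as a vector space $\Resphi{\mathsf X}\simeq \mathsf X$, with $a\in\atl n$ acting through $\phi(a)$ (cf.\ \eqref{eq:oneToOne}). Two facts follow at once. On the one hand $\dim\Resphi{\mathsf X}=\dim\mathsf X$. On the other hand, since $\phi$ is surjective the $\atl n$-submodules of $\Resphi{\mathsf X}$ are exactly the $\tl n$-submodules of $\mathsf X$, so $\Resphi$ preserves irreducibility. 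In particular $\Resphi{\TheI{n,k}}$ is an irreducible $\atl n$-module of dimension $\dim\TheI{n,k}$.

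It then remains only to identify this irreducible, and the key is the adjunction $\Indphi\dashv\Resphi$ combined with Proposition~\ref{prop:indphi.thew}. Taking $\mathsf M=\TheW{n,k;z_{k+2}}$ and $\mathsf X=\TheI{n,k}$, I would compute
\[
\Hom_{\atl n}\!\left(\TheW{n,k;z_{k+2}},\,\Resphi{\TheI{n,k}}\right)\simeq \Hom_{\tl n}\!\left(\Indphi{\TheW{n,k;z_{k+2}}},\,\TheI{n,k}\right)=\Hom_{\tl n}\!\left(\TheS{n,k},\,\TheI{n,k}\right)\simeq\mathbb C,
\]
where the middle equality is Proposition~\ref{prop:indphi.thew} (recall $z_{k+2}=(-q)^{(k+2)/2}$ is the only value giving a nonzero induction) and the last isomorphism is read off the $\Hom$-table of Section~\ref{sec:modulesOfTln}. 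Hence there is a nonzero morphism $\TheW{n,k;z_{k+2}}\to\Resphi{\TheI{n,k}}$; since the target is irreducible, it is surjective, so $\Resphi{\TheI{n,k}}$ is a simple quotient of the cell module $\TheW{n,k;z_{k+2}}$.

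Finally I would invoke the cellular structure from Graham and Lehrer's Theorem~(2.8): the radical $\TheR{n,k;z_{k+2}}$ of the Gram form is the radical of $\TheW{n,k;z_{k+2}}$, and $\TheL{n,k;z_{k+2}}=\TheW{n,k;z_{k+2}}/\TheR{n,k;z_{k+2}}$ is simple. Thus the cell module has \emph{simple head} $\TheL{n,k;z_{k+2}}$, hence a unique maximal submodule and a unique simple quotient. The surjection built in the previous step therefore forces $\Resphi{\TheI{n,k}}\simeq\TheL{n,k;z_{k+2}}$, and consequently $\dim\TheL{n,k;z_{k+2}}=\dim\Resphi{\TheI{n,k}}=\dim\TheI{n,k}$.

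The main obstacle is the identification step: one must ensure that $\Resphi{\TheI{n,k}}$ lands on precisely the label $(k,z_{k+2})$ rather than on one of the other successors catalogued in Proposition~\ref{prop:weakest}. This is exactly what the adjunction computation pins down, by exhibiting a nonzero map out of the cell module at that specific value of $z$. I would also verify the degenerate cases $k=0$ and $\beta=0$, where the $\Hom$-table and the definition of $\paires_n$ must be read with the conventions stated in Sections~\ref{sec:modulesOfTln} and~\ref{sec:affine.modules}, since Proposition~\ref{prop:indphi.thew} already treats $k=0$ separately and the pair $(0,q)$ may be excluded from $\paires_n$.
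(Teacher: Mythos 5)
Your proof is correct, but it takes a genuinely different route from the paper's. The paper establishes the dimension formula by pure counting: for generic $q$ it identifies $\TheL{n,k;z_{k+2}}$ with $\TheW{n,k;z_{k+2}}/\TheW{n,k+2;z_{k}}$ and subtracts binomial coefficients; at a root of unity it starts at the top of the orbit $[k]$, where the cell quotient has a single composition factor, and works downward using $\dim\TheS{n,r}=\dim\TheI{n,r}+\dim\TheI{n,r^{+}}$. You instead prove the stronger statement $\Resphi{\TheI{n,k}}\simeq\TheL{n,k;z_{k+2}}$ directly --- via irreducibility of pullbacks along the surjection $\phi$, the adjunction combined with Proposition \ref{prop:indphi.thew}, and the simple-head property of cell modules from Graham and Lehrer's Theorem (2.8) --- and then read off the dimension. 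This reverses the paper's logical order: the paper proves the dimension count first and then uses it as the key input in the proof of Proposition \ref{prop:indphi.resphi.irred}, which is precisely the isomorphism you establish. Your argument is not circular, since its inputs (Proposition \ref{prop:indphi.thew}, the $\tl{n}$ $\Hom$-table, and the cellular structure theorem) are all independent of the dimension statement; what it buys is the identification $\Resphi{\TheI{n,k}}\simeq\TheL{n,k;z_{k+2}}$ essentially for free, whereas the paper must work for it separately afterwards. The one point requiring care, which you correctly flag, is the degenerate case $\beta=0$ with $n$ even and $k=0$: there $(0,z_{2})\sim(0,q)$ is excluded from $\paires_n$, the head of the cell module need not be described by the quoted theorem, and both sides of the equality are to be read as zero under the stated conventions.
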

\begin{proof}This result follows immediately from the corollary \ref{cor:dimL} of the appendix \ref{app:wpo}. However, we give here a proof that rests only on the previous proposition. 
If $q$ is generic, $$ \TheL{n,k;z_{k+2}} \simeq \TheW{n,k;z_{k+2}}/ \TheW{n,k+2;z_{k}}, $$
	so that
		$$ \dim \TheL{n,k;z_{k+2}} = \binom{n}{\frac{n-k}{2}} - \binom{n}{\frac{n-(k+2)}{2}} = \dim \TheS{n,k}.$$
	If $q$ is critical, then $ \TheW{n,k,z_{k+2}}/ \TheW{n,k+2;z_{k}}$  has two composition factors, $\TheL{n,k;z_{k+2}}$ and $\TheL{n,k^{+};z_{k^{+}+2}}$. Let $k_0\in [k]$ be such that $ k_0^{+} > n$, that is, let $k_0$ be the largest integer in the orbit $[k]$. Then 
$$\dim \TheL{n,k_0;z_{k_0+2}} = \dim \TheW{n,k_0;z_{k_0+2}}/ \TheW{n,k_0+2;z_{k_0}} =  \dim\TheS{n,k_0} = \dim \TheI{n,k_0}.$$
It then follows that 
	\begin{align*}
		\dim \TheL{n,k_0^{-};z_{k_0^{-}+2}} & = \dim \big(\TheW{n,k_0^{-};z_{k_0^{-}+2}}/ \TheW{n,k_0^{-}+2;z_{k_0^{-}}} \big) - \dim \TheL{n,k_0,z_{k_0+2}} \\
			& = \dim \TheS{n,k_0^{-}} - \dim \TheI{n,k_0} \\
			& = \dim \TheI{n,k_0^{-}},
	\end{align*}
where we used the fact that the composition factors of $\TheS{n,r}$ are $\TheI{n,r}$ and $\TheI{n,r^{+}}$. This argument can then be repeated to bring the second index to the desired value $k$ in the orbit $[k]$.
\end{proof}
%
%
This result can be used to get the induction and the restriction of irreducible modules. We leave out the trivial cases where $z \neq z_{k+2}=(-q)^{(k+2)/2}$ because, since induction is right exact, if the induction of $\TheW{n,k;z}$ is zero, so is the induction of all its quotients.
\begin{Prop}\label{prop:indphi.resphi.irred}
	For $0 \leq k \leq n$,
		\begin{equation}\label{eq:resphi.irred}
			\Resphi{\TheI{n,k}} \simeq \TheL{n,k;z_{k+2}}
		\end{equation}
	and 
		\begin{equation}
			 \TheI{n,k} \simeq \Indphi{\TheL{n,k;z_{k+2}}}.
		\end{equation}
\end{Prop}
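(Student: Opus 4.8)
The plan is to establish the first isomorphism and then read off the second essentially for free. The starting observation is that, via the correspondence $f\leftrightarrow x_f=f(1_{\tl n})$ recorded in \eqref{eq:oneToOne}, the functor $\Resphi$ is nothing but restriction of scalars along $\phi$: as a vector space $\Resphi{\mathsf X}$ is $\mathsf X$ itself, with $a\in\atl n$ acting as $\phi(a)$. Since $\phi$ is surjective (Proposition~\ref{prop:defphi}), a subspace of $\mathsf X$ is stable under $\atl n$ exactly when it is stable under $\tl n$; the two submodule lattices coincide, so $\Resphi$ carries simple $\tl n$-modules to simple $\atl n$-modules. In particular $\Resphi{\TheI{n,k}}$ is a simple $\atl n$-module, and by Graham and Lehrer's classification of the irreducibles (part (3) of their Theorem~(2.8), \cite{GLaffine}) there is a pair $(j,y)\in\paires_n$ with $\Resphi{\TheI{n,k}}\simeq\TheL{n,j;y}$.

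The next step is to pin down $(j,y)$. Applying $\Indphi$ and using $\Indphi\circ\Resphi\xrightarrow{\sim}\id_{\Mod\tl n}$ from Lemma~\ref{lem:functorId} gives $\TheI{n,k}\simeq\Indphi{\TheL{n,j;y}}$. Now $\TheL{n,j;y}$ is a quotient of the cell module $\TheW{n,j;y}$, and $\Indphi$, being a left adjoint, is right exact; hence $\Indphi{\TheL{n,j;y}}$ is a quotient of $\Indphi{\TheW{n,j;y}}$. By Proposition~\ref{prop:indphi.thew} this latter module vanishes unless $y=(-q)^{(j+2)/2}=z_{j+2}$, in which case it equals $\TheS{n,j}$. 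As $\TheI{n,k}\neq 0$, we are forced to have $y=z_{j+2}$, and $\TheI{n,k}$ is then a non-zero—hence simple—quotient of $\TheS{n,j}$. The exact sequence \eqref{eq:sesTheS} shows that the head of $\TheS{n,j}$ is $\TheI{n,j}$ (with $\TheS{n,j}=\TheI{n,j}$ when it is already irreducible), so its only simple quotient is $\TheI{n,j}$. Therefore $\TheI{n,k}\simeq\TheI{n,j}$, which forces $j=k$, and we conclude $\Resphi{\TheI{n,k}}\simeq\TheL{n,k;z_{k+2}}$.

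With the first isomorphism established, the second is immediate: applying $\Indphi$ once more and invoking $\Indphi\circ\Resphi\xrightarrow{\sim}\id_{\Mod\tl n}$ yields $\Indphi{\TheL{n,k;z_{k+2}}}\simeq\Indphi{\Resphi{\TheI{n,k}}}\simeq\TheI{n,k}$. I expect the only genuine obstacle to be the identification of $(j,y)$: simplicity of $\Resphi{\TheI{n,k}}$ only places it \emph{somewhere} in the Graham--Lehrer list, and it is the interplay between right exactness of $\Indphi$, the vanishing/non-vanishing dichotomy of Proposition~\ref{prop:indphi.thew}, and the known head of the standard module $\TheS{n,j}$ that singles out $(k,z_{k+2})$. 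The degenerate situation ($\beta=0$ with $n$ even, where $\TheI{n,0}$ is trivial and $(0,q)$ is removed from $\paires_n$) should be checked separately, but it follows the same pattern.
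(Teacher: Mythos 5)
Your argument is correct, but it takes a genuinely different route from the paper's. You exploit the fact that $\Resphi{}$ is restriction of scalars along the surjection $\phi$ (via \eqref{eq:oneToOne}), so it preserves simplicity; this places $\Resphi{\TheI{n,k}}$ in the Graham--Lehrer list at once, and you then pin down the label $(j,y)$ using right exactness of $\Indphi{}$, the vanishing dichotomy of Proposition \ref{prop:indphi.thew}, and the fact that the head of $\TheS{n,j}$ is $\TheI{n,j}$ (which follows from the non-split sequence \eqref{eq:sesTheS}). The paper instead first matches dimensions via Proposition \ref{coro:dim.irred.affine}, presents $\TheL{n,k;z_{k+2}}$ explicitly as a quotient of the cell module $\TheW{n,k;z_{k+2}}$ by the images of its successors, induces that presentation to get \eqref{eq:es.indphi.irred.2}, and then distinguishes the two remaining candidates $\TheS{n,k}$ and $\TheI{n,k}$ by a $\Hom$-group computation and Frobenius reciprocity. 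Your route is shorter and bypasses both the dimension count and the successor analysis of Proposition \ref{prop:weakest}; indeed, since restriction preserves dimension, it recovers Proposition \ref{coro:dim.irred.affine} as a corollary rather than consuming it as input. What the paper's longer route buys is the explicit induced presentation \eqref{eq:es.indphi.irred.2} and a workout of the $\Hom$-group machinery that is reused elsewhere. Two small points to make explicit if you write this up: the Graham--Lehrer classification you invoke is of \emph{finite-dimensional} irreducibles, which suffices since $\Resphi{\TheI{n,k}}$ is finite-dimensional; and the degenerate case $\beta=0$, $n$ even, $k=0$ that you flag is indeed vacuous, as both sides degenerate there.
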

\begin{proof}
 Proposition \ref{coro:dim.irred.affine} gives
		\begin{equation*}
			\dim \Resphi{\TheI{n,k}} =  \dim \TheL{n,k;z_{k+2}},
		\end{equation*}
since restriction preserves the dimensions of modules. The result will thus follow from the existence of a non-zero morphism between these two modules, since $\TheL{n,k;z_{k+2}}$ is irreducible. Theorem \ref{thm:GL34}, together with the proposition \ref{prop:weakest}, shows that there is an exact sequence
	\begin{equation*}
		 \TheW{n,k+2;z_{k}}  
		 + \TheW{n,k^+;z_{k^++2}}\longrightarrow \TheW{n,k;z_{k+2}} \longrightarrow \TheL{n,k;z_{k+2}} \rightarrow 0,
	\end{equation*}
where, again, $z_{k} = (-q)^{k/2}$, and we assumed that $k$ is non-critial. (If $k$ is critical, the second summand in the leftmost module is simply removed.) 
Proposition \ref{prop:indphi.thew} gives the induction of this sequence. The induced $\Indphi{\TheW{n,k^+;z_{k^++2}}}$ is simply $\TheS{n,k^+}$, but the induction of $\TheW{n,k+2;z_k}$ will be non-zero if and only if $z_k=z_{k+4}$:
	\begin{equation}\label{eq:es.indphi.irred.2}
		\delta_{z_k,z_{k+4}}\TheS{n,k+2}
		+\TheS{n,k^{+}}\overset{f}{\longrightarrow} \TheS{n,k} \longrightarrow  		\Indphi{\TheL{n,k;z_{k+2}}} \rightarrow 0.
	\end{equation}
The condition $z_k = z_{k+4}$ is equivalent to $q^2=1$. However, when $q^2 = 1$, the regular standard modules are all irreducible and non-isomorphic and then $\TheS{n,k+2}$ has to lie in the kernel of $f$. This leaves two possibilities for $\Indphi{\TheL{n,k;z_{k+2}}}$: it is either $\TheS{n,k}$, if $f=0$, or $\TheI{n,k}$ otherwise. Fortunately both $\Hom$-groups $\Hom(\TheS{n,k},\TheI{n,k'})$ and $\Hom(\TheI{n,k},\TheI{n,k'})$ coincide and thus $\Hom(\Indphi{\TheL{n,k;z_{k+2}}},\TheI{n,k'})\simeq\delta_{k,k'}\mathbb C$. Then Frobenius theorem gives
$$\delta_{k,k'}\mathbb C\simeq \Hom(\Indphi{\TheL{n,k;z_{k+2}}},\TheI{n,k'})\simeq \Hom(\TheL{n,k;z_{k+2}}, \Resphi{\TheI{n,k'}})$$
and proves the isomorphism $\TheL{n,k;z_{k+2}}\simeq \Resphi{\TheI{n,k}}$, that is equation \eqref{eq:resphi.irred}. 
	It also follows that $\Indphi(\TheL{n,k;z_{k+2}}) \simeq \Indphi{}\circ\Resphi{\TheI{n,k}} \simeq \TheI{n,k}$ by lemma \ref{lem:functorId}.
\end{proof}
%
%
\begin{Coro}
	For $0 \leq k \leq n$,
		\begin{equation}
			\Resphi{\TheS{n,k}} \simeq \TheW{n,k;z_{k+2}} / \TheW{n,k+2;z_{k}}.			\end{equation}
\end{Coro}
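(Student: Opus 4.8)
The plan is to produce the isomorphism from the adjunction $(\Indphi{},\Resphi{})$ after first upgrading $\Resphi{}$ to an exact functor. Write $R=\Resphi{\TheS{n,k}}$ and $Q=\TheW{n,k;z_{k+2}}/\TheW{n,k+2;z_{k}}$, the quotient by the image of the injective Graham--Lehrer morphism attached to the immediate successor $(k+2,z_k)$ of $(k,z_{k+2})$ (Theorem \ref{thm:GL34} and Proposition \ref{prop:weakest}; this is condition (A) of \eqref{eq:AandB} with $m=1$). By \eqref{eq:dimWnkz} and \eqref{eq:dimSnk},
\[
\dim Q=\binom{n}{(n-k)/2}-\binom{n}{(n-k)/2-1}=\dim\TheS{n,k}.
\]
The correspondence \eqref{eq:oneToOne} identifies $\Resphi{\mathsf X}$ with $\mathsf X$ as a vector space, so $\dim\Resphi{\mathsf X}=\dim\mathsf X$; since $\Resphi{}$ is a right adjoint it is left exact, and dimension preservation then forces it to be exact. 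In particular $\dim R=\dim\TheS{n,k}=\dim Q$, so it will suffice to construct a surjection $Q\to R$.

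First I would record the composition factors of $R$. Applying the exact functor $\Resphi{}$ to the non-split sequence \eqref{eq:sesTheS} and invoking Proposition \ref{prop:indphi.resphi.irred} yields
\[
0\longrightarrow \TheL{n,k^+;z_{k^++2}}\longrightarrow R\longrightarrow \TheL{n,k;z_{k+2}}\longrightarrow 0,
\]
so the only composition factors of $R$ are $\TheL{n,k;z_{k+2}}$ and $\TheL{n,k^+;z_{k^++2}}$, and both have non-zero induction, namely $\Indphi{\TheL{n,k;z_{k+2}}}\simeq\TheI{n,k}$ and $\Indphi{\TheL{n,k^+;z_{k^++2}}}\simeq\TheI{n,k^+}$. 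Next I would compute $\Indphi{Q}$: applying the right-exact functor $\Indphi{}$ to $0\to\TheW{n,k+2;z_k}\to\TheW{n,k;z_{k+2}}\to Q\to 0$ and using Proposition \ref{prop:indphi.thew} gives $\Indphi{\TheW{n,k;z_{k+2}}}\simeq\TheS{n,k}$ while $\Indphi{\TheW{n,k+2;z_k}}=0$, because the relevant comparison value is $z_{k+4}$ and $z_k\neq z_{k+4}$ whenever $q^2\neq 1$. Hence $\Indphi{Q}\simeq\TheS{n,k}$.

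The morphism itself comes from Frobenius reciprocity:
\[
\Hom_{\atl n}(Q,R)\simeq\Hom_{\tl n}(\Indphi{Q},\TheS{n,k})\simeq\End_{\tl n}(\TheS{n,k})\simeq\mathbb C,
\]
the last isomorphism because the $\Hom$-table of Section \ref{sec:modulesOfTln} shows $\TheS{n,k}$ is a brick. Let $\psi\colon Q\to R$ be the image of $\id_{\TheS{n,k}}$; by construction its adjunct $\hat\psi\colon\Indphi{Q}\to\TheS{n,k}$ is the isomorphism of the previous paragraph. Since the counit $\epsilon\colon\Indphi{}\circ\Resphi{}\xrightarrow{\sim}\id$ is an isomorphism (Lemma \ref{lem:functorId}) and $\hat\psi=\epsilon_{\TheS{n,k}}\circ\Indphi{\psi}$, the map $\Indphi{\psi}$ is an isomorphism, whence $\Indphi{C}=0$ by right-exactness, where $C$ is the cokernel of $\psi$. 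But $C$ is a quotient of $R$; if it were non-zero it would have a simple quotient, necessarily $\TheL{n,k;z_{k+2}}$ or $\TheL{n,k^+;z_{k^++2}}$ (the only composition factors of $R$), whose induction is $\TheI{n,k}$, resp.\ $\TheI{n,k^+}$, and non-zero, so right-exactness would give $\Indphi{C}\neq 0$, a contradiction. Hence $\psi$ is onto, and the dimension equality makes it an isomorphism.

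The main obstacle is precisely this last upgrade from ``matching composition factors'' to ``isomorphic'': identifying the two factors is immediate, but an abstract extension with these factors need not be $R$. The device that resolves it is the pair \emph{(adjunction-defined $\psi$, right-exactness of $\Indphi{}$)}, which lets $\Indphi{}$ detect any non-zero quotient of $R$ and thereby kill the cokernel. Finally I would dispatch the degenerate parameter values separately: when $q$ is generic, when $k$ is critical, or when $k=k_R$ (so that $k+2>n$ and $\TheW{n,k+2;z_k}=0$), the module $\TheS{n,k}$ is irreducible and $Q$ collapses to $\TheL{n,k;z_{k+2}}$, so the statement reduces directly to Proposition \ref{prop:indphi.resphi.irred} together with the composition-factor count in the proof of Proposition \ref{coro:dim.irred.affine}; and the truly degenerate case $q^2=1$ lies outside the root-of-unity hypotheses (where $\ell\geq 2$) used above.
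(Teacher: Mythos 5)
Your proof is correct, and it follows the same overall skeleton as the paper's — restrict the sequence \eqref{eq:sesTheS} to identify the two composition factors of $\Resphi{\TheS{n,k}}$, produce a canonical map via Frobenius reciprocity, prove it surjective, and finish with the dimension count — but the surjectivity mechanism is genuinely different. The paper computes $\Hom_{\atl{n}}(\TheW{n,k;z_{k+2}},\Resphi{\TheS{n,k}})\simeq\ringK$ and $\Hom_{\atl{n}}(\TheW{n,k+2;z_{k}},\Resphi{\TheS{n,k}})\simeq 0$ to get a non-zero map factoring through the quotient $\TheW{n,k;z_{k+2}}/\TheW{n,k+2;z_{k}}$, uses the indecomposability of $\Resphi{\TheS{n,k}}$ (via $\End(\Resphi{\TheS{n,k}})\simeq\End(\TheS{n,k})\simeq\ringK$) to reduce the possible images to two, and excludes the small image by the vanishing of $\Hom_{\atl{n}}(\TheW{n,k;z_{k+2}},\TheL{n,k^{+};z_{k^{+}+2}})$. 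You instead compute $\Indphi{}$ of the quotient directly (getting $\TheS{n,k}$, which requires the side condition $z_k\neq z_{k+4}$, i.e.\ $q^2\neq 1$, correctly relegated to your separately-treated degenerate cases), take the adjunct $\psi$ of the identity, and kill the cokernel by noting that $\Indphi{}$ is right exact and sends every non-zero quotient of $\Resphi{\TheS{n,k}}$ to something non-zero because both composition factors have non-zero induction. Your route avoids any appeal to the submodule lattice or indecomposability of $\Resphi{\TheS{n,k}}$, needing only its composition factors and Proposition \ref{prop:indphi.resphi.irred}; the paper's route avoids the explicit computation of $\Indphi{}$ on the quotient. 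Both are sound, and your preliminary observations (exactness of $\Resphi{}$ from left-exactness plus dimension preservation, and the dimension identity from \eqref{eq:dimSnk} and \eqref{eq:dimWnkz}) are correct.
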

\begin{proof}
	  If either $q$ is not a root of unity or $k$ is critical, $\TheS{n,k} \simeq \TheI{k}(n)$ and the result follows from the previous proposition. If $q$ is a root of unity and $k$ is non-critical, the restriction functor applied to the exact sequence \eqref{eq:sesTheS} of $\tl{n}$-modules
	\begin{equation*}
		0 \rightarrow \TheI{n,k^{+}} \longrightarrow \TheS{n,k} \longrightarrow \TheI{n,k} \rightarrow 0
	\end{equation*}
yields the short exact sequence of $\atl{n}$-modules
	\begin{equation}
		0 \rightarrow \TheL{n,k^{+};z_{k^{+}+2}} \longrightarrow \Resphi{\TheS{n,k}} \longrightarrow \TheL{n,k;z_{k+2}} \rightarrow 0.
	\end{equation}
This sequence cannot split as $\Resphi{\TheS{n,k}}$ is indecomposable by proposition \ref{prop:res.ind.hom}: $\End{\Resphi{\TheS{n,k}}} \simeq \End{\TheS{n,k}} \simeq \ringK$. Furthermore, Frobenius theorem with the previous propositions gives
	\begin{equation}\label{eq:hom.w.resphi.s}
		\Hom(\TheW{n,k;z_{k+2}}, \Resphi{\TheS{n,k}}) \simeq \ringK, \qquad \Hom(\TheW{n,k+2;z_{k}}, \Resphi{\TheS{n,k}}) \simeq 0,
	\end{equation}
	\begin{equation}\label{eq:hom.w.I}
		\Hom(\TheW{n,k;z_{k+2}}, \TheL{n,k^{+};z_{k^{+}+2}}) \simeq 0.
	\end{equation}
It follows from equation \eqref{eq:hom.w.resphi.s} that there exists a non-trivial morphism $f: \TheW{n,k;z_{k+2}} / \TheW{n,k+2;z_{k}} \to \Resphi{\TheS{k}}$. Since $\Resphi{\TheS{n,k}}$ has only one non-trivial submodule, $f$ is either surjective, or its image is isomorphic to $ \TheL{n,k^{+};z_{k^{+}+2}}$. The latter case would contradict equation \eqref{eq:hom.w.I} and $f$ is thus an isomorphism.
\end{proof}
%
%
Now that the restriction of the irreducible and standard modules are known, it remains to compute the restriction of other $\tl n$-modules with three or more composition factors, including the projective ones. However, it can be checked directly that such modules cannot correspond to any of the affine modules discussed so far, simply by comparing the composition factors. We shall thus restrict our dicussion to the restriction of the projective modules $\TheP{n,k}$. Let the \emph{affine module $\ThePaff{n,k}$} be defined as
\begin{equation}
	\ThePaff{n,k} \equiv\ \Resphi{\TheP{n,k}}.
\end{equation}
It is indecomposable and its endomorphism ring is two-dimensional. To see this, suppose that $\ThePaff{n,k}$ is decomposable, say $\ThePaff{n,k} \simeq A \oplus B $. Lemma \ref{lem:functorId} then gives $\TheP{n,k} \simeq \Indphi{A} \oplus \Indphi{B}$, which implies that one of these inductions is the zero module, say $\Indphi{B} = 0 $. Then $\mathbb C^2\simeq \Hom_{\tl{n}}(\TheP{n,k},\TheP{n,k})\simeq \Hom{\tl{n}}(\Indphi{A},\TheP{n,k})\simeq \Hom_{\atl{n}}(A, \ThePaff{n,k})$. It thus follows that $\End{\ThePaff{n,k}} \simeq \mathbb{C}^{2} \simeq \mathbb{C}^{2} \oplus \Hom_{\atl{n}}(B, A\oplus B)$, and thus that $B=0$. 
Finally the restriction of \eqref{eq:sesTheP} leads to the short exact sequence
\begin{equation}
	0 \rightarrow \TheW{n,k^{-};z_{k^{-}+2}} / \TheW{n,k^{-}+2;z_{k^{-}}} 
	\longrightarrow \ThePaff{n,k} \longrightarrow \TheW{n,k;z_{k+2}} / \TheW{n,k+2;z_{k}} \rightarrow 0.
\end{equation}
Figure \ref{fig:one} presents the structure of the affine modules $\Resphi{\TheS{n,k}}$ and $\Resphi{\TheP{n,k}}$ in terms of their composition factors.
\begin{figure}[h]
	\begin{tikzpicture}[scale = 1/3]
		\node (h) at (0,4.5) {$ \TheL{n,k;z_{k+2}} $};
		\node (b) at (3,1.5) {$ \TheL{n,k^{+};z_{k^{+}+2}}$};
		\draw (h) -- (b);
		\node[anchor = north] at (3/2,-1.25) {$\Resphi{\TheS{n,k}}$}; 
		\node[anchor = north] at (1.25,-2.75) { \small{$k+1\not\equiv 0\mod \ell $}};	 
	\end{tikzpicture}
\qquad \qquad\qquad
	\begin{tikzpicture}[scale = 1/3]
		\node (b) at (0,0) {$ \TheL{n,k;z_{k+2}} $};
		\node (g) at (-3,3) { $\TheL{n,k^{-};z_{k^{-}+2}}$ };
		\node (d) at (3,3) {$\TheL{n,k^{+};z_{k^{+}+2}}$ };
		\node (h) at (0,6) {$ \TheL{n,k;z_{k+2}} $};
		\draw (b) -- (g) -- (h) -- (d) -- (b);
		\node[anchor = north] at (0,-1.25) {$\ThePaff{n,k}$};
		\node[anchor = north] at (0,-2.75) 
			{ \small{$k  
		                \geq \ell $, $k+1\not\equiv 0\mod \ell $}};
	\end{tikzpicture}
\caption{Loewy diagrams of restrictions of some $\tl{n}$-modules.}\label{fig:one}
\end{figure}
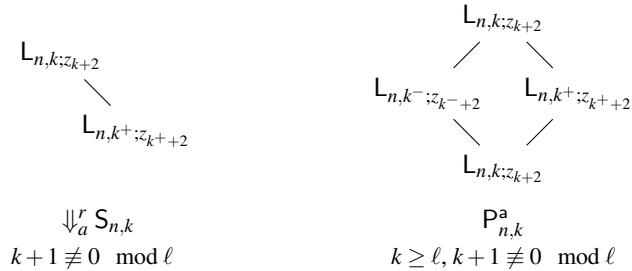
\end{subsection}
%
%
\begin{subsection}{The induction functor $\Indar{}$}\label{sub:Indar}
	We now compute the induction functor, obtained from the inclusion of the regular Temperley-Lieb algebra into its affine version, on the standard module $\TheS{n,k}$. 
\begin{Prop}\label{prop:inductionS}{\em (i)} For $k=1$, $\Indar{\TheS{n,1}}\simeq \TheW{n,1}$. For $k=0$ and $\beta\neq 0$, $\Indar{\TheS{n,0}}\simeq \TheW{n,0}$.

\noindent{\em (ii)} For $k\geq 2$, there is a natural inclusion $i_k:\Indar{\TheS{n,k-2}}{\longrightarrow}\ \Indar{\TheS{n,k}}$ and the short sequence
\begin{equation}\label{eq:exactIndS}0\longrightarrow\ \Indar{\TheS{n,k-2}}\overset{i_k}{\longrightarrow}\ \Indar{\TheS{n,k}}\longrightarrow\TheW{n,k} \longrightarrow 0
\end{equation}
is exact. (In the case $k=2$, this statement holds when $\beta\neq0$.)
\end{Prop}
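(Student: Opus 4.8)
The plan is to build the surjection in \eqref{eq:exactIndS} by Frobenius reciprocity and then to identify its kernel diagrammatically. Since $\TheS{n,k}$ is cyclic with rank-zero generator $z_{n,k}$, which is at the same time a monic $(k,n)$-diagram, the inclusion of rank-zero monic diagrams into all monic diagrams is a $\tl n$-morphism $\TheS{n,k}\hookrightarrow \Resar{\TheW{n,k}}$. By adjunction of $(\Indar{},\Resar{})$ this corresponds to an $\atl n$-morphism $p_k\colon \Indar{\TheS{n,k}}\to\TheW{n,k}$, $a\otimes_{\tl n}z_{n,k}\mapsto a\,z_{n,k}$. It is surjective because $\TheW{n,k}=\atl n\,x_{n,k}$ with $x_{n,k}=z_{n,k}$ (Proposition \ref{prop:indphi.thew}). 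So the entire content of the statement is the computation of $\ker p_k$, and part (i) is simply the base case $k\le 1$ of the induction in (ii), where the term $\Indar{\TheS{n,k-2}}$ vanishes.

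I would first dispose of the base cases. For $k=1$ any composite $a\circ z_{n,1}$ has an odd number of through lines, hence at least one, so it is monic; for $k=0$ every $(0,n)$-diagram is automatically monic. In either case $p_k$ kills no diagram for purely combinatorial reasons, so I am reduced to checking injectivity, which I would do by exhibiting a normal form for $\atl n\otimes_{\tl n}\TheS{n,k}$ and matching it with the $v_i\tau^j$ basis of $\TheW{n,k}$. For $k=0$ the hypothesis $\beta\neq0$ is exactly what is needed to normalise the contractible loops produced when arcs are closed, so that $\TheS{n,0}$ is the genuine standard module and no extra collapse occurs.

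For the inductive step I would construct $i_k$ explicitly. Capping two adjacent through lines of $z_{n,k}$ by a \emph{regular} generator $u\in\tl n$ produces a non-monic diagram, so $u\,z_{n,k}=0$ in $\TheS{n,k}$ and $u\otimes_{\tl n} z_{n,k}=0$ in the induced module; the correct move is therefore an \emph{affine} cap $a_0\in\atl n$ that joins two through lines across the seam with a power of $\tau$. Then $a_0\,z_{n,k}$ is a diagram with $k-2$ through lines arranged exactly as $z_{n,k-2}$, and it transforms under the left $\tl n$-action in the same way as $z_{n,k-2}$; hence the assignment $z_{n,k-2}\mapsto a_0\otimes_{\tl n} z_{n,k}$ annihilates $\operatorname{Ann}_{\tl n}(z_{n,k-2})$ and, by adjunction, defines the $\atl n$-morphism $i_k\colon \Indar{\TheS{n,k-2}}\to\Indar{\TheS{n,k}}$. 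Since its image consists of diagrams with at most $k-2$ through lines, it is non-monic and $p_k\circ i_k=0$, giving $\operatorname{im}i_k\subseteq\ker p_k$.

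The real work is to show the sequence is short exact, that is, $i_k$ is injective and $\ker p_k\subseteq \operatorname{im}i_k$. Writing $\TheS{n,k}=\tl n/I$ and $\TheW{n,k}=\atl n/A$ with $A=\operatorname{Ann}_{\atl n}(x_{n,k})$, this reduces to two statements about annihilators of the standard generator: that $A=\atl n\,a_0+\atl n\,I$, so that $\ker p_k=A/\atl n I$ is cyclic on $\overline{a_0}$ and hence a quotient of $\Indar{\TheS{n,k-2}}$; and that $\{u\in\atl n : u\,a_0\in\atl n\,I\}=\atl n\,I'$ with $I'=\operatorname{Ann}_{\tl n}(z_{n,k-2})$, so that this quotient is all of $\Indar{\TheS{n,k-2}}$, forcing $i_k$ to be injective. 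I expect the first of these---that the annihilator of $x_{n,k}$ in the affine standard module is generated over the regular annihilator by the single affine cap $a_0$, which is essentially a diagrammatic presentation of $\TheW{n,k}$---to be the main obstacle. It is most cleanly settled by producing a normal form for $\atl n\otimes_{\tl n}\TheS{n,k}$ in which each element is a combination of $a\otimes_{\tl n} z_{n,k}$ sorted by the number of surviving through lines, the top ($k$ through lines) stratum mapping bijectively under $p_k$ onto the $v_i\tau^j$ basis of $\TheW{n,k}$ and the lower stratum matching the corresponding basis of $\Indar{\TheS{n,k-2}}$. Finally, for $k=2$ the cap closes a loop, and $\beta\neq0$ guarantees that this loop contributes an invertible scalar rather than annihilating $a_0$, which is why that case carries the same proviso as $k=0$.
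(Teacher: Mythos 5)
Your overall architecture coincides with the paper's: a surjection $p_k\colon \Indar{\TheS{n,k}}\to\TheW{n,k}$ obtained by adjunction from the inclusion of rank-zero monic diagrams, an inclusion $i_k$ built from a single ``affine cap'' (which is exactly the paper's element $r_{n,k}^1$, the diagram obtained by bending the top and bottom through lines of the generator across the periodic seam), and a stratification of the induced module by the number of surviving through lines. The base cases $k=0,1$ are also handled the same way.

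However, there is a genuine gap: the statement you yourself identify as ``the main obstacle'' --- that $\operatorname{Ann}_{\atl n}(x_{n,k})=\atl n\,a_0+\atl n\,I$, equivalently that every element of $\Indar{\TheS{n,k}}$ admits a normal form $\sum l\circ r_{n,k}^j\otimes_{\tl n}x_{n,k}$ with $l$ monic --- is announced (``it is most cleanly settled by producing a normal form'') but never actually established, and it is the entire content of the proposition. This is not a routine verification. The paper's argument takes a representative $a$ with the \emph{minimal} number $k'$ of through lines in its class $a\otimes_{\tl n}x_{n,k}$, replaces $a$ by $ay_{n,k}$ (where $y_{n,k}\in\tl n$ satisfies $y_{n,k}x_{n,k}=x_{n,k}$) to expose the right half of the diagram, and then must show two non-obvious things: first, that each of the $(k-k')/2$ extra arcs on the right side necessarily winds across the top/bottom of the fundamental rectangle --- otherwise one can factor $ay_{n,k}=ay_{n,k}b$ with $b\in\tl n$ satisfying $bx_{n,k}=0$, contradicting minimality of $k'$; and second, that non-crossing then forces a \emph{unique} configuration of these arcs, namely $r_{n,k}^{j}$. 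Without this step you have neither $\ker p_k\subseteq\operatorname{im}i_k$ nor the injectivity of $i_k$ (your second annihilator identity), so the exactness of \eqref{eq:exactIndS} is not proved. A secondary, smaller omission: even granting the normal form, you need the elements $l\circ r_{n,k}^j\otimes_{\tl n}x_{n,k}$ for distinct monic $l$ and fixed $j$ to be linearly independent in order to conclude both that $i_k$ is injective and that the top stratum maps bijectively onto the diagram basis of $\TheW{n,k}$; this uniqueness of the normal form should be stated and justified rather than absorbed into the word ``matching''.
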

The proof will use the definition of the generator $x_{n,k}$ introduced earlier and new elements: $y_{n,k}\in\tl n$ and a monic $r_{n,k}^j\in\Hom(n,k-2j)$ defined as:
\begin{equation*}
\begin{tikzpicture}[scale = 1/3]
	\draw[line width = 2pt] (0,-2) -- (0,8);
	\draw[line width = 2pt] (3,-2) -- (3,8);
	\draw[dashed] (0,-2) -- (3,-2);
	\draw[dashed] (0,8) -- (3,8);
	\foreach \s in {0,2,3,4,5,7,8}
	{	
		\filldraw[black] (0,\s -1) circle (5pt);
	};
	\filldraw[black] (3,-1) circle (5pt);
	\filldraw[black] (3,1) circle (5pt);
	\filldraw[black] (3,2) circle (5pt);
	\draw (0,-1) -- (3,-1);
	\draw (0,1) -- (3,1);
	\draw (0,2) -- (3,2);
	\draw (0,3) .. controls (1,3) and (1,4) .. (0,4);
	\draw (0,6) .. controls (1,6) and (1,7) .. (0,7);
	\node[anchor = west] at (0.5,0) {$\dots$};
	\node[anchor = west] at (0,11/2) {$\vdots$};
	\node[anchor = east] at (0,7/2) {$x_{n,k} =$ };
	\draw[<->] (4,-1) -- (4,2);
	\node[anchor = west] at (4,0.5) {$k$};
	\node[anchor = west] at (4,7/2) {,};
	\node at (0,-5) {$\phantom{m}$};
\end{tikzpicture}\qquad
\begin{tikzpicture}[scale = 1/3]
	\draw[line width = 2pt] (0,-2) -- (0,8);
	\draw[line width = 2pt] (3,-2) -- (3,8);
	\draw[dashed] (0,-2) -- (3,-2);
	\draw[dashed] (0,8) -- (3,8);
	\foreach \s in {0,2,3,4,5,7,8}
	{	
		\filldraw[black] (0,\s -1) circle (5pt);
	};
	\foreach \s in {0,2,3,4,6,7,8}
	{	
		\filldraw[black] (3,\s -1) circle (5pt);
	};
	\draw (0,3) .. controls (1,3) and (1,4) .. (0,4);
	\draw (0,6) .. controls (1,6) and (1,7) .. (0,7);
	\node[anchor = west] at (0,11/2) {$\vdots$};
	\node[anchor = west] at (0.5,0) {$\dots$};
	\node[anchor = east] at (0,7/2) {$y_{n,k} =$ };
	\draw (0,-1) -- (3,-1);
	\draw (0,1) -- (3,1);
	\draw (0,2) .. controls (2,2) and (1,7) .. (3,7);
	\draw (3,2) .. controls (2,2) and (2,3) .. (3,3);
	\draw (3,5) .. controls (2,5) and (2,6) .. (3,6);
	\node[anchor = east] at (3,9/2) {$\vdots$};
	\draw[<->] (4,-1) -- (4,1);
	\node[anchor = west] at (4,0) {$k-1$};
	\node[anchor = west] at (6,7/2) {and\qquad\qquad};
	\node at (0,-5) {$\phantom{m}$};
\end{tikzpicture}
\begin{tikzpicture}[scale = 1/3]
\draw[line width = 2pt] (0,-1) -- (0,15);
	\draw[line width = 2pt] (3,-1) -- (3,15);
	\draw[dashed] (0,-1) -- (3,-1);
	\draw[dashed] (0,15) -- (3,15);
	\foreach \s in {0,2,3,5,6,8,9,10,12,13,14}
	{	
		\filldraw[black] (3,\s) circle (5pt);
	};
	\filldraw[black] (0,3) circle (5pt);
	\filldraw[black] (0,5) circle (5pt);
	\draw (0,3) -- (3,3);
	\draw (0,5) -- (3,5);
	\draw[<->] (4,3) -- (4,5);
	\node[anchor = west] at (4,4) {$k-2j$};
	\node[anchor = west] at (0.5,4) {$\dots$};
	\node[anchor = east] at (-0.5,13/2) {$r_{n,k}^j = $};
	\draw[<->] (0,-2) -- (3,-2);
	\node[anchor = east] at (4,-2) {$j$};
	\draw[<->] (0,16) -- (3,16);
	\node[anchor = east] at (4,16) {$j$};
	\draw (3,0) .. controls (2,0) and (2,-1/2) .. (2,-1);
	\node[anchor = east] at (3,1) {$\vdots$};
	\draw (3,2) .. controls (2,2) and (1.5,1) .. (1.5,-1);
	\draw[<->] (4,14) -- (4,9); \node[anchor = west] at (4,11.5) {$n-k+1$};
	\draw (3,14) .. controls (2,14) and (2,29/2) .. (2,15);
	\draw (3,13) .. controls (2,13) and (2,12) .. (3,12);
	\node[anchor = east] at (3,11.5) {$\vdots$};
	\draw[<->] (4,8) -- (4,6); \node[anchor = west] at (4,7) {$j-1$};
	\draw (3,10) .. controls (2,10) and (2,9) .. (3,9);
	\draw (3,8) .. controls (2,8) and (1.5,9) .. (1.5,15);
	\node[anchor = east] at (3,7.5) {$\vdots$};
	\draw (3,6) .. controls (2,6) and (1,7) .. (1,15);
	\draw[<->] (4,2) -- (4,0); \node[anchor = west] at (4,1) {$j$};
\end{tikzpicture}
\end{equation*}
The new elements $y_{n,k}$ and $r_{n,k}^j$ are defined for $k>0$ and, in the case of $r_{n,k}^j$, for integers $j$ such that $0\leq j\leq k/2$. 

We recall a definition and a simple observation from section \ref{sec:tlcategories}. A {\em through line} in a diagram in $\Hom(k,n)$ is a line that ties nodes from both sides of the diagram. The number of through lines in the concatenation $ab$ of $a\in\Hom_{\catl}(l,m)$ and $b\in\Hom_{\catl}(n,l)$ is at most the minimum of the number of through lines in $a$ and $b$. An immediate consequence is that, if $a\in\tl n$ has less than $k$ through lines, then $az=0$ for any $z\in\TheS{n,k}$. A more precise definition of the element $r_{n,k}^j$ can also be given: it is obtained from the following element in $\Hom(n,k)$
$$\begin{tikzpicture}[scale = 1/3]
	\draw[line width = 2pt] (0,-2) -- (0,8);
	\draw[line width = 2pt] (3,-2) -- (3,8);
	\draw[dashed] (0,-2) -- (3,-2);
	\draw[dashed] (0,8) -- (3,8);
	\foreach \s in {0,2,8}
	{	
		\filldraw[black] (0,\s -1) circle (5pt);
	};
	\foreach \s in {0,2,3,4,6,7,8}
	{	
		\filldraw[black] (3,\s -1) circle (5pt);
	};
	\node[anchor = west] at (0.5,0) {$\dots$};
	\draw (0,-1) -- (3,-1);
	\draw (0,1) -- (3,1);
	\draw (0,7) -- (3,7);
	\draw (3,2) .. controls (2,2) and (2,3) .. (3,3);
	\draw (3,5) .. controls (2,5) and (2,6) .. (3,6);
	\node[anchor = east] at (3,9/2) {$\vdots$};
	\draw[<->] (4,-1) -- (4,1);
	\node[anchor = west] at (4,0) {$k-1$};
\end{tikzpicture}
$$
by bending upward the top $j$ through lines and downward the bottom $j$ ones.

\begin{proof}The proof is broken into several claims, each being stated in italics.

\medskip
%
%
\noindent{\itshape For $k>0$ and $0\leq j\leq k/2$, $y_{n,k}x_{n,k}=x_{n,k}$ and $r_{n,k}^jy_{n,k}=r_{n,k}^j$.} This result is obtained by simply drawing the diagrams involved.

\medskip
%
%
\noindent{\itshape The induced module $\Indar{\TheS{n,k}}$ is isomorphic to $\TheW{n,k}$ if $k=1$, or if $k=0$ and $\beta\neq0$.} The case $n=k=1$ is trivial.
For $n\geq 3,k=1$, the previous proposition shows that any element of $\Indar{\TheS{n,k}}$ is of the form $a\otimes_{\tl n}x_{n,1}=ay_{n,1}\otimes_{\tl n} x_{n,1}$ and thus of the form $b\otimes_{\tl n}x_{n,1}$ with $b$ of the form
$$\begin{tikzpicture}[baseline={(current bounding box.center)},scale=1/3]
	\draw[line width = 2pt] (-3,1) -- (-3,8);
	\draw[line width = 2pt] (0,1) -- (0,8);
	\draw[line width = 2pt] (3,1) -- (3,8);
	\draw[dashed] (-3,1) -- (3,1);
	\draw[dashed] (-3,8) -- (3,8);
	\foreach \s in {3,4,6,7,8}
	{	
		\filldraw[black] (-3,\s -1) circle (5pt);
	};
	\filldraw[black] (0,7) circle (5pt);
	\foreach \s in {3,4,6,7,8}
	{	
		\filldraw[black] (3,\s -1) circle (5pt);
	};
	\draw (3,2) .. controls (2,2) and (2,3) .. (3,3);
	\draw (3,5) .. controls (2,5) and (2,6) .. (3,6);
	\node[anchor = west] at (-3,9/2) {$\vdots$};
	\node[anchor = east] at (3,9/2) {$\vdots$};
	\node[anchor = east] at (-0.5,9/2) {$w$};
	\draw[dotted,thick] (0,7) .. controls (-1,7) .. (-1.5,6);
	\draw (0,7) -- (3,7);
\end{tikzpicture}
$$
where $w$ is a diagram in $\Hom(1,n)$. Moreover the action of $\atl n$ on $\Indar{\TheS{n,1}}$ transforms $\Hom(1,n)$ into a module isomorphic to $\TheW{n,1}$. 

If $k=0$ and $\beta\neq0$, then $x_{n,0}=\beta^{-n/2}(u_1u_3\dots u_{n-1})x_{n,0}$. Any element of $\Indar{\TheS{n,0}}$ has the form 
$$a\otimes_{\tl n}x_{n,0}=\beta^{-n/2}a(e_1e_3\dots e_{n-1})\otimes_{\tl n}x_{n,0}=\beta^{-n/2}w(x_{n,0})^t\otimes_{\tl n}x_{n,0}$$ 
where $(x_{n,0})^t$ denotes the left-right mirror image of $x_{n,0}$ (see section \ref{sec:tlcategories}) and $w\in\Hom_{\catl}(0,n)$. Clearly, the action of $\atl n$ on $w(x_{n,0})^t\otimes_{\tl n}x_{n,0}$ is isomorphic to its action on $\Hom_{\catl}(0,n)\simeq\TheW{n,0}$. 

\medskip
%
%
\noindent{\itshape Every non-zero element of $\Indar{\TheS{n,k}}$, $k>0$, can be written as a sum of elements of the form $l \circ r_{n,k}^j \otimes_{\tl{n}} x_{n,k}$ where $0\leq j\leq k/2 $ and $l \in \Hom(k-2j,n)$ is a monic diagram.} By linearity and because $x_{n,k}$ is a generator of $\TheS{n,k}$, it is sufficient to check the claim for elements of the form $a\otimes_{\tl n}x_{n,k}$ with $a\in\atl n$. Let $a$ be chosen in such a way that $a$ has $k'\leq k$ through lines and, for any other choice $a'\in\atl n$ such that $a'\otimes_{\tl n}x_{n,k}=a\otimes_{\tl n}x_{n,k}$, $a'$ has at least $k'$ through lines. By the first observation of this proof, $a\otimes_{\tl n}x_{n,k}=a\otimes_{\tl n}y_{n,k}x_{n,k}$ and it is possible to replace $a$ by $ay_{n,k}$ in the study. The element $ay_{n,k}$ has also $k'$, since this is the minimal number of through lines in this equivalence class $a\otimes_{\tl n}x_{n,k}$ and concatenation cannot increase the number of through lines.

The element $ay_{n,k}\in\atl n$ can be factored into left and right parts, $ay_{n,k}=l\circ r$, where $l\in\Hom_{\catl}(k',n)$ and $r\in\Hom_{\catl}(n,k')$. The right part is partially revealed in
$$ay_{n,k}= \begin{tikzpicture}[baseline={(current bounding box.center)},scale=1/3]
	\draw[line width = 2pt] (0,-2) -- (0,-0.8);
	\draw[line width = 2pt] (0,1.7) -- (0,3.8);
	\draw[line width = 2pt] (0,5.9) -- (0,9);
	\draw[dashed,line width = 2pt] (0,-0.8) -- (0,1.7);
	\draw[dashed,line width = 2pt] (0,3.8) -- (0,6.0);
	\draw[line width = 2pt] (3,-2) -- (3,-0.8);
	\draw[line width = 2pt] (3,1.7) -- (3,3.8);
	\draw[line width = 2pt] (3,5.9) -- (3,9);
	\draw[dashed,line width = 2pt] (3,-0.8) -- (3,1.7);
	\draw[dashed,line width = 2pt] (3,3.8) -- (3,6.0);
	\draw[dashed] (0,-2) -- (3,-2);
	\draw[dashed] (0,9) -- (3,9);
	\foreach \s in {0,2.5,3.5,4.5,7,8,9}
	{	
		\filldraw[black] (0,\s -1) circle (5pt);
		\filldraw[black] (3,\s -1) circle (5pt);
	};
	\draw[dotted,thick] (2,-1) -- (3,-1);
	\draw[dotted,thick] (2,1.5) -- (3,1.5);
	\draw[dotted,thick] (2,8) -- (3,8);
	\draw (3,2.5) .. controls (2,2.5) and (2,3.5) .. (3,3.5);
	\draw (3,6) .. controls (2,6) and (2,7) .. (3,7);
	\draw[<->] (4,-1) -- (4,1.5);
	\node[anchor = west] at (4,0.25) {$k-1$};
	\draw[<->] (4,2.5) -- (4,7);
	\node[anchor = west] at (4,4.75) {$n-k$};
\end{tikzpicture}
$$
where, of the remaining $k$ nodes on the right side, $k'$ must be tied by a through line and the remaining $n-(n-k)-k'=k-k'$ must tied pairwise by arcs. If $k=k'$, then there are no additional arcs and the right part $r$ coincide with $r_{n,k}^0$. If $k>k'$, then the number of additional arcs is $j=(k-k')/2$. In any of the these $j$ arcs do not cross the top and bottom boundaries, then an element $b\in\tl n$ that annihilates $x_{n,k}$ can be factored out as in $ay_{n,k}=ay_{n,k}b$ and then $ay_{n,k}\otimes_{\tl n}x_{n,k}=ay_{n,k}\otimes_{\tl n}bx_{n,k}=0$, a contradiction. (This factorisation is always possible. The following examples, drawn for $n=5, k=3, k'=1$, show how this is done by drawing only the right part of $ay_{n,k}$:
$$\begin{tikzpicture}[baseline={(current bounding box.center)},scale=1/3]
	\draw[line width = 2pt] (3,1) -- (3,7);
	\draw[dashed] (0,1) -- (3,1);
	\draw[dashed] (0,7) -- (3,7);
	\foreach \s in {2,3,4,5,6}
	{	
		\filldraw[black] (3,\s) circle (5pt);
	};
	\draw (0,2) -- (3,2);
	\draw (3,6) .. controls (1,6) and (1,3) .. (3,3);
	\draw (3,4) .. controls (2,4) and (2,5) .. (3,5);
\end{tikzpicture}=
\begin{tikzpicture}[baseline={(current bounding box.center)},scale=1/3]
	\draw[line width = 2pt] (3,1) -- (3,7);
	\draw[line width = 2pt] (6,1) -- (6,7);
	\draw[dashed] (0,1) -- (6,1);
	\draw[dashed] (0,7) -- (6,7);
	\foreach \s in {2,3,4,5,6}
	{	
		\filldraw[black] (3,\s) circle (5pt);
		\filldraw[black] (6,\s) circle (5pt);
	};
	\draw (0,2) -- (3,2);
	\draw (3,6) .. controls (1,6) and (1,3) .. (3,3);
	\draw (3,4) .. controls (2,4) and (2,5) .. (3,5);
	\draw (6,6) .. controls (4,6) and (4,3) .. (6,3);
	\draw (6,4) .. controls (5,4) and (5,5) .. (6,5);
	\draw (3,6) .. controls (4,6) and (4,5) .. (3,5);
	\draw (3,3) .. controls (4,3) and (4,2) .. (3,2);
	\draw (3,4) .. controls (4.5,4) and (4.5,2) .. (6,2);
\end{tikzpicture}\qquad \textrm{and}\qquad
\begin{tikzpicture}[baseline={(current bounding box.center)},scale=1/3]
	\draw[line width = 2pt] (3,1) -- (3,7);
	\draw[dashed] (0,1) -- (3,1);
	\draw[dashed] (0,7) -- (3,7);
	\foreach \s in {2,3,4,5,6}
	{	
		\filldraw[black] (3,\s) circle (5pt);
	};
	\draw (0,6) -- (3,6);
	\draw (3,4) .. controls (2,4) and (2,5) .. (3,5);
	\draw (3,2) .. controls (2,2) and (2,3) .. (3,3);
\end{tikzpicture}=
\begin{tikzpicture}[baseline={(current bounding box.center)},scale=1/3]
	\draw[line width = 2pt] (3,1) -- (3,7);
	\draw[line width = 2pt] (6,1) -- (6,7);
	\draw[dashed] (0,1) -- (6,1);
	\draw[dashed] (0,7) -- (6,7);
	\foreach \s in {2,3,4,5,6}
	{	
		\filldraw[black] (3,\s) circle (5pt);
		\filldraw[black] (6,\s) circle (5pt);
	};
	\draw (0,6) -- (6,6);
	\draw (3,5) -- (6,5);
	\draw (3,4) .. controls (2,4) and (2,5) .. (3,5);
	\draw (3,2) .. controls (2,2) and (2,3) .. (3,3);
	\draw (3,3) .. controls (4,3) and (4,4) .. (3,4);
	\draw (3,2) .. controls (4.5,2) and (4.5,4) .. (6,4);
	\draw (6,2) .. controls (5,2) and (5,3) .. (6,3);
	\node[anchor = west] at (7,4) {.};
\end{tikzpicture}
$$
The example on the left is when the top node on the right is part of an arc. The $b$ has then less than $k=3$ through lines and $bx_{n,k}=0$. The example on the right has an arc between the $k-1$ nodes on the bottom and, now, the $b$ will join two through lines of $x_{n,k}$.)
Hence all $j$ arcs on the right side must cross the top and bottom boundaries. Because a through line may not cross an arc, there is a single possible positions for the through lines and additional arcs and the right part must be $r=r_{n,k}^j$. Every element $a\otimes_{\tl n}$, with an $a$ having the minimal number $k'$ of through lines, can be written as $l\circ r_{n,k}^j\otimes_{\tl n}x_{n,k}$ as claimed.

\medskip
%
%

\noindent{\itshape For $k\geq 2$, the map $i_k:\Indar{\TheS{n,k-2}}\to\Indar{\TheS{n,k}}$ defined by $l\circ r_{n,k-2}^j\otimes_{\tl n}x_{n,k-2}\mapsto l\circ r_{n,k}^{j+1}\otimes_{\tl n}x_{n,k}$ is an injective $\atl n$-morphism.} The previous step has shown that, for a fixed $j$, the elements of the form $l\circ r_{n,k}^j\otimes_{\tl n}x_{n,k}$ are completely determined by $l$. The map $i_k$ is thus injective. Moreover $\atl n$ acts, in both modules, only on this $l$. The two actions coincide and the statement follows. Note that $i_k(\Indar{\TheS{n,k-2}})$ is spanned by elements of the form $l\circ r_{n,k}^j\otimes_{\tl n}x_{n,k}$ with $j\geq 1$.

\medskip
%
%
\noindent{\itshape The cokernel of the inclusion $i_k$ is isomorphic to the standard module $\TheW{n,k}$.}
Since $i_k(\Indar{\TheS{n,k-2}})=\textrm{span}\{l\circ r_{n,k}^j\otimes_{\tl n}x_{n,k}\, |\, j\geq 1, l\textrm{ a diagram}\in\Hom(k-2j,n)\}$, then $\Indar{\TheS{n,k}}/i_k(\Indar{\TheS{n,k-2}})$ has a basis $\{l\circ r_{n,k}^0\otimes_{\tl n}x_{n,k}\, |\, l\textrm{ a diagram}\in\Hom(k,n)\}$ that is in one-to-one correspondence with the basis of $\TheW{n,k}$ made of the diagrams in $\Hom(k,n)$. It remains to compare the action on $\Indar{\TheS{n,k}}/i_k(\Indar{\TheS{n,k-2}})$ with that on $\TheW{n,k}$. The action of $a\in\atl n$ on $l\circ r_{n,k}^0\otimes_{\tl n}x_{n,k}\in\Indar{\TheS{n,k}}$ and on $l\in\TheW{n,k}$ coincide as long as no pair of through lines of $l$ are joined. But if such a pair if joined, the diagram $al\in\TheW{n,k}$ will have less than $k$ through lines and will thus be identified to $0$. In $\Indar{\TheS{n,k}}$ the element $al\circ r_{n,k}^0\otimes_{\tl n}x_{n,k}$ is not zero, but the diagram $al\circ r_{n,k}^0$ has now less than $k$ through lines. It thus belongs to the image $i_k(\Indar{\TheS{n,k-2}})$ and maps to zero under the projection onto $\Indar{\TheS{n,k}}/i_k(\Indar{\TheS{n,k-2}})$. Hence the $\atl n$-actions on $\Indar{\TheS{n,k}}/i_k(\Indar{\TheS{n,k-2}})$ and $\TheW{n,k}$ coincide. The exactness of \eqref{eq:exactIndS} follows.
\end{proof}
We shall prove that $\Indar{\TheS{n,k}}$ is indecomposable; the following lemma will slightly lighten the proof.
\begin{Lem}\label{lem:hom.indS.W}
	If $k'>k$, then
		\begin{equation*}
			\Hom\left( \Indar{\TheS{n,k}}, \TheW{n,k'} \right) \simeq 0.
		\end{equation*}
\end{Lem}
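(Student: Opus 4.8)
The plan is to induct on $k$ within a fixed parity class, using the short exact sequence \eqref{eq:exactIndS} of Proposition \ref{prop:inductionS} together with the vanishing statement of Proposition \ref{prop:hom.w.to.w}, which asserts $\Hom_{\atl n}(\TheW{n,k},\TheW{n,k'})=\delta_{k,k'}\mathbb C[\tau,\tau^{-1}]$.

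For the base cases, take $k=1$ (or $k=0$ with $\beta\neq0$). Here Proposition \ref{prop:inductionS}(i) gives $\Indar{\TheS{n,k}}\simeq\TheW{n,k}$, so that $\Hom(\Indar{\TheS{n,k}},\TheW{n,k'})\simeq\Hom_{\atl n}(\TheW{n,k},\TheW{n,k'})$, and this vanishes for $k'>k$ because $\delta_{k,k'}=0$.

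For the inductive step, fix $k\geq2$ and $k'>k$, and assume the result for $k-2$. I would apply the contravariant functor $\Hom(-,\TheW{n,k'})$ to the exact sequence \eqref{eq:exactIndS}. Since this functor is left exact, it yields the exact sequence
$$0\to\Hom(\TheW{n,k},\TheW{n,k'})\to\Hom(\Indar{\TheS{n,k}},\TheW{n,k'})\to\Hom(\Indar{\TheS{n,k-2}},\TheW{n,k'}).$$
The leftmost term vanishes by Proposition \ref{prop:hom.w.to.w} since $k'\neq k$, and the rightmost term vanishes by the induction hypothesis, as $k'>k>k-2$. The middle term is thus squeezed between two zero groups and must itself be zero, which is exactly the claim.

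The one point requiring care — and the main obstacle — is the degenerate case $\beta=0$ with $n$ even, where the bottom of the induction is $\TheW{n,0}$ and where both the identification $\Indar{\TheS{n,0}}\simeq\TheW{n,0}$ and the exactness of \eqref{eq:exactIndS} at $k=2$ were only established under the hypothesis $\beta\neq0$ in Proposition \ref{prop:inductionS}. For that situation one would either invoke the lemma only in the generic/nonzero-$\beta$ regime in which it is subsequently used, or supply a separate direct argument at the base of the filtration; for odd $n$ the base case $k=1$ is unaffected and the induction goes through verbatim.
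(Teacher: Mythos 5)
Your proof is correct and follows exactly the paper's own argument: induction on $k$ with base cases $k=0,1$ handled via the identification $\Indar{\TheS{n,k}}\simeq\TheW{n,k}$ and Proposition \ref{prop:hom.w.to.w}, then left-exactness of $\Hom(-,\TheW{n,k'})$ applied to the sequence \eqref{eq:exactIndS} to squeeze the middle term between two vanishing groups. Your added caveat about the case $\beta=0$ with $n$ even is a fair observation that the paper itself does not address.
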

\begin{proof}
	The proof proceeds by induction on $k$. Proposition \ref{prop:hom.w.to.w} gives the cases $k=0$ and $k=1$. Assume thus that the result stands for $k\geq 2$. The previous discussion states the existence of a short exact sequence
	\begin{equation}\label{eq:indarTheS}
		0 \longrightarrow \Indar{\TheS{n,k-2}} \overset{i_k}{\longrightarrow} \Indar{\TheS{n,k}} \overset{p_k}{\longrightarrow} \TheW{n,k}\longrightarrow 0,
	\end{equation}
	which gives the exact sequence
	\begin{equation}
		0 \rightarrow \Hom(\TheW{n,k},\TheW{n,k'}) \longrightarrow \Hom(\Indar{\TheS{n,k}},\TheW{n,k'}) \longrightarrow \Hom(\Indar{\TheS{n,k-2}},\TheW{n,k'}).
	\end{equation}
	The term on the left vanishes by proposition \ref{prop:hom.w.to.w}, while the one on the right vanishes by the induction hypothesis. The middle term must vanish as well.
\end{proof}
\begin{Prop}\label{pro:indS}
	For all $0 \leq k \leq n$, $\Indar{\TheS{n,k}}$ is indecomposable.
\end{Prop}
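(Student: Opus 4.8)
The plan is to prove indecomposability by showing that $\End_{\atl{n}}(\Indar{\TheS{n,k}})$ contains no idempotents other than $0$ and $1$, proceeding by induction on $k$. The base cases $k=0$ (with $\beta\neq0$) and $k=1$ are immediate: there $\Indar{\TheS{n,k}}\simeq\TheW{n,k}$ by Proposition \ref{prop:inductionS}, and $\TheW{n,k}$ is indecomposable by Corollary \ref{coro:hom.h.and.w}. For the inductive step I fix $k\geq 2$, assume $\Indar{\TheS{n,k-2}}$ indecomposable, and exploit the short exact sequence \eqref{eq:exactIndS}, namely $0\to\Indar{\TheS{n,k-2}}\xrightarrow{i_k}\Indar{\TheS{n,k}}\xrightarrow{p_k}\TheW{n,k}\to0$.

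First I would record that every $f\in\End_{\atl{n}}(\Indar{\TheS{n,k}})$ preserves the submodule $\im i_k$: indeed $p_k\circ f\circ i_k$ lies in $\Hom_{\atl{n}}(\Indar{\TheS{n,k-2}},\TheW{n,k})$, which vanishes by Lemma \ref{lem:hom.indS.W} since $k>k-2$, so $f(\im i_k)\subseteq\ker p_k=\im i_k$. Consequently $f$ induces $\bar f\in\End_{\atl{n}}(\Indar{\TheS{n,k-2}})$ and $\hat f\in\End_{\atl{n}}(\TheW{n,k})$, and $\Phi:f\mapsto(\bar f,\hat f)$ is a unital ring homomorphism. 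Its kernel is a square-zero ideal: if $\bar f=\hat f=0$ then $\im f\subseteq\im i_k$ and $f|_{\im i_k}=0$, whence $f^2=0$. Now $\End_{\atl{n}}(\Indar{\TheS{n,k-2}})$ has only the idempotents $0,1$ by the induction hypothesis, while $\End_{\atl{n}}(\TheW{n,k})\simeq\mathbb C[\tau,\tau^{-1}]$ by Proposition \ref{prop:hom.w.to.w} is an integral domain, hence also connected; so for any idempotent $e$ both $\bar e$ and $\hat e$ lie in $\{0,1\}$. If $(\bar e,\hat e)=(0,0)$ or $(1,1)$ then $e$ or $1-e$ is an idempotent in the nil ideal $\ker\Phi$, forcing $e\in\{0,1\}$. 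The remaining pairs $(1,0)$ and $(0,1)$ would make $e$ (resp. $1-e$) a projection with image exactly $\im i_k$, splitting \eqref{eq:exactIndS}. Thus everything reduces to showing that \eqref{eq:exactIndS} does not split.

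The cleanest way to see non-splitting is to apply the additive, right-exact functor $\Indphi{}$ to \eqref{eq:exactIndS}. By Lemma \ref{lem:functorId} one has $\Indphi{}\circ\Indar{}\xrightarrow{\sim}\id$, so $\Indphi{\Indar{\TheS{n,k}}}\simeq\TheS{n,k}$ and $\Indphi{\Indar{\TheS{n,k-2}}}\simeq\TheS{n,k-2}$, while Proposition \ref{prop:indH} gives $\Indphi{\TheW{n,k}}\simeq\TheS{n,k}$. Right-exactness then yields an exact sequence $\TheS{n,k-2}\xrightarrow{\Indphi{i_k}}\TheS{n,k}\xrightarrow{\Indphi{p_k}}\TheS{n,k}\to0$ of finite-dimensional $\tl{n}$-modules. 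Here $\Indphi{p_k}$ is a surjective endomorphism of the finite-dimensional space $\TheS{n,k}$, hence an isomorphism, so $\Indphi{i_k}=0$. Were \eqref{eq:exactIndS} split, the additive functor $\Indphi{}$ would send $i_k$ to a split monomorphism $\TheS{n,k-2}\hookrightarrow\TheS{n,k}$; but $\Indphi{i_k}=0$ cannot be injective out of the nonzero module $\TheS{n,k-2}$ (nonzero by \eqref{eq:dimSnk} since $0\leq k-2\leq n-2$), a contradiction. This completes the induction.

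I expect the main obstacle to be this non-splitting step, and the point of the argument above is precisely that it sidesteps any direct computation with the infinite-dimensional modules $\Indar{\TheS{n,k}}$: by pushing the sequence through $\Indphi{}$ it lands among finite-dimensional $\tl{n}$-modules, where surjective endomorphisms are automatically isomorphisms. A secondary matter requiring care is the edge behaviour at $\beta=0$ for small $k$, where Proposition \ref{prop:inductionS} restricts the validity of the sequence \eqref{eq:exactIndS}; those cases (notably $k=0,2$ with $\beta=0$) should be handled separately against the base of the induction.
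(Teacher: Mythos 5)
Your proof is correct and follows essentially the same route as the paper: induction on $k$, using Lemma \ref{lem:hom.indS.W} to show every endomorphism of $\Indar{\TheS{n,k}}$ is ``triangular'' with respect to the sequence \eqref{eq:exactIndS}, reducing everything to the non-splitting of that sequence, and deriving the contradiction by applying $\Indphi{}$ — your formulation via the nil-kernel ring map $\Phi$ and via ``$\Indphi{i_k}=0$ cannot be a split mono'' is just a tidier packaging of the paper's case analysis and of its dimension-count $\TheS{n,k}\not\simeq\TheS{n,k-2}\oplus\TheS{n,k}$. The only residual point is the same one the paper disposes of parenthetically, namely the degenerate instance $n=k=2$ (where $\Indphi{\TheW{2,2}}$ is $\TheM{2,2}$) and the $\beta=0$ caveats you already flagged.
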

\begin{proof}
	The proof is again by induction on $k$. If $k = 0$ or $1$, then $\Indar{\TheS{n,k}} \simeq \TheW{n,k}$ which is indecomposable by corollary \ref{coro:hom.h.and.w}. Assume now that the result holds for integers $\geq 2$ up to (but excluding) $k$. The module $\Indar{\TheS{n,k}}$ is decomposable if and only if there exists a non-zero idempotent $f \in \End{\Indar{\TheS{n,k}}}$ that is not an isomorphism. Suppose such an non-zero idempotent $f$ exists. The previous lemma gives $p_k \circ f \circ i_{k} = 0$ where the maps $i_k$ and $p_k$ are those appearing in the short sequence \eqref{eq:indarTheS}. By universality of kernels and cokernels, it follows that there exists $g \in \End{\Indar{\TheS{n,k-2}}}$ and $h \in \End{\TheW{n,k}}$ such that $p_{k}\circ f = h \circ p_{k}$ and $f\circ i_{k} =  i_{k} \circ g$. Furthermore, since $f$ is an idempotent, so are $g$ and $h$. But because $\Indar{\TheS{n,k-2}}$ and $\TheW{n,k}$ are indecomposable, they are either isomorphisms or trivial. Clearly, if they are both isomorphisms, then $f$ is an isomorphism too, and if they are both zero, then so is $f$. Either conclusion would be a contradiction.

Suppose therefore that $h$ is an isomorphism and $g=0$. By universality of cokernels, there exists $\bar{h}: \TheW{n,k} \to \Indar{\TheS{n,k}}$ such that $\bar{h} \circ p_{k} = f$ and thus 
$$h^{-1}\circ p_{k}\circ \bar{h} \circ p_{k} = h^{-1}\circ p_{k} \circ f = h^{-1}\circ h \circ p_{k} = p_{k}.$$
Since $p_{k}$ is an epimorphism, $h^{-1}\circ p_{k}\circ \bar{h} = 1_{\TheW{n,k}}$, and the exact sequence \eqref{eq:indarTheS} splits:
$$\Indar{\TheS{n,k}} \simeq \Indar{\TheS{n,k-2}} \oplus \TheW{n,k}.$$
Dual arguments for the case when $g$ is an isomorphism and $h = 0$ leads to the same conclusion. 

Hence, if $\Indar{\TheS{n,k}}$ is decomposable, then $\Indar{\TheS{n,k}} \simeq \Indar{\TheS{n,k-2}} \oplus \TheW{n,k}$. However, lemma \ref{lem:functorId} has shown that $\Indphi{} \circ \Indar{} \xrightarrow{\sim} \id_{\Mod\tl{n}}$ and proposition \ref{prop:indH} then gives
	\begin{equation}
		\TheS{n,k} \simeq\ \Indphi{}(\Indar{\TheS{n,k-2}}) \oplus \Indphi{}\TheW{n,k} \simeq \TheS{n,k-2} \oplus \TheS{n,k},
	\end{equation} 
which is impossible. (If $n=k=2$, the last term $\TheS{n,k}$ must be replaced by $\TheM{2,2}$, which is still impossible.) Thus $\Indar{\TheS{n,k}}$ must be indecomposable.
\end{proof}
If $q$ is generic, then $\TheS{n,k}$ is irreducible and projective. Since induction transforms projective modules into projective ones, it follows that $\Indar{\TheS{n,k}}$ is an indecomposable projective $\atl{n}$-module. This simple observation leads to the Peirce decomposition of $\atl n$.
\begin{Prop}If $q$ is generic, then 
\begin{equation}
	\atl{n} \simeq {\bigoplus}'_{0\leq k\leq n} \dim (\TheS{n,k}) \Indar{\TheS{n,k}}
\end{equation}
as a left-module over itself. Here $\oplus'$ indicates that the sum runs over $k$'s of the parity of $n$. 
\end{Prop}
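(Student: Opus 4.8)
The plan is to obtain the decomposition by applying the induction functor $\Indar{}$ to the decomposition of $\tl n$ as a left module over itself, which is explicit in the semisimple case. First I would recall that, since $q$ is generic, the algebra $\tl n$ is semisimple and, as recorded in Section~\ref{sec:modulesOfTln},
$$\tl n \simeq {\bigoplus}'_{0\leq k\leq n}\dim(\TheS{n,k})\cdot \TheS{n,k}$$
as a left $\tl n$-module, where $\oplus'$ runs over the $k$'s having the parity of $n$. The whole statement will then be deduced by pushing this isomorphism through $\Indar{}$.

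The key point is that induction is additive. Since $\Indar{(\blank)} = \atl n\otimes_{\tl n}(\blank)$ is a left adjoint (to $\Resar{}$), it is right exact and commutes with arbitrary direct sums; in particular it sends a finite direct sum of $\tl n$-modules to the corresponding direct sum of induced $\atl n$-modules. I would also use the canonical isomorphism of left $\atl n$-modules
$$\Indar{\tl n} = \atl n\otimes_{\tl n}\tl n \xrightarrow{\ \sim\ } \atl n,\qquad a\otimes b\mapsto a\,\iota(b),$$
which is the standard identity $M\otimes_R R\simeq M$ applied to the $(\atl n,\tl n)$-bimodule $\atl n$, the right $\tl n$-action being through the inclusion $\iota$.

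Combining these two facts yields the chain
$$\atl n \simeq \Indar{\tl n} \simeq \Indar{\Big({\bigoplus}'_{0\leq k\leq n}\dim(\TheS{n,k})\cdot\TheS{n,k}\Big)} \simeq {\bigoplus}'_{0\leq k\leq n}\dim(\TheS{n,k})\cdot\Indar{\TheS{n,k}},$$
which is the claimed decomposition. To identify this as a genuine Peirce decomposition I would invoke the preceding observations: for generic $q$ each $\TheS{n,k}$ is irreducible and projective, so each $\Indar{\TheS{n,k}}$ is projective because induction preserves projectivity, and Proposition~\ref{pro:indS} shows it is indecomposable. Hence the right-hand side is a direct sum of indecomposable projective $\atl n$-modules, as required.

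I do not expect a serious obstacle here: the substantive work, namely the indecomposability of $\Indar{\TheS{n,k}}$, has already been carried out in Proposition~\ref{pro:indS}, and projectivity is immediate. The only points requiring care are the routine verifications that $\Indar{}$ commutes with direct sums and that $\Indar{\tl n}\simeq\atl n$, both of which are formal consequences of $\Indar{}$ being the tensor functor $\atl n\otimes_{\tl n}(\blank)$.
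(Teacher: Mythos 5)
Your proof is correct and follows exactly the route the paper intends: the paper states this proposition without a formal proof, but the preceding remarks ("starting from the Peirce decomposition of the regular Temperley-Lieb algebra... the functor $\Indar{}$ gives the Peirce decomposition of the affine Temperley-Lieb algebra") make clear that the argument is precisely to induce the semisimple decomposition of $\tl n$, using $\Indar{\tl n}\simeq\atl n$, additivity of induction, preservation of projectivity, and the indecomposability established in the proposition on $\Indar{\TheS{n,k}}$. Nothing is missing.
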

Furthermore, it also follows that the projective dimension of $\TheW{n,k}$ is always one when $q$ is generic, except for $\TheW{n,0}$ and $\TheW{n,1}$ which are projective.
\begin{Prop}
	If $q$ is generic and $0< t \leq (n-k)/2$,
		\begin{equation}
			\Resar{\TheW{n,k;z}} \simeq \bigoplus_{r=0}^{(n-k)/2} \TheS{n,k+2r}, \qquad \Resar{\TheL{n,k;z_{k+2t}}} \simeq \bigoplus_{r=0}^{t-1} \TheS{n,k+2r}.
		\end{equation}
\end{Prop}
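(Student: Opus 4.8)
The plan is to deduce the second isomorphism from the first, so the real work is the computation of $\Resar{\TheW{n,k;z}}$. Since $q$ is generic, $\tl n$ is semisimple and the $\TheS{n,k'}$ are all of its simple modules, so $\Resar{\TheW{n,k;z}}=\bigoplus_{k'}m_{k'}\TheS{n,k'}$ and everything reduces to the multiplicities $m_{k'}=\dim\Hom_{\tl n}(\TheS{n,k'},\Resar{\TheW{n,k;z}})$. I would first show these are independent of $z$: if $\varepsilon_{k'}\in\tl n$ is the central idempotent projecting onto the $\TheS{n,k'}$-isotypic component, then $m_{k'}\dim\TheS{n,k'}=\operatorname{tr}(\varepsilon_{k'}\mid\TheW{n,k;z})$, and in the fixed diagram basis $\{[v_i]\}$ of $\TheW{n,k;z}$ the matrix of $\varepsilon_{k'}$ has entries that are Laurent polynomials in $z$ (because $\tau^{j}v_i\mapsto z^{j}v_i$); this trace is therefore a Laurent polynomial in $z$ which is integer-valued on all of $\mathbb C^\times$, hence constant. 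Thus it suffices to compute $m_{k'}$ for generic $z$, where, since a pair has at most one strict successor for generic $q$ (as observed after Theorem~\ref{thm:GL34}), the module $\TheW{n,k;z}$ is irreducible and equals $\TheL{n,k;z}$.

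By Frobenius reciprocity for the adjoint pair $(\Indar{},\Resar{})$ one has $m_{k'}=\dim\Hom_{\atl n}(\Indar{\TheS{n,k'}},\TheW{n,k;z})$. Proposition~\ref{prop:inductionS} equips $\Indar{\TheS{n,k'}}$ with a filtration by submodules whose successive quotients are $\TheW{n,k'},\TheW{n,k'-2},\dots,\TheW{n,\epsilon}$ (with $\epsilon\in\{0,1\}$, $\epsilon\equiv k'$, and $\beta\neq0$ automatic for generic $q$), so left-exactness of $\Hom(-,\TheW{n,k;z})$ yields the upper bound $m_{k'}\le\sum_{j}\dim\Hom_{\atl n}(\TheW{n,j},\TheW{n,k;z})$, where $j$ runs over $\epsilon,\epsilon+2,\dots,k'$. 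The crux is thus the claim that for generic $z$
$$\dim\Hom_{\atl n}(\TheW{n,j},\TheW{n,k;z})=\delta_{j,k}.$$
For $j\neq k$ I would use the through-line (cellular) filtration of $\atl n$: writing $J_{\le m}$ for the span of diagrams with at most $m$ through lines, one has $J_{\le j-2}\TheW{n,j}=0$, $J_{\le j}\TheW{n,j}=\TheW{n,j}$, and likewise $J_{\le k-2}\TheW{n,k;z}=0$, $J_{\le k}\TheW{n,k;z}=\TheW{n,k;z}$. A nonzero map into the simple $\TheW{n,k;z}=\TheL{n,k;z}$ is onto, and comparing the actions of $J_{\le k-2}$ (when $j<k$) or $J_{\le j-2}$ (when $j>k$) on source and image forces a contradiction, so the space vanishes unless $j=k$. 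For $j=k$ the canonical surjection $\TheW{n,k}\to\TheW{n,k;z}$ makes the space at least one-dimensional, and it is exactly one-dimensional because $\TheW{n,k;z}$ is simple with a one-dimensional space of defect-$k$ highest-weight vectors (Graham–Lehrer, Thm.~(2.8)).

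With these inputs $m_{k'}=0$ for $k'<k$ and $m_{k'}\le1$ for $k\le k'\le n$. Since $\dim\TheW{n,k;z}=\binom{n}{(n-k)/2}=\sum_{r=0}^{(n-k)/2}\dim\TheS{n,k+2r}$ (a telescoping identity from \eqref{eq:dimSnk} and \eqref{eq:dimWnkz}), the equation $\sum_{k'}m_{k'}\dim\TheS{n,k'}=\dim\TheW{n,k;z}$ forces $m_{k'}=1$ for every $k\le k'\le n$, which is the first isomorphism. For the second, note that for generic $q$ the unique strict successor of $(k,z_{k+2t})$ is $(k+2t,z_k)$, so Theorem~\ref{thm:GL34} gives an exact sequence $0\to\TheW{n,k+2t;z_k}\to\TheW{n,k;z_{k+2t}}\to\TheL{n,k;z_{k+2t}}\to0$ with irreducible submodule. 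Applying the exact functor $\Resar{}$, inserting the first isomorphism, and splitting by semisimplicity of $\tl n$ gives
$$\Resar{\TheL{n,k;z_{k+2t}}}\simeq\Big(\bigoplus_{r=0}^{(n-k)/2}\TheS{n,k+2r}\Big)\Big/\Big(\bigoplus_{r=t}^{(n-k)/2}\TheS{n,k+2r}\Big)\simeq\bigoplus_{r=0}^{t-1}\TheS{n,k+2r}.$$
The main obstacle is the identity $\dim\Hom_{\atl n}(\TheW{n,j},\TheW{n,k;z})=\delta_{j,k}$ — in particular pinning the $j=k$ value to exactly $1$, i.e.\ ruling out extra homomorphisms from the standard module into its cell quotient, and making the cellular-ideal comparison fully rigorous; the reduction to generic $z$ via the idempotent-trace argument is what makes the clean bound $m_{k'}\le1$ available in the first place.
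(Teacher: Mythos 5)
Your argument is correct in substance but takes a genuinely different route from the paper's. The paper proves the first isomorphism by a direct construction: it filters $\Resar{\TheW{n,k;z}}$ by the rank of diagrams, $\mathsf{U}_0\subset\mathsf{U}_1\subset\dots\subset\mathsf{U}_{(n-k)/2}$, and exhibits an explicit ``surgery'' map (cutting open the arcs that cross the top of the fundamental rectangle) identifying $\mathsf{U}_r/\mathsf{U}_{r-1}$ with $\TheS{n,k+2r}$ as a $\tl n$-module; projectivity of $\TheS{n,k+2r}$ at generic $q$ then splits the filtration. You instead work entirely at the level of multiplicities, combining semisimplicity of $\tl n$, Frobenius reciprocity, the standard filtration of $\Indar{\TheS{n,k'}}$ from Proposition \ref{prop:inductionS}, the through-line ideals, and a dimension count; your specialization-in-$z$ trace argument is a step the paper does not need, since the rank filtration works uniformly in $z$, but it is what legitimately reduces you to the case where $\TheW{n,k;z}$ is simple. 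What the paper's construction buys is an explicit isomorphism and independence from generic simplicity of the cell module; what yours buys is that it recycles structural results already established instead of introducing a new diagrammatic device. One citation needs repair: Graham--Lehrer's Theorem (2.8), as quoted in the paper, bounds morphisms out of \emph{cell} modules $\TheW{n,k;y}$, not out of the standard module $\TheW{n,k}$, so it does not literally give $\dim\Hom_{\atl n}(\TheW{n,k},\TheW{n,k;z})\le 1$. The bound does follow from your own idempotent technique: with $e=x_{n,k}\circ y_{n,k}$ one has $e\,x_{n,k}=x_{n,k}$, so any $f$ sends the generator $x_{n,k}$ into $e\,\TheW{n,k;z}=x_{n,k}\circ\bigl(y_{n,k}\circ\TheW{n,k;z}\bigr)\subseteq x_{n,k}\circ\TheW{k,k;z}$, which is at most one-dimensional since $\dim\TheW{k,k;z}=1$. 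Note also that only the upper bound $m_{k'}\le 1$ is needed, as your identity $\sum_{r=0}^{(n-k)/2}\dim\TheS{n,k+2r}=\binom{n}{(n-k)/2}$ already forces every inequality to be an equality. Your deduction of the second isomorphism from the first, via the exact sequence $0\to\TheW{n,k+2t;z_k}\to\TheW{n,k;z_{k+2t}}\to\TheL{n,k;z_{k+2t}}\to0$ and exactness of $\Resar{}$, coincides with the paper's.
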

\begin{proof} Recall that the rank of a diagram is defined as the minimal number of intersections of links with the top of the fundamental rectangle (minimal among all isotopic diagrams). If $a$ and $w$ are diagrams of rank $0$ and $r$ respectively, and $w$ is drawn with $r$ such intersections, their concatenation has $r$ intersections and the diagram $aw$ has thus a rank equal to $r$ or less. The subspace ${\mathsf U}_r\subset \Resar{\TheW{n,k;z}}$ spanned by diagrams of rank at most $r$ is thus a $\tl n$-submodule and there is natural filtration
$${\mathsf U}_0\subset {\mathsf U}_1\subset \dots \subset {\mathsf U}_{(n-k)/2}=\Resar{\TheW{n,k;z}}.$$
The cokernel of the injection ${\mathsf U}_r\to {\mathsf U}_{r+1}$ is spanned by elements of the form $w + {\mathsf U}_{r}$ where $w$ is a diagram of rank $r$. The rank of $aw$, again with $a\in\tl n$ and $\rk w=r$, may be less than $r$. This occurs {\em (i)} when $a$ joins two arcs crossing the top or bottom of the rectangle or {\em (ii)} when it links one such arc with a through line of $w$ or {\em (iii)} when it closes an uncontractible loop. Here are examples of these three scenarios, drawn for $a$'s in $\tl 5, \tl 3$ and $\tl 2$ and $w$'s in $\TheW{5,1;z}$, $\TheW{3,1;z}$ and $\TheW{2,0;z}$, respectively:
\begin{equation*} 
\begin{tikzpicture}[baseline={(current bounding box.center)},scale=1/3]
	\draw[line width = 2pt] (0,-1) -- (0,5);
	\draw[line width = 2pt] (3,-1) -- (3,5);
	\draw[line width = 2pt] (6,-1) -- (6,5);
	\draw[dashed] (0,-1) -- (6,-1);
	\draw[dashed] (0,5) -- (6,5);
	\foreach \s in {1,...,5}
	{	
		\filldraw[black] (0,\s -1) circle (5pt);
		\filldraw[black] (3,\s -1) circle (5pt);
	};
	\filldraw[black] (6,2) circle (5pt);
	\draw (0,0) .. controls (1,0) and (1,1) .. (0,1);%
	\draw (3,0) .. controls (2,0) and (2,1) .. (3,1);%
	\draw (0,2) -- (6,2); %
	\draw (0,3) -- (3,3);
	\draw (0,4) -- (3,4);
	\draw (3,4) .. controls (7/2,4) and (3.75,4.5) .. (3.75,5);
	\draw (3,3) .. controls (4,3) and (9/2,3.5) .. (9/2,5);
	\draw (3,0) .. controls (7/2,0) and (3.75,-1/2) .. (3.75,-1);
	\draw (3,1) .. controls (4,1) and (9/2,1/2) .. (9/2,-1);
	\node[anchor = west] at (6.25, 2.0) {$=$};
	\node[anchor = north] at (5,-2) {\em (i)};
\end{tikzpicture} 
\begin{tikzpicture}[baseline={(current bounding box.center)},scale=1/3]
	\draw[line width = 2pt] (0,-1) -- (0,5);
	\draw[line width = 2pt] (3,-1) -- (3,5);
	\draw[dashed] (0,-1) -- (3,-1);
	\draw[dashed] (0,5) -- (3,5);
	\foreach \s in {1,...,5}
	{	
		\filldraw[black] (0,\s -1) circle (5pt);
	};
	\filldraw[black] (3,2) circle (5pt);
	\draw (0,0) .. controls (1,0) and (1,1) .. (0,1);%
	\draw (0,3) .. controls (1,3) and (1,4) .. (0,4);%
	\draw (0,2) -- (3,2); %
	\node[anchor = west] at (3.25, 2.0) {$,$};
	\node[anchor = north] at (1,-2) {$\phantom{\textrm{\em (i)}}$};
\end{tikzpicture} 
\qquad\qquad 
\begin{tikzpicture}[baseline={(current bounding box.center)},scale=1/3]
	\draw[line width = 2pt] (0,-1) -- (0,3);
	\draw[line width = 2pt] (3,-1) -- (3,3);
	\draw[line width = 2pt] (6,-1) -- (6,3);
	\draw[dashed] (0,-1) -- (6,-1);
	\draw[dashed] (0,3) -- (6,3);
	\foreach \s in {1,...,3}
	{	
		\filldraw[black] (0,\s -1) circle (5pt);
		\filldraw[black] (3,\s -1) circle (5pt);
	};
	\filldraw[black] (6,1) circle (5pt);
	\draw (0,0) .. controls (1,0) and (1,1) .. (0,1);%
	\draw (3,0) .. controls (2,0) and (2,1) .. (3,1);%
	\draw (0,2) -- (3,2); %
	\draw (3,1) -- (6,1);
	\draw (3,2) .. controls (3.75,2) and (4.5,2.5) .. (4.5,3);
	\draw (3,0) .. controls (3.75,0) and (4.5,-1/2) .. (4.5,-1);
	\node[anchor = west] at (6.25, 1.0) {$=$};
	\node[anchor = north] at (1,-3) {$\phantom{\textrm{\em (i)}}$};
\end{tikzpicture} 
\begin{tikzpicture}[baseline={(current bounding box.center)},scale=1/3]
	\draw[line width = 2pt] (0,-1) -- (0,3);
	\draw[line width = 2pt] (3,-1) -- (3,3);
	\draw[dashed] (0,-1) -- (3,-1);
	\draw[dashed] (0,3) -- (3,3);
	\foreach \s in {1,...,3}
	{	
		\filldraw[black] (0,\s -1) circle (5pt);
	};
	\filldraw[black] (3,1) circle (5pt);
	\draw (0,0) .. controls (1,0) and (1,1) .. (0,1);%
	\draw (0,2) .. controls (0.75,2) and (1.5,2.5) .. (1.5,3);
	\draw (3,1) .. controls (2.25,0.666) and (1.5,-0.333) .. (1.5,-1);
	\node[anchor = west] at (3.25, 1.0) {$=$};
	\node[anchor = north] at (1,-3) {$\phantom{\textrm{\em (i)}}$};
\end{tikzpicture} 
\begin{tikzpicture}[baseline={(current bounding box.center)},scale=1/3]
	\draw[line width = 2pt] (0,-1) -- (0,3);
	\draw[line width = 2pt] (3,-1) -- (3,3);
	\draw[line width = 2pt] (6,-1) -- (6,3);
	\draw[dashed] (0,-1) -- (6,-1);
	\draw[dashed] (0,3) -- (6,3);
	\foreach \s in {1,...,3}
	{	
		\filldraw[black] (0,\s -1) circle (5pt);
	};
	\filldraw[black] (3,1) circle (5pt);
	\filldraw[black] (6,1) circle (5pt);
	\draw (0,0) .. controls (1,0) and (1,1) .. (0,1);%
	\draw (0,2) .. controls (1,2) and (2,1) .. (3,1);
	\draw (3,1) .. controls (3.75,1.333) and (4.5,2.333) .. (4.5,3);
	\draw (6,1) .. controls (5.25,0.666) and (4.5,-0.333) .. (4.5,-1);
	\node[anchor = north] at (0.5,-3) {\em (ii)};
	\node[anchor = west] at (6.25, 1.0) {$=$};
\end{tikzpicture} 
\begin{tikzpicture}[baseline={(current bounding box.center)},scale=1/3]
	\draw[line width = 2pt] (0,-1) -- (0,3);
	\draw[line width = 2pt] (3,-1) -- (3,3);
	\draw[dashed] (0,-1) -- (3,-1);
	\draw[dashed] (0,3) -- (3,3);
	\foreach \s in {1,...,3}
	{	
		\filldraw[black] (0,\s -1) circle (5pt);
	};
	\filldraw[black] (3,1) circle (5pt);
	\draw (0,0) .. controls (1,0) and (1,1) .. (0,1);%
	\draw (0,2) .. controls (1,2) and (2,1) .. (3,1);
	\node[anchor = east] at (-0.25, 1.0) {$z$};
	\node[anchor = north] at (1,-3) {$\phantom{\textrm{\em (i)}}$};
\end{tikzpicture} 
\end{equation*}
\begin{equation*}
\begin{tikzpicture}[baseline={(current bounding box.center)},scale=1/3]
	\draw[line width = 2pt] (0,-1) -- (0,2);
	\draw[line width = 2pt] (3,-1) -- (3,2);
	\draw[line width = 2pt] (6,-1) -- (6,2);
	\draw[dashed] (0,-1) -- (6,-1);
	\draw[dashed] (0,2) -- (6,2);
	\foreach \s in {1,...,2}
	{	
		\filldraw[black] (0,\s -1) circle (5pt);
		\filldraw[black] (3,\s -1) circle (5pt);
	};
	\draw (0,0) .. controls (1,0) and (1,1) .. (0,1);%
	\draw (3,0) .. controls (2,0) and (2,1) .. (3,1);%
	\draw (3,1) .. controls (3.75,1) and (4.5,1.5) .. (4.5,2);
	\draw (3,0) .. controls (3.75,0) and (4.5,-1/2) .. (4.5,-1);
	\node[anchor = east] at (-1, 0.5) {and};
	\node[anchor = east] at (11.75, 0.5) {$=(z+z^{-1})$};
	\node[anchor = north] at (6.25,-2) {\em (iii)};
\end{tikzpicture} 
\begin{tikzpicture}[baseline={(current bounding box.center)},scale=1/3]
	\draw[line width = 2pt] (0,-1) -- (0,2);
	\draw[line width = 2pt] (3,-1) -- (3,2);
	\draw[dashed] (0,-1) -- (3,-1);
	\draw[dashed] (0,2) -- (3,2);
	\foreach \s in {1,...,2}
	{	
		\filldraw[black] (0,\s -1) circle (5pt);
	};
	\draw (0,0) .. controls (1,0) and (1,1) .. (0,1);%
	\node[anchor = west] at (3.25, 1.0) {$.$};
	\node[anchor = north] at (1,-2) {$\phantom{\textrm{\em (i)}}$};
\end{tikzpicture} 
\end{equation*}
We next introduce the following ``surgery" of diagrams. An affine monic diagram $w$ of rank $r$ having $k$ through lines is sent to the rank $0$ monic diagram with $k+2r$ through lines obtained by ``cutting" open the $r$ arcs crossing the top in the diagram $w$ and glueing the $2r$ ends to the right side. For instance
\begin{equation}
\begin{tikzpicture}[scale=1/3]
			\draw[line width = 2pt] (0,-1) -- (0,4);
			\draw[line width = 2pt] (3,-1) -- (3,4);
			\draw[dashed] (0,-1) -- (3,-1);
			\draw[dashed] (0,4) -- (3,4);
			\foreach \s in {1,...,4}
			{	
				\filldraw[black] (0,\s -1) circle (5pt);
			};
			\filldraw[black] (3,2) circle (5pt);
			\filldraw[black] (3,3) circle (5pt);
			\draw (0,0) .. controls (1/2,0) and (1,-1/2) .. (1,-1);
			\draw (0,3) .. controls (1/2,3) and (1,7/2) .. (1,4);
			\draw (0,1) .. controls (1,1) and (2,2) .. (3,2);
			\draw (0,2) .. controls (1,2) and (2,3) .. (3,3);
			\node[anchor = west] at (7/2,3/2) {$\to$};
\end{tikzpicture}
\begin{tikzpicture}[scale=1/3]
			\draw[line width = 2pt] (0,-1) -- (0,4);
			\draw[line width = 2pt] (3,-1) -- (3,4);
			\draw[dashed] (0,-1) -- (3,-1);
			\draw[dashed] (0,4) -- (3,4);
			\foreach \s in {1,...,4}
			{	
				\filldraw[black] (0,\s -1) circle (5pt);
			};
			\filldraw[black] (3,2) circle (5pt);
			\filldraw[black] (3,3) circle (5pt);
			\draw (0,0) -- (1,0);
			\draw (0,3) -- (1,3);
			\draw (0,1) .. controls (1,1) and (2,2) .. (3,2);
			\draw (0,2) .. controls (1,2) and (2,3) .. (3,3);
			\node[anchor = west] at (7/2,3/2) {$\to$};
\end{tikzpicture}	
\begin{tikzpicture}[scale=1/3]
			\draw[line width = 2pt] (0,-1) -- (0,4);
			\draw[line width = 2pt] (3,-1) -- (3,4);
			\draw[dashed] (0,-1) -- (3,-1);
			\draw[dashed] (0,4) -- (3,4);
			\foreach \s in {1,...,4}
			{	
				\filldraw[black] (0,\s -1) circle (5pt);
				\filldraw[black] (3,\s -1) circle (5pt);
				\draw (0,\s -1) -- (3,\s -1);
			};
			\node[anchor = west] at (3,3/2) {$.$};
\end{tikzpicture}	
\end{equation}
The linear map $\phi:{\mathsf U}_r\to \oplus_{0\leq i\leq r}\TheS{n,k+2i}$ obtained by extending linearly the above surgery is clearly invertible, a diagram from $\TheS{n,k+2i}$ giving back the element in ${\mathsf U}_r$ upon bending upward the $i$ highest through lines and downward the $i$ lowest ones. This morphism actually projects onto a $\tl n$-isomorphism ${\mathsf U}_r/{\mathsf U}_{r-1}\to \TheS{n,k+2r}$. The proof of this claim consists of a check of what happens when the only right arc in the generator $u_i,1\leq i\leq n-1$ of $\tl n$ links zero, one or two ends of arcs intersecting the top of $w+{\mathsf U}_{r-1}$ where $w$ is of rank $r$. The example {\em (ii)} above shows what happens when the generator $u_i$ connects one end of an intersecting arc:  on one hand,
$$\phi\Big(\ \begin{tikzpicture}[baseline={(current bounding box.center)},scale=1/3]
	\draw[line width = 2pt] (3,-1) -- (3,3);
	\draw[line width = 2pt] (6,-1) -- (6,3);
	\draw[dashed] (3,-1) -- (6,-1);
	\draw[dashed] (3,3) -- (6,3);
	\foreach \s in {1,...,3}
	{	
		\filldraw[black] (3,\s -1) circle (5pt);
	};
	\filldraw[black] (6,1) circle (5pt);
	\draw (3,1) -- (6,1);
	\draw (3,2) .. controls (3.75,2) and (4.5,2.5) .. (4.5,3);
	\draw (3,0) .. controls (3.75,0) and (4.5,-1/2) .. (4.5,-1);
\end{tikzpicture} 
\ +{\mathsf U}_0 \Big)=
\begin{tikzpicture}[baseline={(current bounding box.center)},scale=1/3]
	\draw[line width = 2pt] (3,-1) -- (3,3);
	\draw[line width = 2pt] (6,-1) -- (6,3);
	\draw[dashed] (3,-1) -- (6,-1);
	\draw[dashed] (3,3) -- (6,3);
	\foreach \s in {1,...,3}
	{	
		\filldraw[black] (3,\s -1) circle (5pt);
		\filldraw[black] (6,\s -1) circle (5pt);
	};
	\draw (3,0) -- (6,0);
	\draw (3,1) -- (6,1);
	\draw (3,2) -- (6,2);
\end{tikzpicture} \ ,
\qquad \textrm{and}\qquad u_2 \phi\Big(\ \begin{tikzpicture}[baseline={(current bounding box.center)},scale=1/3]
	\draw[line width = 2pt] (3,-1) -- (3,3);
	\draw[line width = 2pt] (6,-1) -- (6,3);
	\draw[dashed] (3,-1) -- (6,-1);
	\draw[dashed] (3,3) -- (6,3);
	\foreach \s in {1,...,3}
	{	
		\filldraw[black] (3,\s -1) circle (5pt);
	};
	\filldraw[black] (6,1) circle (5pt);
	\draw (3,1) -- (6,1);
	\draw (3,2) .. controls (3.75,2) and (4.5,2.5) .. (4.5,3);
	\draw (3,0) .. controls (3.75,0) and (4.5,-1/2) .. (4.5,-1);
\end{tikzpicture} 
\ +{\mathsf U}_0 \Big)=0,
$$
and on the other hand,
$$
u_2\Big(\ \begin{tikzpicture}[baseline={(current bounding box.center)},scale=1/3]
	\draw[line width = 2pt] (3,-1) -- (3,3);
	\draw[line width = 2pt] (6,-1) -- (6,3);
	\draw[dashed] (3,-1) -- (6,-1);
	\draw[dashed] (3,3) -- (6,3);
	\foreach \s in {1,...,3}
	{	
		\filldraw[black] (3,\s -1) circle (5pt);
	};
	\filldraw[black] (6,1) circle (5pt);
	\draw (3,1) -- (6,1);
	\draw (3,2) .. controls (3.75,2) and (4.5,2.5) .. (4.5,3);
	\draw (3,0) .. controls (3.75,0) and (4.5,-1/2) .. (4.5,-1);
\end{tikzpicture} 
\ +{\mathsf U}_0\Big)=z\ 
\begin{tikzpicture}[baseline={(current bounding box.center)},scale=1/3]
	\draw[line width = 2pt] (0,-1) -- (0,3);
	\draw[line width = 2pt] (3,-1) -- (3,3);
	\draw[dashed] (0,-1) -- (3,-1);
	\draw[dashed] (0,3) -- (3,3);
	\foreach \s in {1,...,3}
	{	
		\filldraw[black] (0,\s -1) circle (5pt);
	};
	\filldraw[black] (3,1) circle (5pt);
	\draw (0,0) .. controls (1,0) and (1,1) .. (0,1);%
	\draw (0,2) .. controls (1,2) and (2,1) .. (3,1);
\end{tikzpicture} 
\ +{\mathsf U}_0=0+{\mathsf U}_0,\qquad\textrm{and}\qquad
\phi\Big(\ u_2\Big(\ \begin{tikzpicture}[baseline={(current bounding box.center)},scale=1/3]
	\draw[line width = 2pt] (3,-1) -- (3,3);
	\draw[line width = 2pt] (6,-1) -- (6,3);
	\draw[dashed] (3,-1) -- (6,-1);
	\draw[dashed] (3,3) -- (6,3);
	\foreach \s in {1,...,3}
	{	
		\filldraw[black] (3,\s -1) circle (5pt);
	};
	\filldraw[black] (6,1) circle (5pt);
	\draw (3,1) -- (6,1);
	\draw (3,2) .. controls (3.75,2) and (4.5,2.5) .. (4.5,3);
	\draw (3,0) .. controls (3.75,0) and (4.5,-1/2) .. (4.5,-1);
\end{tikzpicture} 
\ +{\mathsf U}_0\Big)\ \Big)=\phi(0+{\mathsf U}_0)=0.
$$
The other verifications are left to the reader. Since ${\mathsf U}_r/{\mathsf U}_{r-1}$ is sent isomorphically onto $\TheS{n,k+2r}$ and this module $\TheS{n,k+2r}$ is projective when $q$ is generic, it follows that for all $0<i\leq (n-k)/2$
\begin{equation*}{\mathsf U}_r\simeq {\mathsf U}_{r-1}\oplus \TheS{n,k+2r}.\end{equation*}
The result for $\Resar{\TheW{n,k;z}}$ then follows from the limiting cases: ${\mathsf U}_{0} \simeq \TheS{n,k}$ and ${\mathsf U}_{(n-k)/2}\simeq\ \Resar{\TheW{n,k;z}}$.

When $q$ is generic, the pair $(k,z)\in\paires$ may have at most one successor in the weakest partial order $\preceq$. If $(k,z)$ has no successor, then $\TheW{n,k;z}=\TheL{n,k;z}$ and $\Resar{\TheL{n,k;z}}$ is obtained from the previous result. If $(k,z)$ has one, this successor $(j,y)$ solves equations (A) of \eqref{eq:AandB} and there is a positive integer $t\leq (n-k)/2$ such that $(k,z)=(k,z_{k+2t})\preceq(j,y)=(k+2t,z_k)$. The irreducible $\TheL{n,k;z_{k+2t}}$ is then obtained by the short exact sequence
\begin{equation*}
	0 \rightarrow \TheW{n,k + 2t;z_{k}} \longrightarrow \TheW{n,k;z_{k+2t}} \longrightarrow \TheL{n,k;z_{k+2t}} \rightarrow 0.
\end{equation*}
The second statement of the proposition then follows by applying the restriction functor to this sequence and using the first statement.
\end{proof}
Frobenius theorem then readily gives the following corollary.
\begin{Coro}
If $q$ is generic and $0<t\leq \frac{n-k}{2}$, then
\begin{align*}
	\Hom(\Indar{\TheS{n,r}}, \TheL{n,k;z_{k+2t}}) &\simeq \begin{cases}
		\ringK, & \text{ if }  k \leq r \leq k+2t, \\
		0, & \text{otherwise,}
	\end{cases} \\
	\Hom(\Indar{\TheS{n,r}}, \TheW{n,k;z}) &\simeq \begin{cases}
		\ringK, & \text{ if }  k \leq r\\
		0, & \text{otherwise,}
	\end{cases}\\
	\Hom(\Indar{\TheS{n,r}},\TheW{n,k})
		 &\simeq \begin{cases}
				\ringK[\tau,\tau^{-1}], & \text{ if }  k \leq r,\\
				0 & \text{otherwise}.
			\end{cases}
\end{align*}
\end{Coro}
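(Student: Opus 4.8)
The plan is to invoke Frobenius reciprocity for the adjoint pair $(\Indar{},\Resar{})$ declared earlier, which supplies, for every $\atl{n}$-module $\mathsf M$, a natural isomorphism
$$\Hom_{\atl{n}}\!\left(\Indar{\TheS{n,r}},\mathsf M\right)\simeq\Hom_{\tl{n}}\!\left(\TheS{n,r},\Resar{\mathsf M}\right).$$
I would apply this in turn to $\mathsf M=\TheL{n,k;z_{k+2t}}$, to $\mathsf M=\TheW{n,k;z}$, and to $\mathsf M=\TheW{n,k}$. Since $q$ is generic, $\tl{n}$ is semisimple and the standard modules $\TheS{n,r}$, with $r$ of the parity of $n$, form a complete set of pairwise non-isomorphic irreducibles; hence $\Hom_{\tl{n}}(\TheS{n,r},\TheS{n,r'})\simeq\delta_{r,r'}\ringK$, and each computation reduces to recording which $\TheS{n,r}$ occurs as a summand of the relevant restriction.

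For the first two rows this is immediate from the previous proposition. Substituting $\Resar{\TheL{n,k;z_{k+2t}}}\simeq\bigoplus_{s=0}^{t-1}\TheS{n,k+2s}$ and using semisimplicity, the group $\Hom_{\tl{n}}(\TheS{n,r},\Resar{\TheL{n,k;z_{k+2t}}})$ equals $\ringK$ exactly when $\TheS{n,r}$ is one of these summands and $0$ otherwise, which gives the first displayed case. Likewise $\Resar{\TheW{n,k;z}}\simeq\bigoplus_{s=0}^{(n-k)/2}\TheS{n,k+2s}$ produces a one-dimensional $\Hom$-group precisely when $\TheS{n,r}$ appears, i.e.\ for $k\le r\le n$ (equivalently $k\le r$, as $\TheS{n,r}=0$ for $r>n$), and $0$ otherwise.

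The third row is where the argument is not purely formal, since the previous proposition computed the restriction only of the finite-dimensional cell modules, not of the infinite-dimensional standard module $\TheW{n,k}$. The key point is that $\TheW{n,k}$ is an $(\atl{n},\ringK[\tau,\tau^{-1}])$-bimodule, free as a right $\ringK[\tau,\tau^{-1}]$-module, whose specialization at $\tau=z$ recovers the cell module $\TheW{n,k;z}$. I would therefore rerun the rank filtration ${\mathsf U}_0\subset\cdots\subset{\mathsf U}_{(n-k)/2}=\Resar{\TheW{n,k}}$ used in the proof of the previous proposition, but now keeping track of the commuting right $\ringK[\tau,\tau^{-1}]$-action: the same ``surgery'' identifies each quotient ${\mathsf U}_r/{\mathsf U}_{r-1}$ with $\TheS{n,k+2r}\otimes_\ringK\ringK[\tau,\tau^{-1}]$. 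These quotients are direct sums of copies of a projective $\tl{n}$-module (as $q$ is generic), hence themselves projective, so the filtration splits and $\Resar{\TheW{n,k}}\simeq\bigoplus_{r=0}^{(n-k)/2}\TheS{n,k+2r}\otimes_\ringK\ringK[\tau,\tau^{-1}]$ as $(\tl{n},\ringK[\tau,\tau^{-1}])$-bimodules. Applying $\Hom_{\tl{n}}(\TheS{n,r},-)$ then singles out the unique summand whose index equals $r$ (when $k\le r\le n$) and returns its multiplicity space $\ringK[\tau,\tau^{-1}]$, and $0$ otherwise. I expect the main obstacle to be precisely this last step: one must carry the right $\ringK[\tau,\tau^{-1}]$-structure through the surgery and the splitting so that the answer emerges as the full ring $\ringK[\tau,\tau^{-1}]$ rather than a single copy of $\ringK$; the remainder is bookkeeping with semisimplicity and Frobenius reciprocity.
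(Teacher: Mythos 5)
Your treatment of the first two $\Hom$-groups is exactly the paper's: Frobenius reciprocity for the adjoint pair $(\Indar{},\Resar{})$, the decompositions of $\Resar{\TheW{n,k;z}}$ and $\Resar{\TheL{n,k;z_{k+2t}}}$ from the preceding proposition, and semisimplicity of $\tl{n}$ at generic $q$. One remark: carried out carefully, your reduction yields the range $k\le r\le k+2(t-1)$ in the first case, since $\Resar{\TheL{n,k;z_{k+2t}}}\simeq\bigoplus_{s=0}^{t-1}\TheS{n,k+2s}$ stops at $\TheS{n,k+2t-2}$; the bound $k+2t$ in the displayed statement is not what the cited proposition produces, so your computation (not the displayed range) is the one consistent with the rest of the paper.

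For the third $\Hom$-group you take a genuinely different route, and there is a concrete gap in it. The filtration ${\mathsf U}_0\subset\cdots\subset{\mathsf U}_{(n-k)/2}$ of the preceding proposition filters the \emph{finite-dimensional cell module} $\TheW{n,k;z}$. Transplanted verbatim to $\TheW{n,k}$, the span of diagrams of rank at most $(n-k)/2$ is a proper subspace (the basis elements $v_i\tau^j$ have rank growing with $|j|$), the rank filtration has infinitely many nonzero steps, its terms are not stable under the right action of $\tau$ (right composition with $\tau$ changes the rank; compare $v_1$ with $v_1\tau$ in the displayed basis of $\TheW{4,2}$), and each genuine graded piece is finite-dimensional, so it cannot be $\TheS{n,k+2r}\otimes_{\ringK}\ringK[\tau,\tau^{-1}]$. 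Hence the sentence ``the same surgery identifies each quotient\dots'' fails as stated. The bimodule statement you want, $\Resar{\TheW{n,k}}\simeq\bigoplus_{s}\TheS{n,k+2s}\otimes_{\ringK}\ringK[\tau,\tau^{-1}]$, is nonetheless true and is better proved without surgery: $\Resar{\TheW{n,k}}$ is a finitely generated module over $\tl{n}\otimes_{\ringK}\ringK[\tau,\tau^{-1}]$, which for generic $q$ is a finite direct sum of matrix algebras over the principal ideal domain $\ringK[\tau,\tau^{-1}]$; being free over $\ringK[\tau,\tau^{-1}]$ it decomposes as $\bigoplus_j\TheS{n,j}\otimes N_j$ with each $N_j$ free, and the ranks of the $N_j$ are read off from the specialization $\TheW{n,k}/(\tau-z)\simeq\TheW{n,k;z}$ via the proposition already proved. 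With that repair your Frobenius computation does return $\ringK[\tau,\tau^{-1}]$ for $k\le r$. The paper's own argument avoids computing $\Resar{\TheW{n,k}}$ altogether: it produces one nonzero $f:\Indar{\TheS{n,r}}\to\TheW{n,k}$ by projectivity and then, using $\Hom(\Indar{\TheS{n,r}},\TheW{n,k;z})\simeq\ringK$ and an iterated factorization through the endomorphisms $\tau-\omega$, shows every morphism is $\psi\circ\bar f$ with $\psi\in\End(\TheW{n,k})\simeq\ringK[\tau,\tau^{-1}]$. Once justified, your route is cleaner and gives the stronger structural statement about $\Resar{\TheW{n,k}}$; the paper's is more elementary but yields only the $\Hom$-group.
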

\begin{proof}The first two $\Hom$-groups are obtained from the previous proposition through Frobenius theorem. For the third one, $k\leq r $ is assumed, because otherwise the result is given by lemma \ref{lem:hom.indS.W}. Let $\mu_w:\TheW{n,k} \to \TheW{n,k;w}$ denote the canonical projection and $f: \Indar{\TheS{n,r}} \to \TheW{n,k} $ be any non-zero morphism. Such a non-zero morphism exists. For example, since $\mu_w$ is surjective and $\Indar{\TheS{n,r}}$ is projective, the map $\mu_w$ can be lifted to a morphism $\Indar{\TheS{n,r}} \to \TheW{n,k}$. An argument similar to the one used in Proposition \ref{prop:hom.w.to.w} proves that $\mu_\omega \circ f=0$ may occur only for a finite number of $\omega\in\ringK$. It is thus possible to write $f$ as 
\begin{equation}\label{eq:factoOfF}
		 f = \phi \circ \bar{f}, \qquad\textrm{such that }\quad \mu_{w}\circ \bar{f} \neq 0 \quad \textrm{for any } w \neq 0.
	\end{equation}
Here $\bar{f}:\Indar{\TheS{n,r}} \to \TheW{n,k}$, $\phi=\prod_{\omega\in\Omega_0}(\tau-\omega)\in \ringK[\tau,\tau^{-1}]$ with $\tau-\omega$ being understood as the endomorphism of $\TheW{n,k}$ obtained by right action of $(\tau-\omega\id)$. Some $\omega$ may need to appear more than once in the finite set $\Omega_0$ to allow for $\bar f$ to satisfy the condition $\mu_{w}\circ \bar{f} \neq 0$ for all $w\neq 0$.

Let $\bar{g}:\Indar{\TheS{n,r}} \to \TheW{n,k} $ be another morphism such that $\mu_{w}\circ \bar{g} \neq 0 $ for all non-zero $w$. Because $\Hom(\Indar{\TheS{n,r}},\TheW{n,k})\simeq \ringK$, there exists $\lambda \in \ringK $ such that $\mu_{z} \circ \bar{g} = \lambda \mu_{z} \circ \bar{f}$. Thus $\bar g=\lambda \bar f+(\tau-z)h$ for some $h:\Indar{\TheS{n,r}} \to \TheW{n,k}$. But, then again, this $h$ can be factorized $h=\phi'\circ \bar h$ as in \eqref{eq:factoOfF} with $\mu_{z} \circ \bar{h} = \lambda' \mu_{z} \circ \bar{f}$ for some $\lambda'\in\ringK$. Thus $\bar g=(\lambda+\lambda'\phi'(\tau-z)) \bar f +(\tau-z)^2 i$ where $i$ is yet another morphism in $\Hom(\Indar{\TheS{n,r}}, \TheW{n,k})$. The argument can be repeated, but it will terminate because the number of zeroes of $\mu_z\circ \bar g$ (counted with multiplicities) is finite. Thus $\bar{g} = \psi \circ \bar{f}$, for some $\psi \in \End(\TheW{n,k})$.
\end{proof}
%
%
Note that some extension groups follows easily from this corollary. For instance if $r \geq k$
\begin{equation*}
		\mathsf{Ext}^{1}_{\atl{n}}(\TheW{n,k}, \TheW{n,r}) \simeq \mathsf{Ext}^{1}_{\atl{n}}(\TheW{n,k}, \TheW{n,r;z}) \simeq  0, 
	\end{equation*}
	\begin{equation*}
		\mathsf{Ext}^{1}_{\atl{n}}(\TheW{n,k;z}, \TheW{n,r;w}) 
		\simeq  \delta_{z,w}\delta_{r,k} \mathbb{C}, \qquad \mathsf{Ext}^{1}_{\atl{n}}(\TheW{n,k;z}, \TheW{n,r}) \simeq \delta_{k,r} \mathbb{C},
	\end{equation*}
	\begin{equation*}
		\mathsf{Ext}^{1}_{\atl{n}}(\TheW{n,k}, \TheL{n,r;z_{r+2t}}) \simeq 0. 
	\end{equation*}
These can all be readily found by applying the appropriate functors on the sequence \eqref{eq:exactIndS}, or 
\begin{equation}
	0 \longrightarrow \TheW{n,k} \xrightarrow{\tau-z\ } \TheW{n,k} \longrightarrow \TheW{n,k;z} \longrightarrow 0.
\end{equation}
\end{subsection}
\begin{subsection}{Examples of fusion rules}
This last subsection brings to bear the explicit expressions obtained for the restriction and induction functors $\Resphi, \Indphi, \Resar, \Indar$ acting on the basic modules over $\tl{n}$ and $\atl{n}$. These expressions reduce the computation of fusion rules $( \blank \xfone \blank)$ and $( \blank \xftwo \blank)$ (and even $( \blank \xfthree \blank)$) defined in section \ref{sec:fusionFunctors} to known expressions for fusion rules of $\tl{}$-modules. The latter were obtained by Gainutdinov and Vasseur \cite{GainutdinovVasseur} and Bellet\^ete \cite{BelFus15}. 

When $q$ is generic, general expressions can be given and we display several fusion rules between the standard modules $\TheW{n,k}$, the cell $\TheW{n,k;z}$ and the irreducible $\TheL{n,k;z}$ over $\atl n$. When $q$ is a root of unity, the expressions are messy. (See, for example, Proposition 4.1.1 of \cite{GainutdinovVasseur}.) However the fusion of projective and standard modules of $\tl n$ can be computed easily by simple algorithmic rules (Propositions 4.4 and 5.12 of \cite{BelFus15}.) We shall use these to give one example of affine fusion at $q$ a root of unity. 

\begin{Prop}\label{prop:voila}
Let $q$ be generic, $r \geq s \geq 0$ and $n,m > 2$. Then
	\begin{align*}
 \TheW{n,r} \xfone \TheW{m,s}  
	& \simeq  \TheL{n,r;z_{r+2}} \xfone \TheL{m,s;z_{s+2}}  \simeq \mybigoplus{t=r-s}{r+s} \TheL{n+m,t;z_{t+2}}, \\
 \TheW{n,r} \xftwo \TheW{m,s} 
	& \simeq \TheL{n,r;z_{r+2}} \xftwo \TheL{m,s;z_{s+2}} \simeq \mybigoplus{t = r-s}{r+s} \Indar{\TheS{n+m,t}},\\
 \TheW{n,r;z_{r+2}} \xfone \TheW{m,s;z_{s+2}} 
	&\simeq \mybigoplus{t = r-s}{r+s} \TheL{n+m,t;z_{t+2}},\\
 \TheL{n,r;z_{r+2a}} \xfthree \TheL{m,s;z_{s+2b}}(m)  
	& \simeq 
\bigoplus_{x=0}^{a-1}\bigoplus_{y=0}^{b-1}
	\big( \TheL{n;r+2x;z_{r+2(x+1)}} \xfone \TheL{m,s+2y;z_{s+2(y+1)}} \big)
\end{align*}
where the sign $\myoplus{}{}$ denotes a direct sum with an increment of $2$.
\end{Prop}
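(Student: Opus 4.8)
The plan is to reduce every affine fusion product in the statement to the (known) fusion of standard modules over $\tl{}$, by pushing the four functors through the definitions of $\xfone$, $\xftwo$ and $\xfthree$ (Definition \ref{def:fusionFunctors}). Throughout I use the generic-$q$ fusion rule for the regular algebra, namely $\TheS{n,r}\xf\TheS{m,s}\simeq\bigoplus'_{t=|r-s|}^{r+s}\TheS{n+m,t}$ (Gainutdinov--Vasseur \cite{GainutdinovVasseur}, Bellet\^ete \cite{BelFus15}), together with the module computations already available: $\Indphi{\TheW{n,k}}\simeq\TheS{n,k}$ (Proposition \ref{prop:indH}), $\Indphi{\TheW{n,k;z_{k+2}}}\simeq\TheS{n,k}$ (Proposition \ref{prop:indphi.thew}), and $\Indphi{\TheL{n,k;z_{k+2}}}\simeq\TheI{n,k}$, $\Resphi{\TheI{n,k}}\simeq\TheL{n,k;z_{k+2}}$ (Proposition \ref{prop:indphi.resphi.irred}), the latter read with $\TheI{n,k}=\TheS{n,k}$ since $q$ is generic. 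The only additional ingredient is that $\Indphi$, $\Resphi$ and $\Indar$ are additive and $\xf$ is additive in each variable, so that all of them commute with the finite direct sums below.

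For the first three lines I would simply unfold the definitions. Assuming $r\geq s$, so $|r-s|=r-s$, the $\xfone$-product gives
\begin{equation*}
\TheW{n,r}\xfone\TheW{m,s}=\Resphi{\big((\Indphi{\TheW{n,r}})\xf(\Indphi{\TheW{m,s}})\big)}\simeq\Resphi{\big(\TheS{n,r}\xf\TheS{m,s}\big)}\simeq\mybigoplus{t=r-s}{r+s}\Resphi{\TheS{n+m,t}},
\end{equation*}
and $\Resphi{\TheS{n+m,t}}\simeq\TheL{n+m,t;z_{t+2}}$ yields the stated right-hand side. Replacing the factors $\TheW{n,r},\TheW{m,s}$ by $\TheW{n,r;z_{r+2}},\TheW{m,s;z_{s+2}}$ or by $\TheL{n,r;z_{r+2}},\TheL{m,s;z_{s+2}}$ changes nothing, because all three induce under $\Indphi$ to the same $\TheS{n,r}$ (respectively $\TheS{m,s}$); this simultaneously gives the equality of the two left-hand expressions on the first line and proves the third line. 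The second line is the same computation with the outer functor $\Indar$ in place of $\Resphi$, producing $\bigoplus'_{t=r-s}^{r+s}\Indar{\TheS{n+m,t}}$, and the coincidence of its two left-hand sides holds for the identical reason.

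The fourth line requires the most bookkeeping, and is where I expect the only genuine subtlety. Here I would first apply the restriction formula for $\Resar{\TheL{n,k;z_{k+2t}}}\simeq\bigoplus_{x=0}^{t-1}\TheS{n,k+2x}$ (the preceding generic-$q$ proposition) to both factors, then use additivity of $\xf$ to expand
\begin{equation*}
(\Resar{\TheL{n,r;z_{r+2a}}})\xf(\Resar{\TheL{m,s;z_{s+2b}}})\simeq\bigoplus_{x=0}^{a-1}\bigoplus_{y=0}^{b-1}\big(\TheS{n,r+2x}\xf\TheS{m,s+2y}\big),
\end{equation*}
and apply $\Resphi$ termwise. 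It then remains to recognise each summand $\Resphi{(\TheS{n,r+2x}\xf\TheS{m,s+2y})}$ as a $\xfone$-product: since $z_{(r+2x)+2}=z_{r+2(x+1)}$ and $\Indphi{\TheL{n,r+2x;z_{r+2(x+1)}}}\simeq\TheS{n,r+2x}$, the already-proved first line identifies this summand with $\TheL{n,r+2x;z_{r+2(x+1)}}\xfone\TheL{m,s+2y;z_{s+2(y+1)}}$, and matching the double sums gives the claim. The obstacle here is purely notational: one must verify that the $z$-label of each irreducible produced by $\Resar$ is exactly the label $z_{k+2}$ for which $\Indphi$ returns the correct standard module, i.e. that the index shift in $z_{k+2t}$ aligns with the reindexing $k\mapsto k+2x$ of the standard summands; once this alignment is checked, the identity is forced.
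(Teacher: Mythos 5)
Your proposal is correct and follows essentially the same route as the paper: unfold the definitions of $\xfone$, $\xftwo$, $\xfthree$, substitute the computed values of $\Indphi$, $\Resphi$, $\Resar$ on the standard, cell and irreducible modules (Propositions \ref{prop:indH}, \ref{prop:indphi.thew}, \ref{prop:indphi.resphi.irred} and the $\Resar$ proposition), invoke the generic-$q$ fusion rule $\TheS{n,r}\xf\TheS{m,s}\simeq\bigoplus'_{t=r-s}^{r+s}\TheS{n+m,t}$, and use additivity of all functors together with $\TheP{n,k}\simeq\TheS{n,k}\simeq\TheI{n,k}$ at generic $q$. The paper only writes out two of the four lines explicitly, but your treatment of the remaining ones — in particular the $\Resar$-expansion and $z$-label bookkeeping for the $\xfthree$ line — is exactly the intended argument.
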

\begin{proof}The proofs of these fusion rules simply use the definition of the fusion products and the results of the previous sections. We give two examples.
\begin{align*}
 \TheW{n,r} \xfone \TheW{m,s}  & \equiv\ \Resphi{\big( (\Indphi\TheW{n,r})\xf (\Indphi\TheW{m,s})\big)}, \qquad \textrm{by definition \ref{def:fusionFunctors}},\\
&\simeq\ \Resphi{\big( \TheS{n,r}\xf\TheS{m,s}\big)},\qquad \textrm{by proposition \ref{prop:indH}},\\
&\simeq\ \Resphi {\big( \mybigoplus{t=r-s}{r+s} \TheS{n+m,t} \big)},\qquad\textrm{see section 4.5 of \cite{BelFus15}},\\
& \simeq\ \mybigoplus{t=r-s}{r+s} \TheL{n+m,t;z_{t+2}},\qquad\textrm{by proposition \ref{prop:indphi.resphi.irred}}
\end{align*}
where we have used the fact that, at a generic $q$, the projective, the standard and the irreducible modules all coincide: $\TheP{n,k}\simeq\TheS{n,k}\simeq\TheI{n,k}$. This fact is also used in the next proof:
\begin{align*}
 \TheL{n,r;z_{r+2}} \xftwo \TheL{m,s;z_{s+2}}  & \equiv\ \Indar{\big( (\Indphi{\TheL{n,r;z_{r+2}}})\xf (\Indphi{\TheL{m,s;z_{s+2}}})\big)}, \qquad \textrm{by definition \ref{def:fusionFunctors}},\\
&\simeq\ \Indar{\big(} \TheS{n,r}\xf\TheS{m,s}{\big)},\qquad \textrm{by proposition \ref{prop:indphi.resphi.irred}},\\ 
&\simeq\ \Indar {\big( \mybigoplus{t=r-s}{r+s} \TheS{n+m,t} \big)},\qquad\textrm{see section 4.5 of \cite{BelFus15}},\\
& \simeq\ \mybigoplus{t=r-s}{r+s} \Indar{\TheS{n+m,t}},
\end{align*}
where $\Indar{\TheS{n+m,t}}$ is the $\atl n$-module whose structure is described in propositions \ref{prop:inductionS} and \ref{pro:indS}.
\end{proof}
We end by giving an example of a computation at a root of unity. Let $q$ be such that $q^{2\ell}=1$ for $\ell=5$. Then the fusion $(\TheW{5,3}\xfone \TheW{7,5})$ is computed as follows:
\begin{align*}
\TheW{5,3}\xfone\TheW{7,5}  & \equiv \ \Resphi{\big( (\Indphi{\TheW{5,3}})\xf (\Indphi{\TheW{7,5}})\big)},  \qquad & \textrm{by definition \ref{def:fusionFunctors}},\\
&\simeq\ \Resphi{\big( \TheS{5,3}\xf\TheS{7,5}\big)},\qquad & \textrm{by proposition \ref{prop:indH}},\\
&\simeq\ \Resphi{\big( \TheP{12,4}\oplus \TheP{12,6}\oplus  \TheS{12,8} \big)},& \qquad\textrm{by proposition 5.12 of \cite{BelFus15}},\\
& \simeq\ {\big( \Resphi{\TheP{12,4}}\oplus \Resphi{\TheP{12,6}}\oplus \Resphi{\TheS{12,8}} \big)},\qquad &\ 
\end{align*}
where the structures of the $\atl n$-modules $\Resphi{\TheP{n,k}}$ and $\Resphi{\TheS{n,k}}$ are given in Figure 1, end of section \ref{sec:twoFunctors}. Because of proposition \ref{prop:indphi.thew}, the fusion $\TheW{5,3;y}\xfone\TheW{7,5;z}$ is zero unless $y=(-q)^{5/2}$ and $z=(-q)^{7/2}$, in which case it is isomorphic to $\TheW{5,3}\xfone\TheW{7,5}$.

\end{subsection}
\end{section}

%
\begin{section}{Concluding remarks}\label{sec:conclusion}
%

This article introduced two fusion products over the affine Temperley-Lieb algebra  through the two morphisms $\tl n\overset{\iota}\rightarrow \atl n$ and $\atl n\overset{\phi}\rightarrow \tl n$. Since their definitions rely on the fusion $\xf$ on the regular Temperley-Lieb algebra and due to proposition \ref{lem:functorId}, these products are commutative and associative. The products computed all lead to a finite direct sum of indecomposable modules. Contrarily to the usual tensor products, say of modules over group algebras, the two new products are {\em not} binary operations, that is, they do not take two modules of a given algebra to produce a module of the same algebra. Instead, the product of $\mathsf M\in\Mod \atl m$ with $\mathsf N\in\Mod \atl n$ yields a module in $\Mod \atl{m+n}$. Nevertheless they enjoy a property of {\em stability} in the sense that the product of, say, $\TheW{m,k_1}$ with $\TheW{n,k_2}$ depends only on $k_1$ and $k_2$, not on $m$ or $n$. This property is crucial if these products are to be useful in the context of lattice models and their continuum limits. Beside the proposal of these fusions, the computations of the products in proposition \ref{prop:voila} form a key result. As a byproduct of these computations, results on the representation theory of the affine Temperley-Lieb algebra were obtained: the $\Hom$-groups of the standard $\TheW{n,k}$ and cellular modules $\TheW{n,k;z}$, the dimensions of the irreducible ones $\TheL{n,k;z}$ and, more importantly, the Peirce decomposition of $\atl n(q)$ for $q$ generic in terms of the induced $\tl{}$-standard modules.

There are still many questions to investigate. Our computations left aside the cases for $q$ a root of unity for the computation of fusion products, $\Hom$-groups of $\Indar{\TheS{n,k}}$ and the Peirce decomposition of the affine algebra. These could be interesting. Both the fusion product proposed by Gainutdinov and Saleur \cite{GaiSalAffine} and ours share several important categorical properties, a natural requirement. But they are clearly different: even the simplest products given as examples in \cite{GaiSalAffine} differ from ours. Moreover their recent work \cite{GaiSalAffine2} also claims that the cell modules are closed under fusion (the fusion of two cell modules yields a direct sum of cell modules) which is not the case under our product. Despite these differences, one may ask whether they are related in some ways. Finally it is the comparison with the continuum limit of lattice models, e.g.~the XXZ models, that will decide on the physical relevance of the proposed fusion products.

\end{section}

\appendix
%
\section{Yet another fusion product}\label{app:otherfusion}
%

In this appendix, we introduce another presentation of the affine Temperley-Lieb algebra. This presentation reveals an injective morphism $\atl m\times\atl n\to\atl{m+n}$ and thus another way to do fusion. This appendix is partially based on ideas borrowed from \cite{GLatRootsOfUnity} and \cite{halverson} and ties in with Gainutdinov and Saleur's work on affine fusion \cite{GaiSalAffine}.

Note that the morphism $\phi $, introduced in proposition \ref{prop:defphi}, suggests a different way of choosing the generators of the affine Temperley-Lieb algebra. If we replace the two generators $e_{n}, \tau$ with $g \equiv (-q)^{-3/2} \tau t_{n-1}^{-1}t_{n-2}^{-1}\hdots t_{1}^{-1}$, the defining relations of the algebra are then equivalent to
\begin{equation}
	e_i e_i = (q+q^{-1})e_{i}, \qquad e_{i}e_{i \pm 1}e_i = e_i, \quad \forall i = 1, 2,\hdots n-1,
\end{equation}
\begin{equation}
	e_i e_j = e_j e_i, \quad \text{ if } |i-j| \geq 2,
\end{equation}
\begin{equation}
	e_{1}g e_{1} = (\underbrace{q g + q^{-1} g^{-1} }_{\Lambda})e_{1}, \qquad \Lambda e_{i} = e_{i} \Lambda \quad  i = 1,\hdots , n-1. 
\end{equation}
For example, when $n=5$ the diagram representing $g$ is
\begin{equation}
 (-q)^{3/2} g \equiv \quad
 \begin{tikzpicture}[scale = 1/3, baseline = {(current bounding box.center)}]
	\draw[line width = 2pt] (0,-1) -- (0,5);
	\draw[line width = 2pt] (3,-1) -- (3,5);
	\draw[dashed] (0,-1) -- (3,-1);
	\draw[dashed] (0,5) -- (3,5);
	\draw[line width = 1pt, black] (0,4) .. controls (.5,4) and (1.5,4.5) .. (1.5,5);
	\draw[line width = 1pt, black] (1.5,-1) .. controls (1.5,3.5) and (2.5,4) .. (3,4);
	\foreach \s in {0,...,3}{
		\draw[line width = 3pt,white] (0,\s) -- (3,\s);
		\draw[line width = 1pt,black] (0,\s) -- (3,\s);
	}
	\foreach \s in {1,...,5}
	{	
		\filldraw[black] (0,\s -1) circle (5pt);
		\filldraw[black] (3,\s -1) circle (5pt);
	};
	\end{tikzpicture}\quad , \qquad
	\Lambda \equiv - \quad
	\begin{tikzpicture}[scale = 1/3, baseline = {(current bounding box.center)}]
	\draw[line width = 2pt] (0,-1) -- (0,5);
	\draw[line width = 2pt] (3,-1) -- (3,5);
	\draw[dashed] (0,-1) -- (3,-1);
	\draw[dashed] (0,5) -- (3,5);
	\draw[line width = 1pt, black] (1.5,5) -- (1.5,-1);
	\foreach \s in {0,...,4}{
		\draw[line width = 3pt,white] (0,\s) -- (3,\s);
		\draw[line width = 1pt,black] (0,\s) -- (3,\s);
	}
	\foreach \s in {1,...,5}
	{	
		\filldraw[black] (0,\s -1) circle (5pt);
		\filldraw[black] (3,\s -1) circle (5pt);
	};
	\end{tikzpicture}\quad ,
\end{equation}
where we used the braid notation:
\begin{equation}
	\begin{tikzpicture}[scale = 1/3, baseline = {(current bounding box.center)}]
		\draw[line width = 1pt,black] (0,2) -- (2,0);
		\draw[line width = 3pt, white] (0,0) -- (2,2);
		\draw[line width = 1pt, black] (0,0) -- (2,2);
	\end{tikzpicture}\quad  = (-q)^{1/2} \quad
	\begin{tikzpicture}[scale = 1/3, baseline = {(current bounding box.center)}]
		\draw[line width = 1pt,black] (0,2) .. controls (0,1) and (2,1) ..  (2,2);
		\draw[line width = 1pt,black] (0,0) .. controls (0,1) and (2,1) ..  (2,0);
	\end{tikzpicture} \quad + (-q)^{-1/2} \quad 
	\begin{tikzpicture}[scale = 1/3, baseline = {(current bounding box.center)},rotate = 90]
		\draw[line width = 1pt,black] (0,2) .. controls (0,1) and (2,1) ..  (2,2);
		\draw[line width = 1pt,black] (0,0) .. controls (0,1) and (2,1) ..  (2,0);
	\end{tikzpicture}\quad .
\end{equation}
Note that it follows directly from these relations that
\begin{equation} (q g)^{n} = \Lambda (q g)^{n-1} - (q g)^{n-2} = \hdots = q g U_{n-1}(\Lambda /2) -U_{n-2}(\Lambda/2)\end{equation}
where $U_{n-1}(x)$ is a Chebyshev polynomial of the second kind. 

	Since $\Lambda $ is central, one can consider the algebra obtained by taking the quotient of $\atl{n}$ by the ideal generated by $(\Lambda - \lambda \id )^{m}$, for some $\lambda \in \mathbb{C}$; we call the canonical projection to this quotient algebra $\phi_{\lambda}^{m}$. In particular, if $m = 1$ we obtain the (generalized\footnote{If $Q \neq 1$, one can rescale the generator $b$ and obtain the proper blob algebra, but $Q=1$ one gets a slightly different algebra \cite{MDRR}.}) blob algebra \cite{blob}, which can be seen by defining $b = q g - Q^{-1}$, $Q+ Q^{-1} \equiv \lambda$, which gives:
\begin{equation*}
	\begin{split}
		(q g) + (q g)^{-1} &= Q + Q^{-1} \\
		e_{1}ge_{1} &= \Lambda e_{1}
	\end{split}
	\quad \Leftrightarrow \quad
	\begin{split}
		b^{2} & =  (Q - Q^{-1}) b\\
		e_{1}be_{1} & = ( q Q - Q^{-1}q^{-1}) e_{1}
	\end{split}.
\end{equation*}
It follows that every module on which $\Lambda $ acts like a multiple of the identity is isomorphic to the induction of some module of the blob algebra \cite{GLatRootsOfUnity}. Note that this also proves that $\atl{n}$ is finite dimensional when seen as a $\mathbb{C}[\Lambda]$-algebra, and thus explains why it has such a nice Peirce decomposition despite being an infinite dimensional $\mathbb{C}$-algebra.

These properties also give restrictions on the idempotents of $\atl{n}$. Let $f\in \atl{n} $ be an idempotent such that $\phi_{\lambda}^{1}(f) = 0 $, but $f \neq 0$. It follows that $f = a_{1}(\Lambda - \lambda \id)$ for some $a_{1}\in \atl{n}$, and since $f$ is an idempotent,
\begin{equation*}
	f^{m} = (a_{1})^{m}(\Lambda - \lambda \id)^{m} = f, \qquad \forall m=1,2,\hdots,
\end{equation*}
which gives $\phi_{\lambda}^{m}(f) = 0 $ for all positive integer $m$, but since $f$ must be a finite product of generators, this is impossible. We thus conclude that for all idempotents $f$, $\phi^{m}_{\lambda}(f) \neq 0$ for all integer $m$ and all complex numbers $\lambda $, and thus that $\phi^{1}_{\lambda}(f)$ is a non-zero idempotent of the blob algebra. Note however that if $f$ is primitive, $\phi^{1}_{\lambda}(f)$ might not be. 

 It is also straightforward to show that the affine Temperley-Lieb algebra is a quotient of the type $B_{n}$ braid algebra, that is
	\begin{equation*}
 		t_{1}gt_{1}g = g t_{1}g t_{1}.
 	\end{equation*}
However, it is not a quotient of the type $B_{n}$ Hecke algebra, which would require that $\Lambda$ be a multiple of the identity \cite{halverson}.

  Finally, if $g_{k} \equiv t_{k}\hdots t_{1}g t_{1}\hdots t_{k}$, one can show that
\begin{equation}
 g_{k} g_{t} = g_{t} g_{k},
\end{equation}
\begin{equation}
	e_{k+1}g_{k}e_{k+1} = (\underbrace{q g_{k} + q^{-1}g_{k}^{-1}}_{\Lambda_{k}}) e_{k+1}, \qquad \Lambda_{k}e_{j}= e_{j}\Lambda_{k}, \quad j= 1,\hdots, k-1, k+1, \hdots n-1.
\end{equation}
 This shows in particular that $\atl{n+m}$ has (at least) two subalgebras isomorphic to $\atl{n} \otimes_{\mathbb{C}} \atl{m}$. They can be used to build a natural fusion product, equivalent to the one constructed in \cite{GaiSalAffine}; they could also be used to build a tower of algebras $\atl{1}\subset \atl{2}\subset \hdots \subset \atl{n}$ to construct the projective modules by induction, using the same kind of arguments as those used for the regular Temperley-Lieb algebra \cite{RSA}. 
 
%
\section{The weakest partial order}\label{app:wpo}
%
This appendix describes the weakest partial order in full generality. (Proposition \ref{prop:weakest} is restricted to the cases that are relevant to compute fusion.) It also give the dimension of the irreducible $\atl n$-module $\TheL{n,k;z}$ in terms of the dimension of the $\tl n$-modules $\TheS{n,k}$ and $\TheI{n,k}$.

This appendix uses the definitions of a {\em generic} $q$ and {\em critical} $k$ introduced in section \ref{sec:modulesOfTln} and of $\Lambda_n^a$ and $\Lambda^a$ from section \ref{sec:affine.modules}. If $q$ is a root of unity, $\ell$ will be the smallest positive integer such that $q^{2\ell}=1$. The non-negative integers with the parity of $n$ will be partitioned into {\em chambers} that consist, for a given $b\geq 0$, of integers $m$ with $b\ell\leq m\leq (b+1)\ell-1$. There is at most one critical $k$ in a chamber and, if there is, it is the largest element in this chamber.
%
%
\begin{Lem}Let $q,z\in\mathbb C^\times$ with $q$ generic. Then, in the weakest partial order $\preceq$ on $\paires$, there is at most one element $(s,y)$ that succeeds strictly the element $(t,z)$.
\end{Lem}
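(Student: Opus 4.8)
The plan is to exploit genericity directly. Since $q$ is not a root of unity, neither is $-q$ (otherwise $(-q)^N=1$ would give $q^{2N}=1$), so the map $s\mapsto(-q)^s$ is injective on $\mathbb Z$ and $(-q)^a=(-q)^b$ forces $a=b$. Fix $(t,z)\in\paires$. By definition a \emph{strict} successor $(s,y)$ is one with $m\ge 1$ in \eqref{eq:AandB}, so that $s=t+2m\ge t+2\ge 2$; I would count these according to whether they satisfy condition (A) or condition (B).

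First I would bound the number of type-(A) successors. Condition (A) requires $z^2=(-q)^s$; by the injectivity above there is at most one integer $s$ with this property, and once $s$ is fixed the remaining data $m=(s-t)/2$ and $y=z(-q)^{-m}$ are determined. Hence $(t,z)$ has at most one strict successor of type (A). The identical argument applied to condition (B), where $z^2=(-q)^{-s}$, shows there is at most one strict successor of type (B).

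The key step is then to rule out the coexistence of a type-(A) and a type-(B) strict successor. If both existed, with respective indices $s_A$ and $s_B$, then $(-q)^{s_A}=z^2=(-q)^{-s_B}$, whence $s_A=-s_B$ by injectivity; but $s_A,s_B\ge 2>0$, a contradiction. Therefore at most one of the two conditions can be realized by a strict successor, giving at most one strict successor in all. The only point needing care is the identification $(0,z)\sim(0,z^{-1})$ at level $0$: I would check that replacing $z$ by $z^{-1}$ merely interchanges the roles of conditions (A) and (B) (the condition $z^{2}=(-q)^{s}$ for the representative $z$ becomes $(z^{-1})^{2}=(-q)^{-s}$ for $z^{-1}$) and produces the same pair $(s,y)$, using $z^2=(-q)^s$ to verify $z(-q)^{-m}=z^{-1}(-q)^{m}$. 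This confirms that the successor relation descends to $\paires$ and that the count above is unaffected. I expect this level-$0$ bookkeeping, rather than the main injectivity argument, to be the only delicate part.
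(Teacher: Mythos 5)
Your argument is correct, and it is exactly the ``easy analysis of the solutions of conditions (A) and (B)'' that the paper invokes without further detail: the injectivity of $s\mapsto(-q)^{s}$ for generic $q$ gives at most one solution of each of (A) and (B), their coexistence is excluded since $s_A=-s_B$ is incompatible with $s_A,s_B\ge 2$, and your check that the identification $(0,z)\sim(0,z^{-1})$ merely swaps (A) and (B) while producing the same successor is the right bookkeeping. One small addendum in case the lemma is read as counting all elements with $(t,z)\prec(s,y)$ in the transitive closure rather than only one-step successors: the unique successor $(s,y)$ your analysis produces satisfies $y^{2}=(-q)^{t}$ (case A) or $y^{2}=(-q)^{-t}$ (case B), so a further successor would require an index $j\ge s+2>t\ge -t$ with $(-q)^{\pm j}=(-q)^{\pm t}$, which the same injectivity forbids; hence the chain terminates after one step and transitivity contributes no new elements.
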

The proof rests upon an easy analysis of the solutions of conditions (A) and (B) in \eqref{eq:AandB}.
%
%
\begin{Lem}Let $q,z\in\mathbb C^\times$ with $q$ a root of unity. If, in the weakest partial order, there is an element strictly succeeding $(t,z)$, then this $(t,z)$ has infinitely many distinct successors in $\paires$.
\end{Lem}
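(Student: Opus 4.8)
The plan is to reduce the whole statement to the single observation that, once $q^{2\ell}=1$, the quantity $(-q)^{2m}$ is periodic in $m$ with period $\ell$. First I would record the elementary fact that $(-q)^{2\ell}=(-1)^{2\ell}q^{2\ell}=1$, so that shifting $m$ by a multiple of $\ell$ does not change $(-q)^{2m}$. Then I would unpack the hypothesis: to say that some element strictly succeeds $(t,z)$ in the weakest partial order $\preceq$ is to say there is a nontrivial chain of direct successions starting at $(t,z)$. Its first step is a direct successor $(s,y)\neq(t,z)$, and since a direct successor distinct from $(t,z)$ forces the non-negative integer $m$ of \eqref{eq:AandB} to be positive, this $(s,y)$ is a \emph{strict} direct successor coming from condition (A) or (B) with some $m_0\geq 1$.

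The core step treats the two conditions symmetrically. Suppose first that the strict successor comes from (A), so $z^2=(-q)^{t+2m_0}$ with $m_0\geq 1$. For each integer $a\geq 0$ put $m_a=m_0+\ell a$, $j_a=t+2m_a$ and $y_a=z(-q)^{-m_a}$. Then $m_a\geq 1$ and
$$(-q)^{j_a}=(-q)^{t+2m_0}(-q)^{2\ell a}=(-q)^{t+2m_0}=z^2,$$
so $(j_a,y_a)$ again satisfies (A) and is therefore a strict direct successor of $(t,z)$. The case of condition (B) is identical after replacing $(-q)$ by $(-q)^{-1}$: from $z^2=(-q)^{-(t+2m_0)}$ one obtains $z^2=(-q)^{-(t+2m_a)}$ for every $a$, and $(j_a,z(-q)^{m_a})$ is a strict direct successor by (B).

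Finally I would check distinctness and membership in $\paires$. The first coordinates $j_a=t+2m_0+2\ell a$ are strictly increasing in $a$, so the pairs $(j_a,y_a)$ are pairwise distinct irrespective of their second coordinates; moreover $j_a\geq t+2>0$, so the sole identification in $\paires$ (namely $(0,w)\sim(0,w^{-1})$) never applies, and the $(j_a,y_a)$ are genuinely distinct elements of $\paires$. Each is a direct successor of $(t,z)$, hence lies strictly above $(t,z)$ in $\preceq$, yielding infinitely many distinct strict successors. I expect no serious obstacle here; the only points needing care are the reduction of the partial-order hypothesis to the existence of a single strict \emph{direct} successor (so that (A) or (B) holds with $m_0\geq 1$) and the remark that distinct $j_a$ already guarantee distinct pairs, which lets me avoid tracking the $y_a$ or worrying about the $k=0$ identification.
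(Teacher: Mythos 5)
Your proof is correct and rests on exactly the observation the paper uses: since $(-q)^{2\ell}=1$, a solution $(j,y)$ of (A) or (B) yields solutions $(j+2\ell a,\,y(-q)^{\mp\ell a})$ for all $a\geq 0$, giving infinitely many distinct successors. The extra care you take (reducing to a strict direct successor with $m_0\geq 1$, and noting the identification $(0,w)\sim(0,w^{-1})$ is irrelevant since $j_a>0$) is sound but the argument is essentially the paper's.
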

This follows from the observation that, if $j$ and $y$ solve either (A) or (B), so do $j+2b\ell$ and $y(-q)^{b\ell}$ for all $b\geq 0$.

If $q$ is a root of unity and $z\in\mathbb C^\times$, the equations $z^2=(-q)^{\pm j}$ can have a solution only if $z$ is a $4\ell$-th root of unity. Such a $z$ can always be written as $(-1)^m(-q)^{r/2}$ with $r\in\mathbb Z$ and $m\in\{0,1\}$.

%
%
\begin{Prop}Let $q\in\mathbb C^\times$ be a root of unity other than $\pm 1$ and $\ell\geq 2$ be the smallest integer such that $q^{2\ell}=1$. Let $(s,z)\in\paires$ be a pair with infinitely many distinct successors. Let $r$ be the smallest integer larger than $s$ that solves $z^2=(-q)^r$. Finally let $m\in\{0,1\}$ be such that $z=(-1)^m(-q)^{r/2}$. Then the successors of $(s,z)=(k_0,u_0)$ in the weakest partial order are organized as in the figure of proposition \ref{prop:weakest} where again $(k,z)\leftarrow (j,y)$ stands for $(k,z)\preceq (j,y)$, a vertical arrow means that the pair joined solves condition (A) and a diagonal one condition (B) of \eqref{eq:AandB}. The pairs are
\begin{equation*}\left.
\begin{aligned}(k_a,u_a)&=(s+2a\ell,(-1)^m(-q)^{r/2+a\ell}),\\
(j_a,y_a)&=(r+2a\ell, (-1)^m(-q)^{s/2+a\ell}),\\
(i_a,x_a)&=(-r+\delta_i+2a\ell, (-1)^m(-q)^{-s/2+\delta_i/2+a\ell}),\\
(h_a,v_a)&=(-s+\delta_h+2a\ell, (-1)^m(-q)^{-r/2+\delta_h/2+a\ell}),
\end{aligned}\right\}\qquad \textrm{for }a\geq 0,
\end{equation*}
where the $\delta_i$ and $\delta_h$ are the smallest integer multiples of $2\ell$ such that $i_0$ is larger than $k_0$, and $h_0$ larger than both $i_0$ and $j_0$. The list of successors is exhaustive, but there might be arrows missing in the diagram and some of the pairs may coincide.
\end{Prop}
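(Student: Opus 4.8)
The plan is to reduce the assertion to a finite piece of modular arithmetic governing how the two conditions (A) and (B) of \eqref{eq:AandB} transform a pair, and then to read the diamond off the resulting dynamics. First I would record the consequences of the two preceding lemmas together with the remark that a solution of $z^2=(-q)^{\pm j}$ forces $z$ to be a $4\ell$-th root of unity: this is exactly what lets us normalize $z=(-1)^m(-q)^{r/2}$. I would also note the elementary periodicity fact that, within a single parity class of indices, the exponent of $-q$ is periodic with period $2\ell$; indeed the smallest positive \emph{even} integer $e$ with $(-q)^e=1$ is $2\ell$, regardless of whether the order of $-q$ equals $\ell$ or $2\ell$. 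Since a succession satisfies $j=k+2m$, all first coordinates occurring in the $\preceq$-component of $(k_0,u_0)$ share the parity of $s$, so this is the only periodicity that matters.

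The conceptual core is to attach to each pair $(t,w)$ its \emph{state}, namely the residue classes $(t\bmod 2\ell,\ c\bmod 2\ell)$, where $c$ is defined by $w^2=(-q)^c$. A short computation using $y=z(-q)^{\mp m}$ shows that the class of the new value depends only on the state and on whether (A) or (B) was applied, not on the integer $m$: condition (A) sends $(t,c)$ to $(c,t)$, while condition (B) sends it to $(-c,-t)$. Starting from the state $(s,r)$ of $(k_0,u_0)$, the orbit under these two maps is contained in the four-element set $\{(s,r),(r,s),(-r,-s),(-s,-r)\}$, which I would identify with the families $(k_a,u_a),(j_a,y_a),(i_a,x_a),(h_a,v_a)$ respectively. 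The induced transition graph on these four states is precisely the diamond of Proposition \ref{prop:weakest}: from $(s,r)$ one reaches $(r,s)$ by (A) and $(-r,-s)$ by (B); from $(r,s)$ and $(-r,-s)$ one reaches the two common states $(s,r)$ and $(-s,-r)$, which is the collapse that prevents exponential growth. This simultaneously yields the shape of the order and the exhaustiveness claim, since every successor must carry one of these four states, hence a first coordinate congruent to one of $s,r,-s,-r$ modulo $2\ell$.

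It then remains to convert the class information into the explicit progressions. Since a strict succession strictly increases the first coordinate and only four residue classes occur, the successors in each class form an arithmetic progression with common difference $2\ell$; the explicit values $u_a,y_a,x_a,v_a$ follow by tracking the scalar factors $(-q)^{\mp(j-t)/2}$ accumulated along a chain and using $(-q)^{2\ell}=1$. I would pin down $\delta_i$ and $\delta_h$ as the smallest multiples of $2\ell$ placing $i_0$ above $k_0$ and $h_0$ above both $i_0$ and $j_0$, which is exactly the constraint for a pair in the corresponding class to be a genuine successor, and a direct substitution then matches the four families. As a consistency check one recovers Proposition \ref{prop:weakest} by specializing to $z=(-q)^{(k+2)/2}$, where $r=k+2$ and $\delta_i=\delta_h=2o\ell$.

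The main obstacle is not the conceptual dynamics, which is clean, but the minimality-and-degeneracy bookkeeping behind the phrase ``there might be arrows missing and some pairs may coincide.'' Because condition (A) (resp.\ (B)) with \emph{any} admissible $m$ is a direct succession, I must separate the covering relations actually drawn from the full succession relation, and verify that the smallest reachable index in each class is the stated one. The delicate point is that the first $h$-type pair is reached from whichever of $j_0,i_0$ has the smaller index, so the clean diamond closes only when no index $\equiv -s\pmod{2\ell}$ lies strictly between $i_0$ and $j_0$. I would treat the coincidences (for instance $r\equiv\pm s$, or $2s\equiv 0\pmod{2\ell}$, which collapse the four states to fewer) and the possibly-smaller initial $h$-pair case by case, checking each time that the resulting pair still lies in one of the four listed progressions; this secures exhaustiveness without asserting that the covering diagram is always the full diamond.
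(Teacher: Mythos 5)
Your argument is correct and reaches the same conclusion, but it is organized around a different device than the paper's proof. The paper solves conditions (A) and (B) directly for each of the four families in turn (computing, e.g., $u_a^2=(-q)^r$ and reading off $j_a=r+2a\ell$ as the smallest admissible solution larger than $k_a$), and gets exhaustiveness from the remark that the successors of the four families land back in the four families. You package the same computation as a two-generator dynamical system on residue classes: (A) acts as $(t,c)\mapsto(c,t)$ and (B) as $(t,c)\mapsto(-c,-t)$ on the pair (index mod $2\ell$, exponent mod $2\ell$), so the orbit of $(s,r)$ is the four-element set $\{(s,r),(r,s),(-r,-s),(-s,-r)\}$ labelling the families, and closure is automatic. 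This buys a transparent explanation of why the succession graph collapses to a ladder instead of growing as a binary tree, and it makes exhaustiveness a one-line consequence; it does not buy any savings on the residual bookkeeping --- pinning down the minimal index in each class (the role of $\delta_i$ and $\delta_h$), the possibly missing arrows, and the coincidence degenerations still require the same case-by-case verification, which both you and the paper only sketch. One point worth making fully explicit in your write-up: the exponent $c$ with $w^2=(-q)^c$ is a priori defined only modulo the order of $-q$, which can be $\ell$ rather than $2\ell$; your observation that $2\ell$ is the smallest \emph{even} period of $-q$, combined with the fact that all indices in the component share the parity of $s$, is precisely what makes the second component of your state well defined, and the argument is cleaner if that is said outright.
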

\begin{proof}The successors are obtained by solving (A) or (B) of \eqref{eq:AandB}. We first discuss the case when the four families $(k_a,u_a),\allowbreak(j_a,y_a),(i_a,x_a)$ and $(h_a,v_a)$ do not intersect. The pair $(j_a,y_a)$ is obtained by solving (A) starting with $(k_a,u_a)$. Since $u_a^2=(-q)^{2(r/2+a\ell)}=(-q)^r$, the equation for $j_a$ reads simply $(-q)^r=(-q)^{j_a}$. By definition of $r$, $j_a=r+2a\ell$ is the smallest solution that is larger than $k_a$ (recall that $j_a$ and $k_a$ must of the same parity). Under the assumptions that the four families do no intersect, then $j_a$ with $a\geq 1$ is also larger than $h_{a-1}$. Indeed the integers $h_a-j_a$ and $h_a-i_a$ must be larger than $0$ and $\leq 2\ell$, since they solve equations between $2\ell$-th roots of unity and, at each step of the solution process, the smallest possible is chosen. Moreover, since there are no coincidences, the differences $h_a-j_a$ and $h_a-i_a$ cannot be $2\ell$. Thus $j_{a-1}<h_{a-1}<j_{a-1}+2\ell=j_{a}$. The other arrows are obtained similarly by solving the appropriate (A) or (B). The list is exhaustive since the successors of the four families belong to these four families. For example solving (A) starting from $(h_a,v_a)$ will lead to a member of the family of the $(i_a,x_a)$'s and so on.

Coincidences among the four families of successors may occur in several ways. The analysis of these cases is not much more difficult and we discuss a single example. Suppose $s=r$. Then $(j_a,y_a)=(r+2(a+1)\ell,(-1)^m(-q)^{s/2+a\ell})$ and $(j_a,y_a)=(k_{a+1},u_{a+1})$ if $(-q)^\ell=1$, which may occur for example when $(-q)=exp(2\pi i e/\ell)$ with $\ell$ odd and $\gcd(e,\ell)=1$ Suppose that $(-q)^\ell$ is indeed $1$ and thus that these coincidences occur. Then it is easy to check that new arrows appear in the figure, namely
$$(j_a,y_a)=(k_{a+1},u_{a+1})\longleftarrow (i_a,x_a),$$
because then the equation (B) starting from $(i_a,x_a)$ is $u_{a+1}^{-2}=(-q)^j$ with $x_a^{-2}=(-q)^s$ and, thus, $k_{a+1}=s+2b\ell$ for some integer $b$. The expression of $i_a$ gives $k_{a+1}=i_a+(s+r-\delta_i-2(b-a)\ell)$ and the second equation in (B) is $u_{a+1}=x_a(-q)^{(s+r)/2}= (-1)^m(-q)^{r/2}=y_a$ because, again $(-q)^\ell=1$. In this case the figure of the weakest partial order can be streamlined into the simpler 
$$(k_0,u_0)\longleftarrow (i_0,x_0) \longleftarrow (k_1,u_1) \longleftarrow (i_1,x_1) \longleftarrow \dots$$
In all other cases when two families coincide, the ladder-like diagram of successors degenerate to a single line like the above one.
\end{proof}
%
%
From now on we shall refer to a {\em weakest partial order with coincidences} when some of the four families intersect. When there are coincidences, the cardinality of $\{(k_0,u_0),\allowbreak(j_0,y_0),(i_0,x_0),(h_0,v_0)\}$ is $2$, otherwise it is $4$.
\begin{Coro}\label{cor:dimL}Let the hypotheses of the previous proposition hold. With the notation above, the dimension $\dim \TheL{n,k;z}$ of the irreducible $\atl n$-modules is as follows. The sign $\mysum{}{}$ denotes a sum with an increment of $2$.

\noindent{\bfseries $\bullet$ $s$ and $r$ non-critical} --- If there are no coincidences, the dimension is
$$\begin{cases}
\mysum{j=s}{s^+} \dim\TheS{n,j}+\mysum{j=s^++2}{\min(i_0,j_0)-2}\dim\TheI{n,j}, & \textrm{if }s^++2<\min(i_0,j_0),\\
\mysum{j=s}{\min(i_0,j_0)-2}\dim\TheI{n,j}, & \textrm{otherwise},$$
\end{cases}
$$
and if there are
$$\sum_{b\geq 0}\mysum{j=s}{\min(i_0,j_0)-2}\dim \TheS{n,j+2b\ell}.$$

\noindent{\bfseries $\bullet$ $s$ critical, $r$ non-critical} --- If there are no coincidences, the dimension is
$$\dim\TheS{n,s}+\mysum{j=s+2}{\min(i_0,j_0)-2}\dim\TheI{n,j},\quad\textrm{and if there are}\quad
\sum_{b\geq 0}\TheS{n,s+2b\ell}.$$

\noindent{\bfseries $\bullet$ $s$ non-critical, $r$ critical} --- If there are no coincidences, the dimension is
$$\begin{cases}
\mysum{j=s}{s^+} \dim\TheS{n,j}+\mysum{j=s^++2}{r-2}\dim\TheI{n,j}, & \textrm{if }r-s>\ell,\\
\mysum{j=s}{r-2}\dim\TheI{n,j}, & \textrm{if }r-s<\ell,
\end{cases}\quad\textrm{
and if there are,}\quad
\sum_{b\geq 0}\mysum{j=s}{s^+}\dim \TheS{n,j+2b\ell}.$$

\noindent{\bfseries $\bullet$ $s$ and $r$ critical} --- The dimension is
$$\begin{cases}
\sum_{b\geq 0}\TheS{n,s+2b\ell},& \textrm{if }r=s,\\
\dim\TheS{n,s}+\mysum{j=s+2}{r-2}\dim\TheI{n,j}, & \textrm{if }r=s+\ell.
\end{cases}$$
The sums $\sum_{b\geq 0}$ all truncate after a finite number of terms ($\TheS{n,k}=0$ for $k>n$).
\end{Coro}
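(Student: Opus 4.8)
The proof rests on two inputs already available. The first is Theorem~\ref{thm:GL34}(2), which identifies the composition factors of a cell module: those of $\TheW{n,k;z}$ are exactly the irreducibles $\TheL{n,j;y}$ with $(k,z)\preceq(j,y)$, each with multiplicity one. The second is the dimension formula \eqref{eq:dimWnkz}, $\dim\TheW{n,j;y}=\binom{n}{(n-j)/2}$, which vanishes as soon as $j>n$. Reading the first input at the level of dimensions yields, for every pair,
\begin{equation*}
\dim\TheW{n,k;z}=\sum_{(k,z)\preceq(j,y)}\dim\TheL{n,j;y},
\end{equation*}
a triangular linear system whose index set is the finite poset $\paires_n$. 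The plan is to invert this system by M\"obius inversion over the explicit Hasse diagram of the previous proposition, substitute the binomials, and collapse the result onto the stated $\TheS$- and $\TheI$-sums.

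First I would compute the M\"obius function of the poset. In the absence of coincidences the diagram is the \emph{ladder} of the previous proposition: place $(k_0,u_0)$ at level $0$, the pairs $(j_a,y_a),(i_a,x_a)$ at level $2a+1$, and the pairs $(k_a,u_a),(h_{a-1},v_{a-1})$ at level $2a$. Each level $\geq1$ is a two-element antichain that is covered by, and covers, both elements of the neighbouring levels, so comparability is governed solely by level. An induction using the elementary diamond identity ($\mu(x,w)=1$ when $x<a,b<w$ with $a,b$ incomparable) then gives $\mu\big((k_0,u_0),e\big)=(-1)^{\mathrm{level}(e)}$, whence
\begin{equation*}
\dim\TheL{n,k_0;u_0}=\sum_{a\geq0}\Big(\dim\TheW{n,k_a;u_a}+\dim\TheW{n,h_a;v_a}-\dim\TheW{n,j_a;y_a}-\dim\TheW{n,i_a;x_a}\Big),
\end{equation*}
a finite sum, since every term with index exceeding $n$ vanishes. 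When families coincide (and, for critical $s$, in the chain regime described in the previous proposition and in Proposition~\ref{prop:weakest}) the diagram degenerates to a single chain, on which $\mu$ is supported on consecutive links; there $\dim\TheL{n,k_0;u_0}$ is simply the difference of two consecutive cell dimensions.

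Next I would pass from cell dimensions to $\TheS$- and $\TheI$-dimensions. Telescoping \eqref{eq:dimSnk} gives the identity $\binom{n}{(n-k)/2}=\mysum{j=k}{n}\dim\TheS{n,j}$, so each cell dimension becomes a truncated series in the $\dim\TheS{n,j}$. Substituting the index formulas $k_a=s+2a\ell$, $j_a=r+2a\ell$, $i_a=-r+\delta_i+2a\ell$, $h_a=-s+\delta_h+2a\ell$ of the previous proposition and collecting, the coefficient of each $\dim\TheS{n,j}$ is a signed count of family-indices not exceeding $j$. These coefficients equal $1$ throughout the lowest chamber (indices $s$ up to $s^+$), settle into the alternating pattern that, via $\dim\TheS{n,k}=\dim\TheI{n,k}+\dim\TheI{n,k^+}$ coming from the exact sequence \eqref{eq:sesTheS}, re-sums into the $\TheI$-terms, and vanish once $j\geq\min(i_0,j_0)$. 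The four regimes of the statement correspond precisely to how the four index families sit relative to the critical integers: criticality of $s$ or of $r$ removes a family (collapsing a ladder to a chain), and the interleaving produces the thresholds comparing $s^++2$ with $\min(i_0,j_0)$ and $r-s$ with $\ell$, as well as the split $r=s$ versus $r=s+\ell$ in the doubly-critical case.

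The conceptual skeleton---composition factors equal the up-set, M\"obius inversion, telescoping, $\TheS$-to-$\TheI$ conversion---is uniform, so the main obstacle is the bookkeeping. The delicate points are: determining, in each criticality/coincidence regime, which of the four families are present and where their indices cross the chambers and the bound $n$; handling the possibility, flagged in the previous proposition, that some arrows are absent or some pairs coincide (which alters the M\"obius function and hence the signs); and verifying that the signed binomial sums cancel exactly at the stated boundaries. Each verification is finite and elementary, but the number of configurations and the boundary conditions they generate make this case analysis the substance of the argument.
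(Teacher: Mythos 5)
Your strategy is essentially the paper's. Both arguments start from Theorem~\ref{thm:GL34}(2) (the composition factors of $\TheW{n,k;z}$ are the $\TheL{n,j;y}$ with $(k,z)\preceq(j,y)$, each with multiplicity one), invert this triangular system over the explicit poset of the preceding proposition --- you by M\"obius inversion, the paper by the equivalent inclusion--exclusion --- arrive in the coincidence-free case at the same signed sum $\sum_{a\ge0}\bigl(\dim\TheW{n,k_a;u_a}-\dim\TheW{n,j_a;y_a}-\dim\TheW{n,i_a;x_a}+\dim\TheW{n,h_a;v_a}\bigr)$, and then convert to $\TheS$- and $\TheI$-sums with the same two telescoping identities. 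Your computation $\mu\bigl((k_0,u_0),e\bigr)=(-1)^{\mathrm{level}(e)}$ for the ladder is correct and reproduces the paper's signs, so for the coincidence-free cases the two proofs differ only in presentation and in how much of the bookkeeping is written out.

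The genuine problem is the coincidence case, and it is internal to your write-up. You correctly observe that when the poset degenerates to a chain the M\"obius function is supported on the first two elements, so your inversion gives $\dim\TheL{n,k_0;u_0}=\dim\TheW{n,k_0;u_0}-\dim\TheW{n,j_0;y_0}=\mysum{j=k_0}{j_0-2}\dim\TheS{n,j}$, a single-chamber sum. This is \emph{not} the formula you are asked to prove: the stated coincidence formulas carry an outer sum $\sum_{b\geq0}$ over all higher chambers (the paper obtains them by keeping the full alternating sum $\sum_{a\geq 0}\bigl(\dim\TheW{n,k_a;u_a}-\dim\TheW{n,j_a;y_a}\bigr)$ down the chain, i.e.\ coefficients $+1,-1,+1,-1,\dots$ rather than your $1,-1,0,0,\dots$). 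The two disagree as soon as the chain meets $\paires_n$ in three or more points. Concretely, take $\ell=3$, $n=8$, $s=2$ (critical), $z=(-q)^2$, $r=4$ (the paper's own second example): the chain inside $\paires_8$ is $(2,z)\prec(4,\cdot)\prec(8,\cdot)$, multiplicity one on the up-set forces $\dim\TheL{8,2;z}=\dim\TheW{8,2;z}-\dim\TheW{8,4;\cdot}=56-28=28=\dim\TheS{8,2}$, while the stated formula gives $\dim\TheS{8,2}+\dim\TheS{8,8}=29$, which is inconsistent with $\dim\TheW{8,2;z}=56=29+27+1-1$ failing to balance. So your argument, carried out, proves a different formula from the statement in every coincidence case; you cannot simply assert at the end that the bookkeeping ``collapses onto the stated sums.'' You must either exhibit why the chain inversion is not the right one (it is, granting Theorem~\ref{thm:GL34}(2) as quoted) or explicitly flag and resolve the discrepancy between your result and the corollary's $\sum_{b\geq0}$ terms. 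As written, the proposal is incomplete precisely where it silently contradicts itself.
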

\begin{proof}
The proof starts with the usual alternating argument. Theorem \ref{thm:GL34} gives the composition factors of $\TheW{n,k=k_0;z=u_0}$. To obtain $\dim \TheL{n,k;z}$, the dimensions of $\TheW{n,j_0;y_0}$ and $\TheW{n,i_0;x_0}$ are first subtracted from $\dim \TheW{n,k_0;z}$ as they share with $\TheW{n,k=k_0;z}$ the composition factors $\TheL{n,j_0;y_0}$ and $\TheL{n,i_0;x_0}$. (The dimensions of the $\TheW{}$'s are given by \eqref{eq:dimWnkz}.) This subtraction removes twice the dimensions of the composition factors $\TheL{n,k_1;u_1}$ and $\TheL{n,h_0;v_0}$ which are put back by simply adding $\dim\TheW{n,k_1;u_1}$ and $\dim\TheW{n,h_0;v_0}$. But the last operation has added twice the dimensions of $\TheL{n,j_1;y_1}$ and $\TheL{n,i_1;x_1}$. Alternating additions and subtractions thus leads to the formula:
\begin{equation}\label{eq:telescope}\dim \TheL{n,k;z}=\sum_{a\geq 0}\left(\dim\TheW{n,k_a;u_a}
-\dim\TheW{n,j_a;y_a}
-\dim\TheW{n,i_a;x_a}
+\dim\TheW{n,h_a;v_a}\right).\end{equation}
The dimensions of the $\TheS{n,k'}$ and $\TheW{n,k;z}$ are simply related by equations \eqref{eq:dimSnk} and \eqref{eq:dimWnkz}:
$$\dim\TheW{n,k;z}=\mysum{j\geq 0}{}\ \dim\TheS{n,k+j}.$$
In turn the dimension of the irreducible $\TheI{n,k}$ is given by (\cite{Martin, RSA}):
$$\dim\TheI{n,k}=\sum_{b\geq 0}(-1)^b\dim\TheS{n,k^b}$$
where $k^0=k, k^1=k^+, k^2=k^{++},\dots,$ are the integers larger or equal to $k$ in the class $[k]$. (See section \ref{sec:modulesOfTln}.) Again these two sums contain only a finite number of terms. 
\begin{equation*}
\begin{tikzpicture}[baseline={(current bounding box.center)},scale=0.45]
	\foreach \s in {1,...,30}{\filldraw[black] (\s,0) circle (3pt);};
	\draw (2,0) circle (6pt);
	\draw (4,0) circle (6pt);
	\draw (10,0) circle (6pt);
	\draw (12,0) circle (6pt);
	\draw (14,0) circle (6pt);
	\draw (16,0) circle (6pt);
	\draw (22,0) circle (6pt);
	\draw (24,0) circle (6pt);
	\draw (26,0) circle (6pt);
	\draw (28,0) circle (6pt);
	\foreach \s in {0,...,5}{\draw[thick] (0.5+6*\s,0.5) -- (0.5+6*\s,-0.5) ;};	
	\node at (-0.5,0) {$\dots$};
	\node at (31.5,0) {$\dots$};
	\node at (2,1) {$\scriptstyle{s}$};
	\node at (4,1) {$\scriptstyle{r}$};
	\node at (10,1) {$\scriptstyle{r^1+2}$};
	\node at (12,1) {$\scriptstyle{s^1+2}$};
	\node at (14,1) {$\scriptstyle{s^2}$};
	\node at (16,1) {$\scriptstyle{r^2}$};
	\node at (22,1) {$\scriptstyle{r^3+2}$};
	\node at (24,1) {$\scriptstyle{s^3+2}$};
	\node at (26,1) {$\scriptstyle{s^3}$};
	\node at (28,1) {$\scriptstyle{r^3}$};
    \node at (1,-1) [anchor=east] {$\TheW{n,s;u_0}\to$};
    \foreach \s in {2,...,30}{\node at (\s,-1) {$+$};};
    \node at (1,-2) [anchor=east] {$\TheW{n,r;y_0}\to$};
    \foreach \s in {4,...,30}{\node at (\s,-2) {$-$};};
    \node at (1,-3) [anchor=east] {$\TheW{n,r^1+2;x_0}\to$};
    \foreach \s in {10,...,30}{\node at (\s,-3) {$-$};};
    \node at (1,-4) [anchor=east] {$\TheW{n,s^1+2;v_0}\to$};
    \foreach \s in {12,...,30}{\node at (\s,-4) {$+$};};
    \node at (1,-5) [anchor=east] {$\TheW{n,s^2;u_1}\to$};
    \foreach \s in {14,...,30}{\node at (\s,-5) {$+$};};
    \node at (1,-6) [anchor=east] {$\TheW{n,r^2;y_1}\to$};
    \foreach \s in {16,...,30}{\node at (\s,-6) {$-$};};
    \node at (1,-7) [anchor=east] {$\TheW{n,r^3+2;x_1}\to$};
    \foreach \s in {22,...,30}{\node at (\s,-7) {$-$};};
    \node at (1,-8) [anchor=east] {$\TheW{n,s^3+2;v_1}\to$};
    \foreach \s in {24,...,30}{\node at (\s,-8) {$+$};};
    \node at (1,-9) [anchor=east] {$\TheW{n,s^4;u_2}\to$};
    \foreach \s in {26,...,30}{\node at (\s,-9) {$+$};};
    \node at (1,-10) [anchor=east] {$\TheW{n,r^4;y_2}\to$};
    \foreach \s in {28,...,30}{\node at (\s,-10) {$-$};};
	\node at (1,-11) [anchor=east] {$\dots$};
	\node at (31,-11) [anchor=west] {$\dots$};
	\draw (0,-12) -- (30.5,-12);
	\node at (2,-13) {$+$};\node at (3,-13) {$+$};\node at (10,-13) {$-$};\node at (11,-13) {$-$};
	\node at (14,-13) {$+$};\node at (15,-13) {$+$};\node at (22,-13) {$-$};\node at (23,-13) {$-$};
	\node at (26,-13) {$+$};\node at (27,-13) {$+$};
	\foreach \s in {0,...,5}{\draw[thick] (0.5+6*\s,-12.5) -- (0.5+6*\s,-13.5) ;};
	\draw (1,-13) circle (3pt);	
	\foreach \s in {4,...,9}{\draw (\s,-13) circle (3pt);};
	\foreach \s in {12,...,13}{\draw (\s,-13) circle (3pt);};
	\foreach \s in {16,...,21}{\draw (\s,-13) circle (3pt);};
	\foreach \s in {24,...,25}{\draw (\s,-13) circle (3pt);};
	\foreach \s in {28,...,30}{\draw (\s,-13) circle (3pt);};
	\node at (-0.5,-13) {$\dots$};
	\node at (31.5,-13) {$\dots$};
\end{tikzpicture}
\end{equation*}
The rest of the argument keeps track of how many terms of the form $\pm\dim\TheS{n,j}$ appear in the telescopic sum \eqref{eq:telescope}. We give two examples of such exercises. The first one is a case without coincidences. Suppose $s$ and $r$ are non non-critical. When $s$ is non-critical, then $h_a=-s+\delta_h+2a\ell$ is equal to $s^++2+2b\ell$ for some integer $b$. Suppose then that the first elements $k_0,i_0,j_0,h_0$ of the four families are all distincts (that is, there is no coincidences) and ordered as
$$k_0=s<i_0=r<j_0=r^++2<h_0=s^++2<k_1=s+2\ell<\dots$$
The sum \eqref{eq:telescope} is shown diagrammatically in the above figure. All dots on the top line depict integers of the same parity and small vertical lines the critical integers. (In the figure $\ell=12$, the dots are at even integers and the critical lines are odd ones.) The elements of the families $k_a, j_a, i_a, h_a$, $a\geq 0$, are also shown by  circles around the dots. Under the top line are the contributions of the $\TheW{n,l;w}$'s ordered with increasing index $l$. The dimension of $\TheW{n,s;u_0}$ is the sum of the dimensions $\dim\TheS{n,j}$ for $j\geq s$ with the parity of $s$. These contributions form the top row of $+$'s starting under the column $s$. Next $\dim\TheW{n,r;y_0}$ contributes also $\dim\TheS{n,j}$ for all $j$, now starting with $j=r$. But now the contribution of $\dim\TheW{n,r;y_0}$ is weighted by $(-1)$ and $-$'s are drawn under all positions starting at $j=r$. Once the contributions of all $\TheW{}$'s are drawn, the sum $\sum_{a\geq 0}$ is now done column by column, and the result is shown in the bottom line. The results are shown by $+$'s and $-$'s, or $\begin{tikzpicture}[scale=0.45] \draw (0,0) circle (3pt);\end{tikzpicture}$ if the various contributions cancel. The first non-zero contribution is $\dim\TheS{n,s}$ and all terms in the class $[s]$ after $s$ appear with alternating sign; the sum of all these is $\dim\TheI{n,s}$. The same result holds for the contributions larger and equal to $s+2$ in the class $[s+2]$. The telescopic sum is thus
$$\mysum{j=s}{r-2}\dim\TheI{n,j}$$
as claimed ($r=\min(i_0,j_0)$ in the chosen order).

\begin{equation*}
\begin{tikzpicture}[baseline={(current bounding box.center)},scale=0.45]
	\foreach \s in {0,...,31}{\filldraw[black] (\s,0) circle (3pt);};
	\draw (1,0) circle (6pt);
	\draw (2,0) circle (6pt);
	\draw (13,0) circle (6pt);
	\draw (14,0) circle (6pt);
	\draw (25,0) circle (6pt);
	\draw (26,0) circle (6pt);
	\foreach \s in {0,...,5}{\draw[thick] (6*\s+1,0.5) -- (6*\s+1,-0.5) ;};	
	\node at (-1,0) {$\dots$};
	\node at (32,0) {$\dots$};
	\node at (1,1) {$\scriptstyle{s}$};
	\node at (2,1) {$\scriptstyle{r}$};
	\node at (13,1) {$\scriptstyle{s+2\ell}$};
	\node at (14,1) {$\scriptstyle{r^2}$};
	\node at (25,1) {$\scriptstyle{s+4\ell}$};
	\node at (26,1) {$\scriptstyle{r^4}$};
	\draw (0,-1) -- (30.5,-1);
	\foreach \s in {0,...,5}{\draw[thick] (6*\s+1,-1.3) -- (6*\s+1,-1.6) ;};	
	\foreach \s in {0,...,5}{\draw[thick] (6*\s+1,-2.4) -- (6*\s+1,-2.7) ;};	
	\node at (-1,-2) {$\dots$};
	\node at (32,-2) {$\dots$};
	\draw (0,-2) circle (3pt);
	\foreach \s in {0,...,2}{\node at (1+12*\s,-2) {$+$};};
	\foreach \s in {0,...,2}{\draw  (7+12*\s,-2) circle (3pt);};
	\foreach \s in {2,...,6}{\draw (\s,-2) circle (3pt);};
	\foreach \s in {8,...,12}{\draw (\s,-2) circle (3pt);};
	\foreach \s in {14,...,18}{\draw (\s,-2) circle (3pt);};
	\foreach \s in {20,...,24}{\draw (\s,-2) circle (3pt);};
	\foreach \s in {26,...,30}{\draw (\s,-2) circle (3pt);};
\end{tikzpicture}
\end{equation*}
The second example is one with coincidences. Take $s$ critical and $r$ non-critical. If $r$ is equal to $s+2$, then its reflexion through the critical line on its right is $s+2\ell-2$ and thus $i_0=r^++2=s+2\ell$, and the two families $(k_a,u_a)$ and $(i_a,x_a)$ coincide, except for $(k_0,u_0)$ that does not belong to $\{(i_a,x_a), a\geq 0\}$. The families $(j_a,y_a)$ and $(h_a,v_a)$ also coincide and the weakest partial order is
$$(k_0,u_0)\longleftarrow (j_0,y_0) \longleftarrow(k_1,u_1) \longleftarrow (j_1,y_1) \longleftarrow \dots$$
Consequently the telescopic sum takes the simpler form
\begin{equation*}\label{eq:dimLnkz}\dim \TheL{n,k;z}=\sum_{a\geq 0}\left(\dim\TheW{n,k_a;u_a}
-\dim\TheW{n,j_a;y_a}\right).\end{equation*}
The figure above depicts the new situation. The lines of $+$'s and $-$'s are omitted as the cancellations are simpler. The rows of $+$'s start at $s, s+2\ell, s+4\ell, \dots$, and they are almost completely canceled by rows of $-$'s starting at $r,r^2, r^4, \dots$: only the $+$'s immediately under the elements $s, s+2\ell, s+4\ell, \dots$, survive, as indicated. The telescopic sum thus becomes in this case
$$\dim\TheL{n,s;z}=\sum_{b\geq 0}\dim\TheS{n,s+2b\ell}$$
as claimed. All other cases are treated similarly. 
\end{proof}


\section*{Acknowledgments}

YSA is supported by a Discovery Grant of the National Sciences and Engineering Research Council of Canada. JB is supported by a scholarship from Fonds de recherche Nature et Technologies (Qu\'ebec), and by the European Research
Council (advanced grant NuQFT). JB would also like to thank Azat Gainutdinov, Hubert Saleur, and Jesper Lykke Jacobsen for interesting discussions.

\raggedright

\end{document}